\newtheorem{question}[theorem]{Question}
\newtheorem{fact}[theorem]{Fact}
\numberwithin{equation}{section}
\numberwithin{figure}{section}
\newcommand{\ident}[1]{\ensuremath{\texorpdfstring{\mathrm{\mathnm{#1}}}{#1}}\xspace}
\newcommand{\domain}[1]{\ensuremath{\mathbb{#1}}\xspace}
\newcommand{\ot}{\leftarrow}
\newcommand{\qU}{\ensuremath{\mathsf U}\xspace}
\newcommand{\qG}{\rotatebox[origin=c]{180}{$\mathtt{G}$}\xspace}
\newcommand{\qI}{\mathtt{I}}
\newcommand{\qQ}{\mathtt{Q}}
\newcommand{\albet}{A\xspace}
\newcommand{\theGame}{\CG}
\newcommand{\Idet}{\mathsf{dt}}
\newcommand{\Indet}{\mathsf{nd}}
\newcommand{\sto}{{\mathrel{\mkern-1mu\scalebox{0.7}[1.0]{$\shortrightarrow$}\mkern-1mu}}}
\newcommand{\powerset}{\mathsf{\mathnm{P}}}
\newcommand{\N}{\domain{N}}
\newcommand{\eqdef}{\stackrel{\text{def}}=}
\renewcommand{\mod}[1]{\allowbreak\mkern10mu({\operator@font mod}\,\,#1)}
\newcommand{\rI}{\mathrm{I}}
\newcommand{\rII}{\mathrm{II}}
\newcommand{\mathcalsym}[1]{\ensuremath{\mathcal{#1}}\xspace}
\newcommand{\CA}{\mathcalsym{A}}
\newcommand{\CB}{\mathcalsym{B}}
\newcommand{\CC}{\mathcalsym{C}}
\newcommand{\CD}{\mathcalsym{D}}
\newcommand{\CG}{\mathcalsym{G}}
\newcommand{\CM}{\mathcalsym{M}}
\newcommand{\CP}{\mathcalsym{P}}
\newcommand{\CR}{\mathcalsym{R}}
\newcommand{\CW}{\mathcalsym{W}}
\newcommand{\from}{\colon}
\newcommand{\lang}{\ident{L}}
	\tikzstyle{ubrace} = [draw, thick, decoration={brace, amplitude=7pt, mirror, raise=0.0cm}, decorate,
	\tikzstyle{rbrace} = [draw, thick, decoration={brace, amplitude=7pt, mirror, raise=0.0cm}, decorate,
	\tikzstyle{obrace} = [draw, thick, decoration={brace, amplitude=7pt, raise=0.0cm}, decorate,
	\tikzstyle{lbrace} = [draw, thick, decoration={brace, amplitude=7pt, raise=0.0cm}, decorate,
\tikzstyle{bzFlow} = [] % [inner sep=0cm, node distance=0cm and 0cm, outer sep=0cm]
\tikzstyle{bzL} = [bzFlow, anchor=mid west, align=left]
\tikzstyle{bzC} = [bzFlow, anchor=mid     , align=center]
\tikzstyle{bzR} = [bzFlow, anchor=mid east, align=right]
\newcommand{\dL}{\mathtt{{\scriptstyle L}}}
\newcommand{\dR}{\mathtt{{\scriptstyle R}}}
\newcommand{\mathnm}[1]{#1}
\newcommand{\tran}[1]{\xrightarrow{#1}}
\newcommand{\restr}{{\upharpoonright}}
\tikzstyle{dot} = [draw,shape=circle,fill, minimum size=1mm, inner sep=0pt, outer sep=0pt]
\tikzstyle{edge}  = [draw, thick,->]
\newcommand{\init}{\ident{I}}
\DeclareSymbolFont{yhlargesymbols}{OMX}{yhex}{m}{n}
\DeclareMathAccent{\yhwidehat}{\mathord}{yhlargesymbols}{"62}
\newcommand{\trees}{\ident{Tr}}
\newcommand{\fcirc}{\mathbin{\vcenter{\hbox{\scalebox{0.8}{$\bullet$}}}}}
\newcommand{\word}[1]{\bar{#1}}
\newcommand{\tree}[1]{\tilde{#1}}
\newcommand{\funapp}[2]{#1\fcirc #2}
\newcommand{\Peps}[1]{#1_{+\epsilon}}
\newcommand{\Wfin}[1]{{#1}^{\mathrm{fin}}}
\newcommand{\Winf}[1]{{#1}^{\mathrm{inf}}}
\newcommand{\Wilk}[1]{{#1}^{\mathsf{W}}}
\newcommand{\anIndex}{\CR\xspace}
\renewcommand\phi\varphi
\renewcommand\epsilon\varepsilon
\newcommand{\ignore}[1]{}
\DeclarePairedDelimiter{\set}{\{}{\}}
\newcommand\ms[1]{\todo[inline, size=\scriptsize,backgroundcolor=Turquoise]{#1 - \textbf{Michał}}}
\Crefname{lemma}{Lemma}{Lemmata}
\Crefname{fact}{Fact}{Facts}
\Crefname{claim}{Claim}{Claims}
\Crefname{equation}{Formula}{Formulae}
\Crefname{property}{Property}{Properies}\creflabelformat{property}{(#2#1#3)}
\newcommand{\tto}{\twoheadrightarrow}
\title{Generalised Quantifiers Based on Rabin-Mostowski Index} %TODO Please add
\author{Denis Kuperberg}{CNRS, LIP, Plume, ENS Lyon, France}{denis.kuperberg@ens-lyon.fr}{https://orcid.org/0000-0001-5406-717X}{ANR ReCiProg}
\author{Damian Niwiński}{Institute of Informatics, University of Warsaw, Poland}{niwinski@mimuw.edu.pl}{https://orcid.org/0000-0002-1342-9805}{National Science Centre, Poland (grant no.\@ 2024/55/B/ST6/00318)}
\author{Paweł Parys}{Institute of Informatics, University of Warsaw, Poland}{parys@mimuw.edu.pl}{https://orcid.org/0000-0001-7247-1408}{National Science Centre, Poland (grant no.\@ 2024/55/B/ST6/00318)}
\author{Michał Skrzypczak}{Institute of Informatics, University of Warsaw, Poland}{mskrzypczak@mimuw.edu.pl}{https://orcid.org/0000-0002-9647-4993}{National Science Centre, Poland (grant no.\@ 2024/55/B/ST6/00318)}
\authorrunning{D. Kuperberg, D. Niwiński, P. Parys, and M. Skrzypczak} %TODO mandatory. First: Use abbreviated first/middle names. Second (only in severe cases): Use first author plus 'et al.'
\keywords{monadic quantifiers, decidability, quantifier elimination, parity automata, game quantifier, Rabin-Mostowski index} %TODO mandatory; please add comma-separated list of keywords
\newcommand{\topn}{\mathit{top}}
\newcommand{\len}{\mathit{len}}
\newcommand{\lk}{\mathsf{lk}}
\newcommand{\add}{\mathsf{add}}
\begin{document}

\maketitle

\begin{abstract}
	In this work we introduce new generalised quantifiers which allow us to express the Rabin\=/Mostowski index of automata.
	Our main results study expressive power and decidability of the monadic second\=/order (MSO) logic extended with these quantifiers.
	We study these problems in the realm of both $\omega$\=/words and infinite trees.
	As it turns out, the pictures in these two cases are very different.
	In the case of $\omega$\=/words the new quantifiers can be effectively expressed in pure MSO logic.
	In contrast, in the case of infinite trees, addition of these quantifiers leads to an~undecidable formalism.

	To realise index\=/quantifier elimination, we consider the extension of MSO by game quantifiers. As a~tool, we provide a~specific quantifier\=/elimination procedure for them. Moreover, we introduce a~novel construction of~transducers realising strategies in $\omega$\=/regular games with monadic parameters.
\end{abstract}
\newpage

\section{Introduction}

Monadic second\=/order logic (MSO) considered over $\omega$\=/words or infinite trees sets a~golden standard in the theory of verification as a~robust, expressive, yet still decidable formalism. The research surrounding this logic
%has focused mostly on two aspects.
often takes two paths.

% The first
One
%is to study
focuses on
properties of the MSO\=/definable languages of $\omega$\=/words or trees, with an emphasis on decidability issues,
aiming in effective characterisations.
%leading to the term ``effective characterisation''.
% The second
Another path, maybe
%aspect, generally
more challenging, attempts to extend the expressive power of MSO while still maintaining
% its
decidability.
These two paths often interplay,
an~archetypal example being
% of the interplay between these aspects was motivated by the question of
the study of
cardinality.
% of languages.
First, Niwiński~\cite{niwinski_cardinality} showed that the cardinality of a regular language of infinite trees can be effectively computed. Then, B\'ar\'any, Kaiser, and Rabinovich~\cite{barany_expressing_trees} (see also~\cite{kaiser_automatic})
% gave attention to the \emph{cardinality quantifiers} in
studied an extension of the
MSO logic (over the binary tree) by \emph{cardinality quantifiers}, like $\exists^{\geq \kappa} X.\,\varphi(\vec{W}, X)$, stating that there are at least $\kappa$ distinct sets $X$ satisfying $\varphi(\vec{W}, X)$.
The extension turned out to admit an~elimination procedure for cardinality quantifiers:
the authors effectively translated MSO with cardinality quantifiers into pure MSO,
rendering the considered formalism decidable~\cite{barany_expressing_trees}.
%Their main result states
%that MSO with cardinality quantifiers can be into pure MSO,
%In this light, the results of~\cite{niwinski_cardinality} show decidability of formulae with a~single outermost cardinality quantifier and pure MSO inside.

%Another example of such two\=/pronged approach is
%An example of a
In contrast,
the \emph{unboundedness} quantifier $\qU X.\, \varphi(\vec{W}, X)$
% has been studied
% first studied
introduced
by Bojańczyk~\cite{bojanczyk_bounding},
stating that the formula $\varphi(\vec{W}, X)$ is satisfied
by finite sets $X$ of unbounded size,
leads to a proper extension of MSO. After exhaustive investigation it was shown that MSO+$\qU$ is undecidable even over $\omega$\=/words~\cite{bojanczyk_msou_final}. However, the unboundedness property of a~given regular language is easily decidable (due to an~application of the pumping lemma);
a~related property called diagonality was shown to be decidable
even for tree languages on all levels of the Caucal hierarchy~\cite{lorenzo-diagonal}.
%in the Caucal hierarchy
%in fact even a stronger property, called diagonal \TODO was shown to be decidable over arbitrary structures from the the Caucal hierarchy~CITE.
%\dn{How about it?}

The results of~Niwiński, Parys, and Skrzypczak~\cite{dichotomy-arxive}
fall into a similar category:
the authors show that the ranks of
MSO-definable well-founded relations
%a language or a relation
satisfy a certain dichotomy and can be effectively bounded,
although the rank itself is not
%while not being
directly expressible in MSO.
%(btw: is MSO+rank decidable?).

A general pattern behind these situations consists of
% decidability questions on
several levels.
On the basic level, we wish
to decide if a language of $\omega$\=/words or trees satisfies a specific property, usually related to some {\em difficulty\/}: uncountability, unboundedness, ordinal rank $\omega_1$, etc. Then we ask if the property can be generalised to a type of quantifier, and whether the extension of MSO is proper, and eventually decidable.
%Note that the frontier between decidability and undecidability can pass in various places.

The study in the present paper is motivated by
the Rabin-Mostowski index problem, which is a pertinent open problem in automata theory. In terms of parity automata (see below), the question is to find an equivalent automaton of a given type (deterministic, non\=/deterministic, or alternating) with a minimal number of priorities. For technical reasons, we also take into account the minimal priority, so that an {\em index\/} is defined as a pair $(i,j)$ (where $i$ can be assumed to be $0$ or $1$). Recall that the index hierarchy over $\omega$\=/words is strict only for deterministic automata, and collapses to the second level for non\=/deterministic and alternating ones. For infinite trees, both non\=/deterministic and alternating hierarchies are strict~\cite{bradfield_simplifying,niwinski_nondet_strict}; the deterministic hierarchy is strict as well, but less interesting because deterministic tree automata do not capture all regular tree languages. The problem of~computing the index is generally decidable for automata over $\omega$\=/words~\cite{wagner_hierarchy}, and open for automata over infinite trees. Several special cases have been shown decidable, in particular if an~input tree automaton is a~deterministic automaton~\cite{niwinski_gap,niwinski_deterministic}; a~\emph{game automaton}~\cite{murlak_game_auto}; or a~B\"uchi automaton~\cite{colcombet_weak,walukiewicz_buchi}. Colcombet and L\"oding~\cite{loding_index_to_bounds} reduced the non\=/deterministic index problem to a~question on asymptotic behaviour of counter automata; their paper brought a~bunch of interesting ideas (in particular, guidability), but the original problem has remained unsolved.

In the current paper, we approach the index problem ``from above'', that is, we introduce a~class of quantifiers corresponding to the index property. Using the correspondence between sets (or tuples thereof) and their characteristic functions (i.e.,~labelled infinite words or trees), a~general form of the new quantifier is
%\begin{eqnarray}\label{kwantyfikator}
\[ \qI^{D}_\anIndex X.\,\varphi(W_1,\ldots,W_k,X)\]
%\end{eqnarray}
where $D$ refers to the type of involved automata (deterministic or non\=/deterministic), and $\anIndex$ determines the index.
Such a~formula holds for a~valuation $\word{w}_1,\ldots,\word{w}_k$ if there exists an~automaton~$\CA$ of type $D$ and index $\anIndex$,
such that for every $\word{x}$ the formula $\varphi(\word{w}_1,\ldots,\word{w}_k,\word{x})$ holds if and only if $\CA$~accepts $\langle\word{w}_1,\ldots,\word{w}_k,\word{x}\rangle$. Note that in the above only $\word{x}$ varies while the $\word{w}_i $'s remain fixed, playing the role of parameters.

\ignore{Our main results are twofold: we show that MSO+$\qI$ effectively reduces to pure MSO over $\omega$\=/words; while MSO+$\qI$
\ignore{(even MSO+$\qI_{\mathrm{safety}}$)} is undecidable over infinite trees. This last theorem is up to our knowledge the first negative result about decidability of index\=/related problems over infinite trees.
In fact, we show undecidability already for the quantifier $\qI_{\mathrm{safety}}$, referring to automata that simply avoid some rejecting states. This is in contrast with the fact that the question whether a regular tree language can be recognised by a safety automaton can be easily seen to be decidable, as it corresponds to the closedness in the standard topology over infinite trees (see, e.g., \cite{loding_index_to_bounds}).
\ignore{Thus, the result may be interpreted as opening the possibility that the index problem itself will turn out to be undecidable.}}

Our main results are twofold. First, we show that MSO+$\qI$ effectively reduces to pure MSO over $\omega$-words. Second, we prove that MSO+$\qI$ is undecidable over infinite trees. To the best of our knowledge, this is the first negative decidability result for index\=/related problems over infinite trees.
In fact, we establish undecidability already for the quantifier $\qI_{\mathrm{safety}}$, which refers to automata that merely avoid some designated rejecting states. This stands in sharp contrast to the fact that deciding whether a regular tree language can be recognised by a safety automaton is straightforward, as it amounts to checking closedness in the standard topology on infinite trees (see, e.g.,~\cite{loding_index_to_bounds,loding_hab}).

To achieve the positive part of our results, namely index\=/quantifier elimination over $\omega$\=/words, we rely on a~variant of Wadge games for the index hierarchy~\cite{loding_wadge_dpda,wadge_phd}. These games can naturally be expressed in MSO equipped with \emph{game quantifier} $\qG$ (see, e.g.,~the monograph by Moschovakis~\cite{moschovakis_inductive}). The fact that MSO+$\qG$ reduces to pure MSO follows from Kaiser~\cite{kaiser_game_quant} (we provide a~direct proof adapted to our setup for the sake of completeness); nevertheless, we need a~stronger property, allowing us to construct finite memory strategies (relating B\"uchi\=/Landweber construction~\cite{buchi_synthesis} with uniformisations~\cite{lifsches_skolem,rabinovich_decidable}). This falls in similar lines as results by Winter and Zimmermann~\cite{winter-delay-games} and others on sequential uniformisation and functions realised by transducers. To achieve our goal, we show a~novel fact, which can be seen as a~parametrised version of B\"uchi\=/Landweber construction (for the case when the variables are in some sense separated). We believe that both game quantifiers in general, and this new fact are of independent interest and applications.

One can ask if the new quantifiers of our paper align with the concept of \emph{generalised quantifiers} introduced by Mostowski~\cite{mostowski-quantifiers} (see~\cite{sep-generalized-quantifiers} for a survey).
The idea there is that a~formula $\qQ x.\, \varphi (\vec{w}, x)$ expresses the fact that the $x$'s satisfying $\varphi (\vec{w}, x)$
(for fixed parameters~$\vec{w}$) fall into a~specified family of subsets of the universe (e.g.,~all non\=/empty sets for $\exists$,
and the singleton of the whole universe for $\forall$). More generally, a~quantifier can bind $k$ variables
($\qQ x_1 \ldots x_k. \, \varphi (\vec{w}, \vec{x})$) and relate to a family of $k$\=/ary relations. These concepts can be adapted to MSO, where in the semantics of a~quantifier $\qQ X$ (or $\qQ \vec{X} $), the universe is replaced by its powerset. The examples mentioned above, namely cardinality quantifiers and the unboundedness quantifier, can be easily presented in this way. The newly introduced index quantifiers and game quantifiers can as well be presented as generalised quantifiers%, but in a more subtle way, using relations of higher arities (over the powerset)
. For an~interested reader, we discuss this issue in more detail in \cref{sec:generalised}.

\section{Preliminaries}
\label{sec:preliminaries}

An~alphabet $\albet$ is a~finite non\=/empty set of symbols.
As~usual, by $\albet^\ast$ we denote the set of finite words over $\albet$, by $\albet^{+}$ the set of non\=/empty finite words over $\albet$,
and by $\albet^\omega$ the set of $\omega$\=/words over $\albet$, that is,
functions from $\N=\{0,1,2,\ldots\}$ to $\albet$. The empty word is denoted $\epsilon\in\albet^\ast$ and concatenation of two words $u$, $v$ is denoted by $u\cdot v$.
Given $n\in\N$ and either a~finite word with at least $n$ symbols or an~$\omega$\=/word $\word{w}=w_0w_1w_2\cdots$ by $\word{w}\restr_n$ we denote the finite word $w_0w_1\cdots w_{n-1}$, that is,
$\word{w}$ restricted to the first $n$ symbols.
An~$\omega$\=/word of the form $x\cdot y\cdot y\cdot y\cdot\ldots$ for some finite words $x,y\in\albet^{+}$ is called \emph{ultimately periodic}.
The prefix order on words is denoted by ${\preceq}$, with $\word{w}\preceq \word{w}'$ if there exists $n\in\N$ such that $\word{w}=\word{w}'\restr_n$.

A~(full, infinite, binary) tree over an~alphabet $\albet$ is any function $t\from \{\dL,\dR\}^\ast\to\albet$;
here a~word in $\set{\dL,\dR}^\ast$ describes a~path from the root $\epsilon$ to a~node $x\in\{\dL,\dR\}^\ast$, with $\dL$ being the left child and $\dR$ the right child. The \emph{label} of such a~node is $t(x)\in\albet$. The set of all such trees is denoted $\trees_\albet$.

We use the standard terms to navigate within a~tree, in particular $x$ is a~\emph{descendant} of $y$ if $x\succeq y$. In an~analogous way we use the terms \emph{ascendant}, \emph{parent}, and \emph{sibling}.

Subsets $L\subseteq\albet^\ast$, $L\subseteq\albet^\omega$, or $L\subseteq \trees_\albet$ are called \emph{languages}.

\subparagraph*{Transducers.}

In this work we use (sequential, deterministic, finite\=/memory) transducers from one alphabet to another.
Assume that $\albet_W$, $\albet_Y$ are some alphabets.
A~transducer $\tau$ from $\albet_W$ to $\albet_Y$ (denoted $\tau\from \albet_W \tto \albet_Y$) is a~tuple $\tau=\langle \albet_W, \albet_Y, Q_\tau, \iota_\tau, \delta_\tau\rangle$, where:
\begin{itemize}
\item $Q_\tau$ is a~finite set of \emph{states},
\item $\iota_\tau\in Q_\tau$ is the \emph{initial state},
\item $\delta_\tau\from Q_\tau\times\albet_W\to \albet_Y\times Q_\tau$ is the \emph{transition function}.
\end{itemize}
Given an~\emph{input $\omega$\=/word} $\word{w}=w_0w_1w_2\cdots\in(\albet_W)^\omega$ we inductively define the \emph{run} $\word{\rho}\eqdef\rho_0\rho_1\rho_2\cdots\in Q^\omega$
and the \emph{output $\omega$\=/word} $\tau(\word{w})\eqdef y_0y_1y_2\cdots\in(\albet_Y)^\omega$ taking $\rho_0\eqdef \iota_\tau$ and $(y_n,\rho_{n+1}) \eqdef \delta_\tau(\rho_n, w_n)$ for all $n\in\N$.

Given two transducers $\tau\from \albet_W\tto \albet_Y$ and $\tau'\from \albet_Y\tto \albet_Z$ it is easy to construct the \emph{composition} of the two,
namely a~transducer $\theta\from \albet_W\tto \albet_Z$ such that for every $\word{w}\in(\albet_W)^\omega$ we have $\theta(\word{w})=\tau'\big(\tau(\word{w})\big)$.

\subparagraph*{Parity indices.}

Assume that $i,j\in\N$ are natural numbers with $i\leq j$. The \emph{(strong) parity index} $P_{i,j}$ and the \emph{weak parity index} $W_{i,j}$ are defined by the languages
\begin{align*}
P_{i,j}&\eqdef \{k_0k_1k_2\ldots \in \{i,i{+}1,\ldots,j\}^\omega\mid \limsup_{n\to\infty} k_n \equiv 0\ \mathrm{mod}\ 2\},\displaybreak[0]\\
W_{i,j}&\eqdef \{k_0k_1k_2\ldots \in \{i,i{+}1,\ldots,j\}^\omega\mid \sup_{n\in\N} k_n \equiv 0\ \mathrm{mod}\ 2\}.
\end{align*}

An~\emph{index} is a~pair $\anIndex=\langle\albet_\anIndex,L_\anIndex\rangle$ that is either $\CP_{i,j}=\langle\set{i,i{+}1,\ldots,j},P_{i,j}\rangle$
or $\CW_{i,j}=\langle\set{i,i{+}1,\ldots,j},W_{i,j}\rangle$ for some $i,j\in\N$ with $i\leq j$.

The typical names for indices are: \emph{B\"uchi} for $\CP_{1,2}$ (infinitely many times priority~$2$), \emph{co\=/B\"uchi} for $\CP_{0,1}$ (finitely many times priority~$1$), \emph{safety} for $\CW_{0,1}$ (reaching priority~$1$ implies that we reject), and \emph{reachability} for $\CW_{1,2}$ (reaching priority~$2$ implies that we accept).

\subparagraph*{Automata over $\omega$\=/words.}

A \emph{non\=/deterministic parity $\omega$\=/word automaton} over an~alphabet~$\albet$ and of~index~$\anIndex=\langle\albet_\anIndex,L_\anIndex\rangle$ is a~tuple $\CD=\langle \albet, \anIndex, Q_\CD, \iota_\CD, \delta_\CD\rangle$, where:
\begin{itemize}
\item $Q_\CD$ is a~finite set of \emph{states},
\item $\iota_\CD\subseteq Q_\CD$ is the set of \emph{initial states},
\item $\delta_\CD\subseteq Q_\CD\times\albet\times \albet_\anIndex\times Q_\CD$ is the \emph{transition relation},
\end{itemize}
and moreover the automaton is \emph{complete}\footnote{This technical assumption plays a~role when considering weak indices of automata.}
in the sense that for every $q\in Q_\CD$ and $a\in\albet$ there is at least one transition of the form $(q,a,k,q')\in\delta_\CD$.

A~\emph{run} of an~automaton $\CD$ over an~\emph{input $\omega$\=/word} $\word{w}=w_0w_1w_2\cdots\in\albet^\omega$ producing \emph{output $\omega$\=/word} $\word{k}=k_0k_1k_2\cdots\in (\albet_\anIndex)^\omega$
is a~sequence of states $\word{\rho}=\rho_0\rho_1\rho_2\cdots\in (Q_\CD)^\omega$ such that $\rho_0\in\iota_\CD$ and for every $n\in\N$ we have $(\rho_n,w_n,k_n,\rho_{n+1})\in\delta_\CD$.
The $\omega$\=/word $\word{w}$ is \emph{accepted} by $\CD$ if there exists a~run of $\CD$ over $\word{w}$ producing an $\omega$\=/word $\word{k}$ that belongs to $L_\anIndex$.
%Such a~run is \emph{accepting} if the output $\omega$\=/word $k_0k_1k_2\cdots$ belongs to $L_\anIndex$.

The \emph{language} of such an~automaton, denoted $\lang(\CD)\subseteq\albet^\omega$, is the set of $\omega$\=/words $\word{w}\in\albet^\omega$ that are accepted by $\CD$. %such that there exists an~accepting run of $\CD$ over $\word{w}$.
A~language $L\subseteq \albet^\omega$ is \emph{$\omega$\=/regular} if it is the language of some automaton.

An~automaton is \emph{deterministic} if $\iota_\CD$ is a~singleton and the transition relation $\delta_\CD$ is in fact a~function $\delta_\CD\from Q_\CD\times\albet\to\albet_\anIndex\times Q_\CD$, in which case there is a~unique run of $\CD$ over every input $\omega$\=/word $\word{w}\in \albet^\omega$.

\begin{remark}
\label{rem:aut-vs-trans}
If the index $\anIndex=\langle\albet_\anIndex,L_\anIndex\rangle$ is fixed, then deterministic automata $\CD$ over $\albet$ and of index $\anIndex$ are in natural bijection with transducers $\tau\from \albet\tto \albet_\anIndex$ in such a way that $\lang(\CD)=\{\word{w}\in\albet^\omega\mid \tau(\word{w})\in L_\anIndex\}$.
\end{remark}

\subparagraph*{Ramsey theorem.}

Let $C$ be a~finite set of \emph{colours}.
An~\emph{edge labelling} of a~set $X$ is a~function that to each edge $\set{i, j}\subseteq X$ (where $i\neq j$) assigns a~colour from $C$.
Given an~edge labelling, we say that a~set $I\subseteq X$ is \emph{monochromatic} if all edges $\set{i, j}\subseteq I$ have the same colour.

\begin{theorem}[Ramsey]\label{thm:ramsey}
	Let $C$ be a finite set and let $k\in\N$.
	Then, there exists a~computable constant $r\in\N$ such that for every edge labelling of $\set{0,1,\dots,r-1}$ by colours from $C$ there exists a~monochromatic set $I\subseteq\set{0,1,\dots,r-1}$ of size $k$.

	Moreover, for every edge labelling of $\N$ by colours from $C$ there exists an~infinite monochromatic set $I\subseteq\N$.
\end{theorem}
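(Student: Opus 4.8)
The statement is Ramsey's theorem for edges (that is, for $2$\=/element subsets) with a finite set of colours, and I would prove it in the classical way, establishing the infinite part first and then extracting an explicit bound for the finite part. For the infinite case, fix an edge labelling of $\N$ with colours from $C$ and write $c\eqdef|C|$. The plan is to build by induction on $t\in\N$ a decreasing chain of infinite sets $\N=X_0\supseteq X_1\supseteq\ldots$, elements $n_0<n_1<\ldots$, and colours $c_0,c_1,\ldots\in C$ such that $n_t=\min X_t$, $X_{t+1}\subseteq X_t\setminus\set{n_t}$, and every edge $\set{n_t,m}$ with $m\in X_{t+1}$ has colour $c_t$. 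The inductive step is pure pigeonhole: $n_t$ forms an edge with each of the infinitely many elements of $X_t\setminus\set{n_t}$, and since there are only $c$ colours some colour class is infinite; take that class to be $X_{t+1}$ and its colour to be $c_t$. Then for $s<t$ we have $n_t\in X_{s+1}$, so $\set{n_s,n_t}$ has colour $c_s$: within $\set{n_t : t\in\N}$ the colour of an edge depends only on the index of its smaller endpoint. A second application of the pigeonhole principle to the sequence $(c_t)_t$ produces a colour $c^\star$ occurring as $c_t$ for infinitely many $t$, and then $I\eqdef\set{n_t : c_t=c^\star}$ is an infinite monochromatic set.

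For the finite case I would rerun the same greedy extraction while tracking cardinalities. Put $m\eqdef c(k-1)+2$ and $r\eqdef(c+1)^m$; note that $r$ is an explicit, hence computable, function of $c=|C|$ and $k$. Given an edge labelling of $\set{0,\ldots,r-1}$, the plan is to produce greedily $v_1<v_2<\ldots<v_m$ together with colours $d_1,\ldots,d_{m-1}\in C$ such that each edge $\set{v_i,v_j}$ with $i<j$ has colour $d_i$: at stage $i$ one lets $v_i$ be the least remaining element and keeps, as the new working set, the largest colour class among the edges from $v_i$ to the rest, recording that colour as $d_i$. If the working set has size $N$ before a stage, it has size at least $\lceil(N-1)/c\rceil$ afterwards, so a routine induction shows that starting from at least $(c+1)^m=r$ elements the process survives for $m$ stages. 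Finally, among the $m-1=c(k-1)+1$ recorded colours some colour $d^\star$ occurs at least $k$ times, say $d_{i_1}=\ldots=d_{i_k}=d^\star$ with $i_1<\ldots<i_k$; then $\set{v_{i_1},\ldots,v_{i_k}}$ has size $k$ and is monochromatic of colour $d^\star$, because for $a<b$ the edge $\set{v_{i_a},v_{i_b}}$ has colour $d_{i_a}=d^\star$.

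There is no genuine obstacle here: this is the textbook Ramsey theorem for pairs. The only point requiring care is the finite part, where one must carry the cardinality bookkeeping through the greedy extraction in order to exhibit $r$ as a closed\=/form expression in $|C|$ and $k$, which certifies computability. Alternatively, one could deduce the existence of a finite $r$ from the infinite statement by K\"onig's lemma applied to the tree of ``bad'' colourings, after which computability of the least valid $r$ is automatic by exhaustive search; but the counting argument above is shorter and yields a concrete bound directly.
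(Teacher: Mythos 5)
Your proof is correct: this is the classical Ramsey theorem for pairs, which the paper simply states as a known result without giving any proof, so there is nothing in the paper to compare against. Both halves of your argument are the standard ones (greedy extraction plus pigeonhole), and the bookkeeping for the explicit bound checks out --- starting from $(c+1)^m$ elements each stage leaves at least $\lceil(N-1)/c\rceil\geq(c+1)^{j-1}$ of the previous $(c+1)^{j}$, and among the $m-1=c(k-1)+1$ recorded colours one indeed repeats $k$ times --- so $r=(c+1)^{c(k-1)+2}$ is a valid computable witness.
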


\subparagraph*{Semigroups and monoids.}

An~algebraic structure $\langle S, ({\cdot})\rangle$ with an~associative binary operation $({\cdot})$ is called a~\emph{semigroup}. A~\emph{monoid} is a~semigroup $S$ which contains a~\emph{neutral element} $\epsilon\in S$ such that $\epsilon \cdot s = s\cdot \epsilon = s$ for every $s\in S$. Every semigroup $S$ can be extended into a~monoid $\Peps{S}=S\cup\{\epsilon\}$ by adding a~formal neutral element $\epsilon$ with product defined appropriately. An~\emph{idempotent} is an~element $e\in S$ such that $e\cdot e = e$.

The following fact is a~standard application of Ramsey theorem (cf.\@ \cref{thm:ramsey}).

\begin{fact}\label{thm:ramsey-semi}
	For every finite semigroup $S$ there exists a~computable constant $r\in\N$ such that for every word $s_0s_1\cdots s_{r-1}\in S^r$ there exists a~pair of positions $0\leq i< j<r$
	such that $e\eqdef s_{i+1}\cdot s_{i+2}\cdot \ldots \cdot s_j$ is an~idempotent.

	In particular, putting $c\eqdef s_0\cdot s_1\cdot \ldots \cdot s_j$ we have
	\[c\cdot e = s_0\cdot s_1\cdot \ldots \cdot s_{i-1}\cdot e \cdot e = s_0\cdot s_1\cdot \ldots \cdot s_{i-1}\cdot e = c.\]
\end{fact}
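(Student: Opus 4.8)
The plan is to apply \cref{thm:ramsey} directly, taking the set of colours to be the semigroup $S$ itself and looking for a monochromatic triangle.

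First I would fix the colour set $C\eqdef S$ and let $r\in\N$ be the computable constant provided by \cref{thm:ramsey} for this $C$ and for $k\eqdef 3$; this $r$ is the claimed constant. Given an arbitrary word $s_0s_1\cdots s_{r-1}\in S^r$, I define an edge labelling of $\set{0,1,\dots,r-1}$ by assigning to each edge $\set{p,q}$ with $p<q$ the block product $s_{p+1}\cdot s_{p+2}\cdot\ldots\cdot s_q\in S$. Note that this product is non-empty, since $q\ge p+1$, so it indeed lands in $S$ and no passage to the monoid extension is needed. By the choice of $r$ there is a monochromatic set $\set{p_0,p_1,p_2}$ with $p_0<p_1<p_2$; let $g\in S$ be its common colour. (Three positions are exactly what is needed: two would not force any equation.)

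The crucial point is that $g$ is an idempotent. Indeed, by associativity of the semigroup operation the colour of $\set{p_0,p_2}$ factors as the colour of $\set{p_0,p_1}$ times the colour of $\set{p_1,p_2}$, i.e.\ $g=g\cdot g$. It then remains only to read off the indices: put $i\eqdef p_0$ and $j\eqdef p_2$, so that $e=s_{i+1}\cdot s_{i+2}\cdot\ldots\cdot s_j$ is precisely $g$ and hence idempotent. For the ``in particular'' part one simply splits $c=s_0\cdot s_1\cdot\ldots\cdot s_j$ as $\big(s_0\cdot\ldots\cdot s_i\big)\cdot e$ and uses $e\cdot e=e$ to conclude $c\cdot e=c$.

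I do not expect any genuine obstacle in this argument: it is entirely routine once \cref{thm:ramsey} is available. The only point that needs a moment's attention is the index bookkeeping in the edge labelling — whether one colours $\set{p,q}$ by $s_{p+1}\cdots s_q$ or by $s_p\cdots s_{q-1}$ — since this choice fixes the exact shape of $i$, $j$, and of the telescoping computation of $c\cdot e$.
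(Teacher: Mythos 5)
Your proof is correct and is exactly the standard Ramsey argument the paper intends (the Fact is stated without proof, being called ``a standard application of Ramsey theorem''): colour each pair $\set{p,q}$ by its block product, extract a monochromatic triple via \cref{thm:ramsey} with $k=3$, and read off $g\cdot g=g$ from associativity. One small remark on bookkeeping: your split $c=(s_0\cdot\ldots\cdot s_i)\cdot e$ is the one consistent with the definition $e=s_{i+1}\cdot\ldots\cdot s_j$, whereas the paper's displayed chain beginning with $s_0\cdot\ldots\cdot s_{i-1}\cdot e\cdot e$ has an off-by-one slip; the conclusion $c\cdot e=c$ is unaffected either way.
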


\subparagraph*{Wilke algebras.}

In this work we use Wilke algebras as representations of $\omega$\=/semigroups, as in Perrin and Pin~\cite{perrin_pin_words}.
A~\emph{Wilke algebra} $S$ consists of two sets $(\Wfin{S},\Winf{S})$, two product operations
\begin{align*}
	\Wfin{S}\times \Wfin{S}\to \Wfin{S}
	&&\mbox{and}&&
	\Wfin{S}\times \Winf{S}\to \Winf{S}
\end{align*}
denoted $s\cdot s'$ for operands $s,s'$,
and an operation $\Wfin{S}\to \Winf{S}$ denoted $s^\omega$ for an~operand $s\in\Wfin{S}$. Moreover, the operations are required to satisfy natural associativity axioms, in particular $\Wfin{S}$ needs to be a~semigroup. Each finite Wilke algebra $S$ uniquely determines the \emph{infinite product} operation $\odot\from (\Wfin{S})^\omega\to\Winf{S}$, which is associative. In particular $\odot(sss\cdots) = s^\omega$ and $\odot(s_0s_1\cdots) = s_0\cdot \odot(s_1s_2\cdots)$.

A~\emph{homomorphism} $\alpha$ between two Wilke algebras $S$ and $T$ is a~pair of functions $\Wfin{\alpha}\from \Wfin{S}\to\Wfin{T}$ and $\Winf{\alpha}\from \Winf{S}\to\Winf{T}$ that commute with all the operations of the algebras and with the infinite product $\odot$.

\subparagraph*{Recognition.}

A~canonical example of a~Wilke algebra is $\Wilk{\albet}\eqdef \langle \albet^{+}, \albet^\omega\rangle$, where $\albet$ is an~alphabet. The operations of this Wilke algebra are the concatenation $\cdot$, the infinite repetition $v^\omega\eqdef v\cdot v\cdot v\cdot \ldots\in\albet^\omega$ for $v\in\albet^{+}$, and the infinite product $\odot(v_0v_1v_2\cdots)\eqdef v_0\cdot v_1\cdot v_2\cdot\ldots\in\albet^\omega$ for $v_0,v_1,v_2,\ldots\in\albet^{+}$.

Associativity properties imply that if $\alpha\from \Wilk{\albet}\to S$ is a~homomorphism into a~finite Wilke algebra then for every sequence of finite words $v_0,v_1,\ldots\in\albet^{+}$ we have
\begin{equation}
\label{eq:assoc}
\alpha(v_0\cdot v_1\cdot v_2\cdot \ldots) = \odot\big(\alpha(v_0)\alpha(v_1)\alpha(v_2)\cdots\big).
\end{equation}

Note that if a~Wilke algebra is finite then it can be represented as an~input to an~algorithm by providing its list of elements and ``multiplication tables'' for all the operations. The crucial fact about Wilke algebras is their ability to recognise $\omega$\=/regular languages, as stated by the following theorem.

\begin{theorem}[{\cite{wilke_algebraic}}]
\label{thm:recognition}
Given a~tuple of $\omega$\=/regular languages $(L_0,\ldots,L_{k-1})$ with $L_i\subseteq\albet^\omega$ for all $i<k$, one can effectively compute a~finite Wilke algebra $S$ together with a~homomorphism $\alpha\from \Wilk{\albet}\to S$ and a~tuple of sets $(F_0,\ldots,F_{k-1})$, where for every $i<k$ the set $F_i\subseteq\Winf{S}$ is such that $L_i=\alpha^{-1}(F_i)$. We say that $\alpha$ \emph{recognises} $(L_0,\ldots,L_{k-1})$ with $(F_0,\ldots,F_{k-1})$.

Moreover, one can require $\alpha$ to be \emph{onto} in the sense that $\alpha(\albet^{+})=\Wfin{S}$ and $\alpha(\albet^\omega)=\Winf{S}$.
\end{theorem}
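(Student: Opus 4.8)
The plan is to read the Wilke algebra and the homomorphism directly off the transition structure of a single deterministic automaton recognising all the $L_i$. By McNaughton's classical determinisation theorem each $L_i$ is recognised by a complete deterministic Muller automaton; combining these (say by disjoint union of state spaces) yields one complete deterministic automaton $\mathcal A$ over $\albet$ with state set $Q$, a distinguished state $\iota_i\in Q$ for each $i<k$, and Muller families $\mathcal F_0,\dots,\mathcal F_{k-1}\subseteq\powerset(Q)$ such that $\word w\in L_i$ iff $\mathrm{Inf}(\iota_i,\word w)\in\mathcal F_i$, where $\mathrm{Inf}(q,\word w)\subseteq Q$ denotes the set of states visited infinitely often by the unique run of $\mathcal A$ on $\word w$ started in $q$.

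Now build $S$. For $u\in\albet^{+}$ let $\mathrm{tp}(u)\from Q\to Q\times\powerset(Q)$ send $q$ to $(\widehat\delta(q,u),\mathrm{Occ}(q,u))$, where $\widehat\delta(q,u)$ is the state reached after reading $u$ from $q$ and $\mathrm{Occ}(q,u)$ the set of states visited on that run (endpoints included). Put $\Wfin S=\{\mathrm{tp}(u)\mid u\in\albet^{+}\}$ with the evident product $(\sigma\cdot\sigma')(q)=(q'',R\cup R')$ whenever $\sigma(q)=(q',R)$ and $\sigma'(q')=(q'',R')$; this is associative and satisfies $\mathrm{tp}(uv)=\mathrm{tp}(u)\cdot\mathrm{tp}(v)$. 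For $\word w\in\albet^\omega$ let $\mathrm{ip}(\word w)\from q\mapsto\mathrm{Inf}(q,\word w)$ and put $\Winf S=\{\mathrm{ip}(\word w)\mid\word w\in\albet^\omega\}$; declare the mixed product $\mathrm{tp}(u)\cdot\mathrm{ip}(\word w)$ to be $q\mapsto\mathrm{Inf}(\widehat\delta(q,u),\word w)$, which equals $\mathrm{ip}(u\word w)$ and hence depends only on the two profiles and lies in $\Winf S$; and declare $\mathrm{tp}(u)^\omega\eqdef\mathrm{ip}(u^\omega)$, well defined on $\Wfin S$ since the run of $u^\omega$ from any $q$ is eventually periodic with pre-period and period determined by $\mathrm{tp}(u)$, so $\mathrm{ip}(u^\omega)(q)$ is just the union of the occurrence sets along the resulting cycle. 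Both $\Wfin S$ and $\Winf S$ are finite and effectively computable (close the finitely many single-letter profiles under the three operations), and setting $\Wfin\alpha=\mathrm{tp}$, $\Winf\alpha=\mathrm{ip}$ gives $\alpha\from\Wilk{\albet}\to S$ which is onto because $\Wfin S,\Winf S$ were taken to be its images. The Wilke-algebra axioms and the compatibility of $\alpha$ with concatenation, with the mixed product, and with $({\cdot})^\omega$ are all instances of identities between the underlying words (e.g.\ $(uv)^\omega=u(vu)^\omega$, $(u^n)^\omega=u^\omega$) traced through $\mathcal A$. Finally put $F_i=\{t\in\Winf S\mid t(\iota_i)\in\mathcal F_i\}$; then $\word w\in L_i$ iff $\mathrm{Inf}(\iota_i,\word w)\in\mathcal F_i$ iff $\Winf\alpha(\word w)\in F_i$.

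The only step that is not a routine unfolding — and the main obstacle — is that $\alpha$ commutes with the infinite product: $\mathrm{ip}(v_0v_1v_2\cdots)=\odot\big(\mathrm{tp}(v_0)\mathrm{tp}(v_1)\cdots\big)$ for all $v_0,v_1,\dots\in\albet^{+}$. Colour each pair $0\le a<b$ by $\mathrm{tp}(v_av_{a+1}\cdots v_{b-1})\in\Wfin S$ and apply the infinite Ramsey theorem (\cref{thm:ramsey}): we obtain $a_0<a_1<a_2<\cdots$ and $e\in\Wfin S$ with $\mathrm{tp}(v_{a_p}\cdots v_{a_q-1})=e$ for all $p<q$, and $e$ is an idempotent. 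Set $B_0=v_0\cdots v_{a_0-1}$ (possibly empty, read in $\Peps{\Wfin S}$) and $B_p=v_{a_{p-1}}\cdots v_{a_p-1}$ for $p\ge1$, so $\mathrm{tp}(B_p)=e$ for $p\ge1$. Since $\odot$ is associative and uniquely determined by $S$, its value on $\mathrm{tp}(v_0)\mathrm{tp}(v_1)\cdots$ equals $\mathrm{tp}(B_0)\cdot e^\omega$. For the left-hand side, idempotency of $e$ forces the induced transformation $f$ of $Q$ to satisfy $f\circ f=f$ and the occurrence component $O$ to satisfy $O(f(q))\subseteq O(q)$; consequently the run of $B_1B_2B_3\cdots$ from any $q$ reaches $f(q)$ after $B_1$ and thereafter revisits exactly $O(f(q))$ during each block, whence $\mathrm{ip}(B_1B_2\cdots)(q)=O(f(q))=\mathrm{ip}\big((B_1)^\omega\big)(q)=e^\omega(q)$. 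Therefore $\mathrm{ip}(v_0v_1\cdots)=\mathrm{tp}(B_0)\cdot\mathrm{ip}(B_1B_2\cdots)=\mathrm{tp}(B_0)\cdot e^\omega$, matching the right-hand side and completing the argument.
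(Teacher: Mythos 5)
Your proof is correct. Note that the paper does not prove \cref{thm:recognition} at all --- it is imported as a black box from Wilke (via Perrin--Pin), so there is no in-paper argument to compare against. What you have reconstructed is essentially the standard proof: determinise, take as $\Wfin{S}$ the transition profiles $q\mapsto(\widehat\delta(q,u),\mathrm{Occ}(q,u))$ and as $\Winf{S}$ the maps $q\mapsto\mathrm{Inf}(q,\word{w})$, and verify compatibility with $\odot$ via a Ramsey factorisation into an idempotent $e$, using that idempotency forces $f\circ f=f$ and $O(f(q))\subseteq O(q)$ so that the tail of blocks contributes exactly $O(f(q))=e^\omega(q)$. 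All the individual verifications (well-definedness of $s^\omega$ on profiles, the mixed-product identity $\mathrm{tp}(u)\cdot\mathrm{ip}(\word{w})=\mathrm{ip}(u\word{w})$, the acceptance condition $F_i=\{t\mid t(\iota_i)\in\mathcal F_i\}$, and surjectivity) check out. Two small points you should make explicit rather than leave implicit: first, the claim that closing the single-letter profiles under the three operations already exhausts $\Winf{S}$ is not automatic from the definition $\Winf{S}=\{\mathrm{ip}(\word{w})\}$ --- it is exactly the consequence of your final Ramsey argument that every $\mathrm{ip}(\word{w})$ equals $\mathrm{tp}(B_0)\cdot e^\omega$, i.e.\ is the image of an ultimately periodic word, so the effectiveness claim depends on the last paragraph and should be stated after it; second, your computation $\odot\big(\mathrm{tp}(v_0)\mathrm{tp}(v_1)\cdots\big)=\mathrm{tp}(B_0)\cdot e^\omega$ silently invokes the fact (stated in the paper's preliminaries) that a finite Wilke algebra determines a unique associative $\odot$ satisfying $\odot(sss\cdots)=s^\omega$ and $\odot(s_0s_1\cdots)=s_0\cdot\odot(s_1s_2\cdots)$; you are entitled to use it, but it is itself the nontrivial half of Wilke's theorem, so a one-line acknowledgement is in order.
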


Let $\word{z}=z_0z_1z_2\cdots\in(\Peps{\Wfin{S}})^\omega$
(recall that $\Peps{\Wfin{S}}$ is $\Wfin{S}$ extended with a formal neutral element $\epsilon$).
We say that $\word{z}$ is \emph{saturated} if it contains infinitely many symbols from $\Wfin{S}$, that is, symbols different than $\epsilon$.
In this case $\odot (\word{z})$ is well\=/defined: we can erase all symbols $\epsilon$ from $\word{z}$ obtaining an~$\omega$\=/word $\word{z}'\in (\Wfin{S})^\omega$ and put $\odot(\word{z})\eqdef \odot (\word{z}')$. This definition again satisfies the associativity properties as in \cref{eq:assoc}.

\subparagraph*{Lookahead and composition.}

Assume that $\alpha\from \Wilk{\albet}\to S$ is a~homomorphism into a~finite Wilke algebra $S=(\Wfin{S},\Winf{S})$.
For every $\word{w}=w_0w_1w_2\cdots\in\albet^\omega$ this homomorphism defines the \emph{lookahead} $\lk_\alpha(\word{w})\in (\Winf{S})^\omega$ defined for each position $n\in\N$ as
\[ \big(\lk_\alpha(\word{w})\big)_n \eqdef \alpha\big(w_{n+1}w_{n+2}w_{n+3}\cdots\big)\in\Winf{S}.\]

Note that, while producing a~letter on a~position $n\in\N$, a~transducer uses letters on positions $0,1,\ldots,n$.
On the other hand, a~lookahead at position $n$ depends on positions $n{+}1,n{+}2,n{+}3,\ldots$
To create an~output $\omega$\=/word whose output letters in $\albet_Y$ depend on both the past and the future of input $\omega$\=/words, we consider transducers whose output letters are functions $(\Winf{S}\to \albet_Y)$, and then we apply these functions to letters in $\Winf{S}$ produced by a~lookahead.

To simplify the notation, we use the following shorthand:
if $\word{f}=f_0f_1f_2\cdots\in (\albet_X\to \albet_Y)^\omega$ and $\word{x}=x_0x_1x_2\cdots\in (\albet_X)^\omega$,
then $\funapp{\word{f}}{\word{x}}\in (\albet_Y)^\omega$ is defined for each position $n\in\N$ as $(\funapp{\word{f}}{\word{x}})_n=f_n(x_n)$.

\subparagraph*{Automata over infinite trees.}

A \emph{non\=/deterministic parity tree automaton} over an~alphabet~$\albet$ and of index $\anIndex=\langle{\albet_\anIndex,L_\anIndex}\rangle$
is a~tuple $\CA=\langle\albet,\anIndex,Q_\CA,\iota_\CA,\Delta_\CA\rangle$, where $Q_\CA$ is a~finite set of \emph{states}, $\iota_\CA\subseteq Q_\CA$ a~set of \emph{initial states},
and $\Delta_\CA\subseteq Q\times\albet\times\albet_\anIndex\times Q\times Q$ a~\emph{transition relation}. Again we require the automaton to be complete, that is, for every $q\in Q_\CA$ and $a\in\albet$ it needs to contain at least one transition $(q,a,k,q_\dL,q_\dR)\in\Delta_\CA$.

A \emph{run} of $\CA$ over a~tree $\tree{t}\in\trees_\albet$ producing an~output tree $\tree{\eta}\in\trees_{\albet_\anIndex}$ is a~tree $\tree{\rho}\in\trees_{Q_\CA}$
such that $\tree{\rho}(\epsilon)\in\iota_\CA$ and $(\tree{\rho}(v),\tree{t}(v),\tree{\eta}(v),\tree{\rho}(v\dL),\tree{\rho}(v\dR))\in\Delta_\CA$ for all nodes $v\in\set{\dL,\dR}^\ast$.
A~tree $\tree{t}\in\trees_\albet$ is \emph{accepted} by $\CA$ if there exists a~run of $\CA$ over $\tree{t}$ producing a~tree $\tree{\eta}$ such that for every branch $\word{w}\in\set{\dL,\dR}^\omega$,
the sequence $\tree{\eta}(\word{w}\restr_0)\tree{\eta}(\word{w}\restr_1)\tree{\eta}(\word{w}\restr_2)\cdots\in(\albet_\anIndex)^\omega$ belongs to $L_\anIndex$. The \emph{language} of an~automaton $\CA$ is the set of trees which it accepts. A~language $L\subseteq\trees_\albet$ is a~\emph{regular tree language} if it is the language of some automaton $\CA$.

A~tree automaton is (top\=/down) \emph{deterministic} if $\iota_\CA$ is a~singleton and $\Delta_\CA\from Q\times\albet\to\albet_\anIndex\times Q\times Q$ is a~function.

\subparagraph*{Monadic second\=/order logic.}
\label{pgref:mso}
Formulae of the MSO logic are evaluated in an appropriate structure, which in our case is $\N$ with the successor relation (in the case of $\omega$-words)
or $\set{\dL,\dR}^\ast$ with the left-child and right-child relations (in the case of trees).
Elements of the structure are called positions or nodes.
Usually, a monadic variable in MSO represents a~set of positions, which can be also seen as a word or a tree over the alphabet $\set{0,1}$, with $1$ indicating positions that are in the set.
In this paper, we employ a seemingly more general setting, where each monadic variable $X$ represents a word or a tree over some alphabet $\albet_X$, possibly larger than $\{0,1\}$.
In the sequel, we usually assume a~fixed alphabet $\albet_X$ associated to each variable $X$, but sometimes we explicitly specify the alphabet next to a~quantifier
(writing e.g.,~$\exists X\in(\albet_X)^\omega.\,\varphi(X)$).
Then, for a letter $x\in\albet_X$ and for a~first\=/order variable~$v$ we have an~atomic formula $X(v)=x$ checking whether the letter of $X$ at the position $v$ is $x$.
This way of seeing monadic variables does not increase the expressive power of MSO,
since a~variable with values in $\albet_X$ can be represented by a~tuple of $|\albet_X|$ usual set variables, which should be forced to partition the domain (even $\lceil\log|\albet_X|\rceil$ set variables suffice).

By equivalence between MSO and regular languages~\cite{buchi_decision,mcnaughton_determinisation,rabin_s2s}, we know that for every MSO formula $\varphi(X_1,\dots,X_n)$
we can construct a~deterministic parity $\omega$\=/word automaton (in the case of $\omega$\=/words)
or a~non\=/deterministic parity tree automaton (in the case of trees) over the alphabet $\albet_{X_1}\times\ldots\times\albet_{X_n}$
which accepts exactly those $\omega$\=/words / trees $\word{w}$ over this alphabet for which $\varphi\big(\pi_1(\word{w}),\dots,\pi_n(\word{w})\big)$ holds,
where each $\pi_i(\word{w})$ is obtained from $\word{w}$ by projecting labels of all positions to their $i$\=/th coordinate.
Note that the index of the constructed automaton depends on the formula $\varphi$ and in general cannot be bounded~\cite{bradfield_original,niwinski_nondet_strict,wagner_hierarchy}.

To simplify the notation, we identify a~structure $\word{w}$ over such a~product alphabet $\albet_{X_1}\times\ldots\times\albet_{X_n}$ with the tuple of structures $\langle\pi_1(\word{w}),\dots,\pi_n(\word{w})\rangle$ over respective alphabets.
In particular, for a~formula $\varphi(X_1,\ldots,X_n)$ we can speak about the \emph{language} of a~formula which is defined as the set of structures $\word{w}$ over $\albet_{X_1}\times\ldots\times\albet_{X_n}$ that satisfy $\varphi\big(\pi_1(\word{w}),\dots,\pi_n(\word{w})\big)$. Due to the ability of translating formulae into automata, these languages are always regular.

\subparagraph*{Games.}

We use the general framework of perfect information games of infinite duration played between two players (typically called Player~$\rI$ and Player~$\rII$).
Such a~game is given by a~tuple $\CG=\langle \albet, L_\CG, V_\CG=V^{(\rI)}_\CG\sqcup V^{(\rII)}_\CG, \iota_\CG, \delta_\CG\rangle$ where $\albet$ is an~alphabet,
$L_\CG\subseteq\albet^\omega$ is a~\emph{winning condition}, $V_\CG$ is a~(possibly infinite) set of \emph{positions}, partitioned into the positions of the respective players,
$\iota_\CG\in V_\CG$ is an~\emph{initial position}, and $\delta_\CG\subseteq V_\CG\times \albet\times V_\CG$ is an~\emph{edge relation}
(again satisfying completeness property that each $v\in V_\CG$ admits at least one edge $(v,a,v')\in\delta_\CG$).
The letter $a\in\albet$ is called the \emph{label} of an~edge $(v,a,v')\in\delta_\CG$.

A~play of such a~game is played in rounds, with the initial position $v_0=\iota_\CG$.
In round number $n\in\N$ the player $P$ such that $v_n\in V^{(P)}_\CG$ chooses an~edge $(v_n,k_n,v_{n+1})\in\delta_\CG$ moving to the next position $v_{n+1}$.
After an~infinite play, Player~$\rII$ wins if and only if $\word{k}\eqdef k_0k_1k_2\cdots$ belongs to $L_\CG$.
Classical theorems~\cite{martin_borel_determinacy} imply that if $L_\CG$ is sufficiently simple, then one of the players can ensure to win this game, that is, has a~\emph{winning strategy}.
In general such a~strategy for a~player $P$ is a~tree\=/shaped object but we mostly work with \emph{positional strategies}, that is,
functions $\sigma^{(P)}\from V^{(P)}_\CG\to \delta_\CG$ such that for every $v\in V^{(P)}_\CG$ we have $\sigma^{(P)}(v)=(v,k,v')$ for some $k\in\albet$ and $v'\in V_\CG$.

A~\emph{parity game} of index $\anIndex=\langle \albet_\anIndex, L_\anIndex\rangle$ is a~game $\CG$ as above where $\albet=\albet_\anIndex$ and $L_\CG=L_\anIndex$.

\begin{theorem}[{\cite{jutla_determinacy,mostowski_parity_games}}]
\label{thm:parity-determined}
If $\CG$ is a~parity game then some player $P$ has a~positional winning strategy $\sigma^{(P)}$ in $\CG$.
\end{theorem}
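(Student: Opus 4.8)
This is the classical \emph{positional determinacy of parity games}, and I would reproduce the standard induction on the number of priorities based on attractor strategies (going back to McNaughton, Zielonka, and Emerson--Jutla). It is convenient to prove the stronger statement that the set $V_\CG$ of positions splits into two regions $W_\rI,W_\rII$ such that from $W_P$ Player~$P$ has a positional strategy winning every play; the theorem as stated follows by looking at $\iota_\CG$. As preliminary bookkeeping I would replace edge priorities by vertex priorities, subdividing each edge $(v,k,v')$ by a fresh vertex of priority $k$ (its owner is immaterial, as it has a unique successor) and giving every original vertex the least priority $i$; since the $\limsup$ of any infinite sequence over $\{i,\dots,j\}$ is $\ge i$, this changes neither the winner of any play nor the existence of positional strategies. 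I also recall the $P$-attractor $\mathrm{Attr}_P(T)$ of a set $T\subseteq V_\CG$ --- the least set containing $T$, every $P$-vertex with an edge into it, and every opponent vertex all of whose edges enter it --- together with its positional \emph{attractor strategy} forcing, from any of its vertices, a visit to $T$ in finitely many rounds; its complement $V_\CG\setminus\mathrm{Attr}_P(T)$ is again a complete subgame and is a \emph{trap} for $P$ (the opponent can force the play to stay inside).

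The induction is on the number $d=j-i+1$ of priorities. If $d\le 1$, every infinite play has $\limsup$ equal to the single priority, so the player whose parity it matches wins from everywhere with any positional strategy. For $d\ge 2$ let $m=j$ be the top priority; without loss of generality $m$ is even (otherwise exchange the two players and the words ``even''/``odd'' throughout). Let $X$ be the set of priority-$m$ vertices and put $\CG_1=\CG\setminus\mathrm{Attr}_\rII(X)$; this subgame uses only priorities $\{i,\dots,m-1\}$, so by the induction hypothesis it splits into positionally-won regions, with Player~I winning region $U$. If $U=\emptyset$, Player~II wins all of $\CG$: on $\mathrm{Attr}_\rII(X)$ play the attractor strategy towards $X$ (and arbitrarily on $X$ itself), on $V_\CG\setminus\mathrm{Attr}_\rII(X)$ play the induction-hypothesis positional strategy (legitimate since $U=\emptyset$); any play either meets $X$ infinitely often, so $\limsup=m$ is even, or is eventually confined to $V_\CG\setminus\mathrm{Attr}_\rII(X)$, where Player~II wins by induction. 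If $U\ne\emptyset$, set $C=\mathrm{Attr}_\rI(U)$ and $\CG_2=\CG\setminus C$, a complete subgame with strictly fewer positions; on a finite arena a secondary induction on $|V_\CG|$ applies, and on an infinite arena one instead iterates the operation ``carve off $\mathrm{Attr}_\rI$ of the current Player-I region of the $(-1)$-priority subgame'' transfinitely, taking at limit stages the largest complete subgame contained in the intersection. Either way $\CG_2$ splits into positionally-won regions $W_\rI',W_\rII'$, and I claim $W_\rII=W_\rII'$ and $W_\rI=W_\rI'\cup C$.

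It remains to verify these claims. For Player~II: $W_\rII'$ is a trap for Player~I inside $\CG_2$, and $\CG_2$ is a trap for Player~I inside $\CG$, so $W_\rII'$ is a trap for Player~I inside $\CG$; playing the $\CG_2$-winning positional strategy there keeps every play inside $W_\rII'$, where it is a winning play for the (identical) winning condition of $\CG_2$. For Player~I, assemble three positional ingredients: on $C\setminus U$ the attractor strategy towards $U$, on $U$ the induction-hypothesis Player-I strategy for $\CG_1$, on $W_\rI'$ the $\CG_2$-winning Player-I strategy. A resulting play that ever enters $U$ stays there --- vertices of $U$ have no $\CG$-edge into $\mathrm{Attr}_\rII(X)$ --- and wins the parity condition on $\{i,\dots,m-1\}$, hence the one on $\{i,\dots,m\}$. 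A play in $C\setminus U$ is driven into $U$ in finitely many rounds. A play in $W_\rI'$ either stays there forever (a winning play of $\CG_2$) or is moved by Player~II out of $\CG_2$, i.e.\ into $C$, and thus eventually into $U$. In all cases Player~I wins, and since $W_\rI'\cup C$ and $W_\rII'$ partition $V_\CG$ this establishes positional determinacy of $\CG$.

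The combinatorial heart --- and the place where I expect the argument to need the most care --- is this last verification: it rests on the top priority $m$ being the favourable one for whichever player attracts to $X$, and on chasing the precise trap properties of attractor complements as one moves between $\CG$, $\CG_1$, and $\CG_2$. The second delicate point, invisible on finite arenas, is the termination of the ``carve off $C$'' recursion on an infinite, possibly infinitely branching, arena: the naive intersection at a limit stage need not be a complete subgame (a position may be left with all its moves leading outside), so one must additionally discard such positions and keep iterating, which is exactly where some ordinal bookkeeping becomes unavoidable.
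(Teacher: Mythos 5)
The paper states \cref{thm:parity-determined} as a black-box citation of Emerson--Jutla and Mostowski and gives no proof, so there is nothing internal to compare against; your proposal must be judged on its own. It is essentially correct: what you reproduce is the standard Zielonka-style attractor decomposition (induction on the number of priorities, with the player who favours the top priority attracting to it, and a secondary transfinite recursion carving off Player~$\rI$'s attractor of the sub-game winning region), whereas the cited original proofs proceed via signatures/progress measures. The two routes buy slightly different things: the signature argument assigns to each vertex of Player~$P$'s winning region an ordinal-valued witness from which positionality is read off directly and uniformly over infinite arenas, while your decomposition is more combinatorial but forces you to confront the limit-stage issue you flag at the end. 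You handle that issue correctly --- at a limit one must re-close under the Player-$\rI$ attractor rather than take a bare intersection, since with infinite branching a Player-$\rII$ vertex can have every successor removed at some earlier stage without any single stage removing them all (for the paper's concrete application to $\theGame(\word{w},\CD)$ the arena is infinite but finitely branching, so even the pigeonhole argument would do). Your preliminary reduction from edge labels to vertex priorities is also needed here, since the paper's games label edges, and it is sound because every label is at least $i$.

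One small imprecision worth fixing: the justification ``vertices of $U$ have no $\CG$-edge into $\mathrm{Attr}_{\rII}(X)$'' is literally true only for Player~$\rII$'s vertices of $U$ (that is the trap property of the attractor complement, plus the fact that a $\rII$-vertex of $U$ with an edge into $W_{\rII}(\CG_1)$ would contradict $U$ being winning for $\rI$); Player~$\rI$'s vertices of $U$ may well have such edges, and the play avoids them only because Player~$\rI$'s strategy does not use them. The conclusion --- that under the composite strategy the play, once in $U$, never leaves $U$ --- is nevertheless correct, and the rest of the case analysis (plays confined to $W_\rI'$, plays pushed by Player~$\rII$ into $C$ and then driven into $U$) is the standard and valid argument.
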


\section{New quantifiers}

In this section we introduce the two types of quantifiers which are studied in this work.
When doing so, we follow the convention to assume that in a~formula $\qQ X.\,\varphi(W_1,\ldots,W_k,X)$
all the parameter variables $W_1,\ldots,W_k$ are combined into a~single free variable $W$ over a~product alphabet, as explained above.
Thus, we focus on formulae of the form $\qQ X.\,\varphi(W,X)$, even if the respective coordinates of $W$ come from different outer quantifiers.

\subparagraph*{Index quantifiers.}

Consider a~new quantifier $\qI^{D}_\anIndex X.\,\varphi(W,X)$ where $D\in\{\Idet, \Indet\}$ determines the type of involved automata and $\anIndex$ is an~index (either a~strong parity index $\CP_{i,j}$ or a~weak parity index $\CW_{i,j}$).
Such a~formula holds for a~parameter $\word{w}\in(\albet_W)^\omega$ if there exists an~automaton~$\CA_{\word{w}}$ of index $\anIndex$, which is either deterministic ($D=\Idet$) or non\=/deterministic ($D=\Indet$),
such that for every $\word{x}\in(\albet_X)^\omega$ the formula $\varphi(\word{w},\word{x})$ holds if and only if $\CA_{\word{w}}$~accepts $\langle\word{w},\word{x}\rangle$.

Note that the parameter $\word{w}$ occurs in the above definition in two roles.
First, the automaton~$\CA_{\word{w}}$ may depend on the parameter $\word{w}$.
Second, the automaton, when verifying whether the given $\word{x}$ makes $\varphi(\word{w},\word{x})$ true, has access not only to $\word{x}$ but also to the parameter $\word{w}$
(in particular, the automaton is over the alphabet $\albet_{W}\times \albet_X$).

\begin{remark}
\label{rem:index-no-access}
One may ask what changes if we consider another semantics of the index quantifier, where the hypothetical automaton does not have access to the parameters $\word{w}$ but only reads the quantified $\omega$\=/word $\word{x}$. In this case the formalism becomes immediately undecidable. Indeed, consider the simplest possible formula $\qI^{\Idet}_{\CW_{0,1}} X.\,(X=W)$, which involves the deterministic safety index quantifier. Then, for a~given~$\word{w}$ the set of $\omega$\=/words $\word{x}$ that satisfy $\word{x}=\word{w}$ is $\{\word{w}\}\subseteq(\albet_W)^\omega$. This language is recognised by a~deterministic safety automaton if and only if $\word{w}$ is ultimately periodic.
Due to Bojańczyk et al.~\cite{bojanczyk_undecidability}, this extended logic is undecidable.
\end{remark}

Note that both deterministic and non\=/deterministic index quantifiers make sense for both $\omega$\=/words and trees. Let MSO+$\qI$ denote the extension of monadic second\=/order logic by index quantifiers.

\subparagraph*{Game quantifiers.}

As a~natural way to study the index quantifier, we need to formalise within MSO the concept of the \emph{game quantifier} $\qG$ (see~\cite[\S~20.D]{kechris_descriptive} and~\cite{bradfield03,bradfield_transfinite,finkel_upper_tree,kanovei_survey,damian_henryk,moschovakis_inductive}).
This quantifier, written $\qG X\sto Y$ (alternatively, in some papers the symbol $\Game$ is used), binds two monadic variables $X$ and $Y$.
A formula
\[\qG X\sto Y.\, \varphi(W,X,Y)\]
holds, given a~parameter $\word{w}\in(\albet_W)^\omega$, if Player~$\rII$ has a~winning strategy in the game $\theGame(\word{w},\varphi)$, defined as follows.\footnote{%
Classically, in the works of Moschovakis and Kechris~\cite{kechris_descriptive,moschovakis_inductive} the ``game quantifier'' requires Player~$\rI$ to win the game,
however in automata\=/theoretic context (e.g.,~the Church synthesis problem~\cite{buchi_synthesis,rabin_church_trees}) or Wadge games~\cite{wadge_phd},
it is more customary to focus on Player~$\rII$.}
The game consists of infinitely many rounds.
In a~round $n\in\N$, Player~$\rI$ proposes a~letter $x_n\in\albet_X$ and Player~$\rII$ answers with a~letter $y_n\in\albet_Y$.
At the end, Player~$\rII$ wins if and only if $\varphi(\word{w},\word{x},\word{y})$
holds for $\word{x}\eqdef x_0x_1x_2\cdots\in(\albet_X)^\omega$ and $\word{y}\eqdef y_0y_1y_2\cdots\in(\albet_Y)^\omega$.
This game can easily be represented by a~formal game $\CG$ with positions $V^{(\rI)}_\CG\eqdef \N$ and $V^{(\rII)}_\CG\eqdef\N\times \albet_X$ and $L_\CG$ given by $\varphi$; however we do not need to study the exact structure of this game.

Typically, one applies the game quantifiers in the context where the involved games are determined, although the definition makes sense even without this assumption.

Let MSO+$\qG$ denote the extension of monadic second\=/order logic by game quantifiers.
Note that as it is defined, the game quantifier makes sense only for $\omega$\=/words, because the shape of the time\=/structure of a~game of infinite duration is $\omega$.

\section{Generalised quantifiers}
\label{sec:generalised}

In this section we relate the quantifiers introduced in this paper to the general concept of \emph{generalised quantifiers}. They were proposed by Mostowski~\cite{mostowski-quantifiers} as
an~abstract logical construct that generalises the classical quantifiers $\exists$ and $\forall$. Since then, they became an~important tool in various applications of logic (see, e.g.,~\cite{sep-generalized-quantifiers} for a~survey).

At the syntactic level, a~quantifier $\qQ$ extends the language by a~construction $\qQ x.\,\varphi (\vec{w},x)$, for an~arbitrary formula $\varphi$. Here, a~variable $x$ is \emph{bound} by $\qQ$, whereas the variables in $\vec{w} = (w_1, \ldots ,w_k)$ remain \emph{free}.
At the semantic level, the quantifier is associated with an~operator, which, for any structure $\CM$ (with universe $M$) defines a~family of sets $\qQ^{\CM} \subseteq \powerset(M)$.
Then, given a~valuation $\vec{w} \mapsto \vec{a}\in M^k$, the formula $\qQ x.\,\varphi (\vec{a},x)$ holds in $\CM$ if the set $\{ b\in M \mid \text{$\varphi (\vec{a},b)$ holds in $\CM$}\}$ belongs to $\qQ^{\CM}$.
In this setting, $\exists^{\CM}$ is the family of all non\=/empty subsets of $M$, whereas $\forall^{\CM} = \{ M \}$.
As a~less standard example, one can express the property that the cardinality of the set of $x$'s satisfying $\varphi (\vec{w},x)$ belongs to some specified class of cardinals (i.e.,~$\exists^\infty$ says that the set is infinite), or that the set of $x$'s that do satisfy~$\varphi$ and those that do not, have the same cardinality. It is usually assumed that the family $\qQ^{\CM}$ is invariant under permutations of $M$, but a weakening of this requirement is sometimes justified.

More generally, one can consider $n$\=/ary quantifiers, where a~quantifier $\qQ$ bounds simultaneously $n$ variables and,
respectively, $\qQ^{\CM}$ is a~family of $n$\=/ary relations over $M$.
For example, if $n=2$ and $\qQ^{\CM}$ is the class of rectangles, that is, $\qQ^{\CM} = \{ X \times Y \mid X,Y \in \powerset(M) \}$ then $\qQ x y.\,\varphi (\vec{w},x,y)$ expresses the fact that
whenever $\varphi (\vec{a},b_1,c_1)$ and $\varphi (\vec{a},b_2,c_2)$ hold in $\CM$ then $\varphi(\vec{a},b_1,c_2)$ and $\varphi(\vec{a},b_2,c_1)$ hold as well.

One can adapt the above concepts to monadic second\=/order logic (MSO), with
$\qQ^{\CM} \subseteq \powerset(\powerset (M))$ in the unary case, and in general
$\qQ^{\CM} \subseteq \powerset\left(\left(\powerset (M) \right)^n \right)$. Indeed, several generalised quantifiers of this kind have been considered in the literature, the eminent example being the \emph{weak} quantifiers, that is, the quantifiers $\exists$ and $\forall$ restricted to \emph{finite} sets.
The cardinality quantifiers and unboundedness quantifiers mentioned in the introduction can also be presented in this framework.
%\ignore{Boja\'nczyk~\cite{bojanczyk_bounding} considered an~\emph{unboundedness} quantifier ${\mathbb U} X$, expressing the property that a formula is satisfied by finite sets $X$ of unbounded size.
%B\'ar\'any, Kaiser, and Rabinovich~\cite{barany_expressing_trees} gave attention to the \emph{cardinality quantifiers} in MSO logic over the binary tree, like, e.g.,~$\exists^{\geq \kappa} X.\,\varphi(\vec{Y}, X)$, stating that there are at least $\kappa$ distinct sets $X$ satisfying $\varphi(\vec{Y}, X)$. The definition of $\qQ^{\CM}$ in the above examples is straightforward.
%The expressive power varies:
%%The aforementioned authors showed that
%the cardinality quantifiers can be in fact expressed in the standard syntax of MSO~\cite{barany_expressing_trees}, thus the extended theory remains decidable.
%But the unboundedness quantifier cannot, and the extension becomes undecidable~\cite{bojanczyk_msou_final}.
%However,
%Niwiński, Parys, and Skrzypczak exhibit an~example of a~quantifier related to the ordinal rank of well\=/founded sets that cannot be expressed in MSO, although the respective property expressed by this quantifier is decidable.
%\pp{Także kwantyfikatory $U$ (oraz Nabla). Mają nieco inny charakter.}
%\dn{Napisałem coś o $U$ i o {\em game}. Nabla mniej mi tu pasują, może nie trzeba wszystkiego (?). Wzmianka o naszym kwantyfikatorze {\em ordinal rank\/} też do rozważenia.}}

\subparagraph*{Game quantifiers.}
We begin by discussing how game quantifiers introduced above can be viewed as generalised MSO quantifiers over the structure $\N$. To explain the idea, let us first take a~simple example in first\=/order logic. Consider a~formula
\[ \forall x.\, \exists y.\, \forall x'.\, \exists y'.\, \varphi (\vec{w},x,y,x',y'). \]
Clearly, its meaning in a structure $\CM$ can be viewed as a game of two players, say ${\bf \exists}$ and~${\bf \forall}$, consisting of $4$ rounds. Now the block of $4$ quantifiers can be replaced by a~single $4$\=/ary quantifier, so that the formula becomes
$\qQ xyx'y'. \, \varphi (\vec{w},x,y,x',y')$. The semantics of $\qQ$ is specified by a property that a $4$\=/ary relation $r$ in $\qQ^{\CM}$ should possess. In terms of a~game, in which Players~$\rI$ and~$\rII$ select in alternation elements of $M$, Player~$\rII$ should have a strategy to force the selected quadruple into $r$.
%\dn{Maybe the fragment about first-order games should be removed for lack of space.}

Now consider a formula $\varphi (W,X,Y)$ interpreted in the structure $\N$, where $W$, $X$, $Y$ are set variables (more generally, they could be some tuples of set variables). Consider an infinite game, in which Players~$\rI$ and~$\rII$ select in alternation bits in $\{ 0,1 \}$, so that the result is an~infinite sequence
\[
x_0, y_0, x_1, y_1, x_2, y_2, \ldots, x_n, y_n, \ldots
\]
The sequences $x_0, x_1, x_2, \ldots$ and $y_0, y_1, y_2, \ldots$ constitute characteristic functions of some subsets $\word{x}$ and $\word{y}$ of $\N$, respectively.
Now, for a~valuation $W \mapsto \word{w}$, a~formula defined with the game quantifier
\[
\qG X \sto Y.\,\varphi (\word{w},X,Y)
\]
holds if Player~$\rII$ has a strategy to force that the formula $\varphi (\word{w}, \word{x}, \word{y})$ holds in $\N$.
The game quantifier $\qG$ can be defined as a {\em binary\/} generalised MSO quantifier.
Its semantics $\qG^{\N}$ is defined by a family of binary relations over $\powerset(\N)$ that comprises \emph{all} relations $R \subseteq \powerset (\N) \times \powerset (\N)$,
such that in the game described above, Player~$\rII$ has a~strategy to force the resulting pair $(\word{x}, \word{y})$ into $R$.
Then, indeed, the formula $\qG X \sto Y .\, \varphi (\word{w},X,Y)$ holds precisely when the relation $\{ (\word{x}, \word{y}) \mid \text{$\varphi (\word{w}, \word{x}, \word{y})$ holds in $\N$} \}$ belongs to $\qG^{\N}$.

\subparagraph*{Index quantifiers.}

To present our new index quantifier $\qI^{D}_\anIndex X.\,\phi(\vec{W},X)$ as a generalised MSO quantifier, let us,
for concreteness, focus on the MSO theory of
the full binary tree, whose domain is $\{\dL,\dR\}^\ast$.
%\ignore{One may be tempted to introduce new quantifiers
%expressing those properties of sets (or tuples thereof) that are
%related to automata, where we identify a~tuple of sets $(X_1, \ldots,
%X_n)$ with its characteristic function viewed as a~labelled tree over
%alphabet $\{0,1\}^n$. For example, one may consider a~quantifier
%$\exists^B X$, expressing the property that the set of $X$'s satisfying the formula is recognisable by a~Büchi tree automaton. There is however, an~evidence that the expressive power of such a~quantifier
%would be limited. Indeed, consider a formula $\varphi (X,Y)$ and
%let, for $W \subseteq T$, $\varphi^{-1} (W)$ denote the set of $V$'s
%such that $\varphi (V,W)$ is true. Suppose that
%$\exists^B X.\,\varphi (X,{W})$ holds for some
%``wild'' (in particular, non\=/regular) set ${W} \subseteq T$.
%Then by the Regular Tree Theorem there must be a \emph{regular} set
%$W_0 \subseteq T$, such that
%Let $\varphi^{-1} (W) = \varphi^{-1} (W_0)$. Hence, in general, either the formula $\exists^B X.\,\varphi (X,{W})$ is ``often'' false, or the formula $\varphi (X,Y)$ only weakly ``correlates'' its arguments.
%
%\dn{The above remark is maybe too complicated at this
%place; it can be safely omitted. But I leave it for the moment, as
%it points out to an interesting general question: how strongly a
%relation $\varphi (X,Y)$ correlates its arguments.}
%
%Our definition of the quantifier $\qI^D_{\alpha} X.\,\varphi(X,\vec{Y})$ introduced in \cref{sect-index} above follows different lines. We have assumed that the automaton in consideration reads \emph{both} $X$ and $\vec{Y}$.}
As we have assumed that our automaton reads the values of both $\vec{W}$ and $X$,
the construction does not fit into
the unary case, but, like the game quantifier, it can be expressed as a~\emph{binary}
quantifier, or more generally, $(k{+}\ell)$\=/ary quantifier (if $\vec{W}$ is a~$k$\=/vector and $X$ an~$\ell$\=/vector).

For simplicity, let us consider $k=\ell=1$; an~extension to higher $k$, $\ell$ is straightforward. The key
point is to choose a~class of binary relations over $\powerset(\{\dL,\dR\}^\ast)$ that would serve as the intended semantics of the quantifier.
For a~binary relation $r \subseteq \powerset(\{\dL,\dR\}^\ast) \times \powerset(\{\dL,\dR\}^\ast)$, and a~set $K \in \powerset (\{\dL,\dR\}^\ast)$, we define the \emph{cut} of $r$ by $K$ as the binary relation
\[
  r_K = r \, \cap \, \big(\{ K \} \times \powerset (\{\dL,\dR\}^\ast) \big)
      = \big\{ (K,L)\in r \mid L \in \powerset(\{\dL,\dR\}^\ast) \big\}.
\]
Recall that in our quantifier we are interested in automata of type $D$ and index
$\anIndex$. A pair of sets $(K,L)$ is accepted by an~automaton
(over the alphabet $\{0,1\}^2$) if so is its characteristic function, and
a~relation $r \subseteq \powerset(\{\dL,\dR\}^\ast) \times \powerset(\{\dL,\dR\}^\ast)$ is recognised by an~automaton if it consists precisely of pairs that the automaton
accepts.
Now consider the class of relations
\[ \CC^D_{\anIndex } = \big\{ r_K \mid K \in \powerset(\{\dL,\dR\}^\ast) \; \land \; \text{$r$ is recognised by an~automaton of type $D$ and index $\anIndex$}\big\}.\]
Then it is straightforward to see that the formula $\qI^D_{\anIndex} X.\,\varphi(W,X)$ is equivalent to
\[ \qQ^D_\anIndex ZX.\, \varphi (Z,X) \, \land \, Z=W, \]
where the semantics of the quantifier $\qQ^D_\anIndex$ over trees is given by the class $\CC^D_{\anIndex}$.

Clearly, the variable $Z$ above plays only a~technical role; therefore,
for clarity of notation, in our paper we use the notation $\qQ^D_\anIndex X.\, \varphi (W,X)$, without $Z$.

Let us also remark that our proposal is not the only possible approach. One could also consider a~unary quantifier $\qQ^D_\anIndex X$, where a~formula
$\qQ^D_\anIndex X.\, \varphi (W,X)$ holds for a valuation $W \mapsto \tree{w}$ if
the language of all sets $\tree{x}$ such that $\varphi (\tree{w}, \tree{x})$ holds is accepted by an~automaton (of appropriate kind), without reading the parameter $\tree{w}$, as discussed in \cref{rem:index-no-access}.
That is, the semantics is given simply by a~class of all languages accepted by automata of type $D$ and index $\anIndex$.

While this may appear quite natural, we believe that such an extension would be less interesting.
Not only it brings an~undecidable formalism over $\omega$\=/words as indicated in \cref{rem:index-no-access} but it additionally restricts available correlation between the involved variables.
Indeed, if such a~formula is satisfied by some $\tree{w}$ which is not regular, then it follows from general properties of MSO (namely Regular Tree Theorem) that there is a~regular $\tree{w}'$, such that the languages $\{ \tree{x} \mid \varphi(\tree{w}, \tree{x})\}$ and $\{ \tree{x} \mid \varphi(\tree{w}', \tree{x})\}$ coincide.
Thus the relation defined by the formula $\varphi (W, X)$, in some sense, necessarily weakly correlates its arguments.
These issues require further investigation.

%\ignore{The purpose of the current
%subsection was to show that our construction fits into the general
%framework, which justifies the use of the term \emph{quantifier}.
%
%
%We have shown how the quantifier of \cref{sect-index}
%can be incorporated into a general framework. However, these
%formulas form only a special case of what should be normally written
%in this framework, as they oblige the automaton to read {\em all }
%variables $\vec{Y}$. I believe that a more appropriate form would be
%\[
%\qI_{\alpha} X | Y.\,\varphi(X,\vec{Y}, \vec{Z}),
%\]
%where the formula holds for a~vector of variables $\vec{Y}, \vec{Z}$ if
%there exists an~automaton~$\CA$ of index $\alpha$ such that for every
%$X$ the formula $\varphi(X,\vec{Y}, \vec{Z})$ holds if and only if
%$\CA$~accepts $(X,\vec{Y})$. Here $\vec{Y}$ plays the role of a
%{\em condition\/}, with a remote analogy to conditional probability,
%or conditional entropy:}

\section{Game quantifiers over \texorpdfstring{$\omega$\=/}{w-}words}

The first part of our results concerns the~\emph{game quantifier} $\qG$. We start by showing that the extended formalism of MSO+$\qG$ can be reduced back to pure MSO, that is, the game quantifiers can be eliminated.
However, our goal is to obtain a stronger property, stated in \cref{thm:transducer}:
under appropriate assumptions on the formula, games described by quantifiers~$\qG$ admit strategies that can be realised by finite-memory transducers.

Consider an~instance of a~game quantifier $\qG X\sto Y.\,\varphi(W,X,Y)$, where the internal formula $\varphi(W,X,Y)$ is in MSO.

\begin{lemma}[Folklore]\label{lem:elim-1-game-quant}
	For every formula of the form $\qG X\sto Y.\,\varphi(W,X,Y)$, where $\varphi$ is in MSO,
	one can effectively construct an~equivalent formula of pure MSO.
\end{lemma}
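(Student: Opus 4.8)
The plan is to reduce the game $\theGame(\word{w},\varphi)$ to a parity game on a finite arena, for which positional determinacy (\cref{thm:parity-determined}) and the fact that winning regions of finite parity games are computable will do the rest. First I would invoke the equivalence between MSO and automata (the paragraph on MSO above) to turn $\varphi(W,X,Y)$ into a deterministic parity $\omega$\=/word automaton $\CD$ over the product alphabet $\albet_W\times\albet_X\times\albet_Y$, recognising the language of $\varphi$; say $\CD$ has state set $Q$, transition function $\delta$, and index $\anIndex$. Since $\word{w}\in(\albet_W)^\omega$ is a \emph{free} variable (the parameter), it must stay in the formula, so I keep $\CD$ as a device that reads all three coordinates, but I think of it as being driven by the parameter: in round $n$, after Player~$\rI$ plays $x_n$ and Player~$\rII$ plays $y_n$, the automaton consumes the triple $(w_n,x_n,y_n)$ and updates its state.

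Next I would build, for each fixed $\word{w}$, the arena of a parity game $\CG_{\word{w}}$ whose positions are pairs $(q,\text{turn})$ with $q\in Q$: from a Player~$\rI$ position $q$ and input letter $w_n$ she picks $x_n\in\albet_X$, landing in an intermediate Player~$\rII$ position $(q,x_n)$; from there Player~$\rII$ picks $y_n\in\albet_Y$, and the game moves to the state $q'$ with $\delta(q,(w_n,x_n,y_n))=(k,q')$, emitting the priority $k$ from $\anIndex$. By construction Player~$\rII$ wins a play of $\CG_{\word{w}}$ iff the resulting priority sequence lies in $L_\anIndex$, i.e.\ iff $\varphi(\word{w},\word{x},\word{y})$ holds; hence $\qG X\sto Y.\,\varphi(W,X,Y)$ holds at $\word{w}$ iff Player~$\rII$ has a winning strategy from the initial state $\iota_\CD$ in $\CG_{\word{w}}$. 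The point is that this family of games is uniformly ``finite up to $\word{w}$'': the arena depends on $\word{w}$ only through which letter $w_n$ is available at step $n$. I would therefore describe a single \emph{finite} parity game $\CG'$ over the alphabet $\albet_W$ (a game in the sense of the Preliminaries, with $V_{\CG'}\subseteq Q\times(\{\rI\}\cup\albet_X)$, whose edges are labelled by letters of $\albet_W$ in Player~$\rI$'s slots and by chosen priorities elsewhere): the winner of $\CG'$ fed with input $\word{w}$ is exactly the winner of $\CG_{\word{w}}$.

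Finally I would express ``Player~$\rII$ wins $\CG_{\word{w}}$'' back in MSO. The standard move: the winning region of Player~$\rII$ in a finite parity game played along an input word is itself an $\omega$\=/regular property of that input word — one can build (effectively) a deterministic parity automaton $\CB$ over $\albet_W$ accepting exactly those $\word{w}$ for which Player~$\rII$ wins, for instance by the usual product-with-the-game construction combined with determinacy of finite parity games (\cref{thm:parity-determined}) and the computability of winning regions. Translating $\CB$ back to an MSO formula $\psi(W)$ (again by the MSO–automata equivalence) yields the required pure\=/MSO formula equivalent to $\qG X\sto Y.\,\varphi(W,X,Y)$. The main obstacle is making the last step precise: ``Player~$\rII$ wins the finite parity game $\CG'$ when driven by input $\word{w}$'' has to be recognised by a \emph{deterministic} $\omega$\=/word automaton on $\word{w}$, which requires care — one natural route is to consider, for each state $q$ and each input-letter choice, the attractor/winning-set computation parametrised by the moves dictated by $\word{w}$, or equivalently to view the whole thing as a game with $\omega$\=/regular winning condition on $\albet_W\times(\text{players' moves})$ and apply positional determinacy to collapse Player~$\rI$'s and Player~$\rII$'s choices while leaving $\word{w}$ as the only genuinely infinite input; this is exactly the point where one leans on Kaiser~\cite{kaiser_game_quant}, and where the paper promises a self\=/contained argument.
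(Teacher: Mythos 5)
There is a genuine gap, and it sits exactly where you flag ``the main obstacle.'' Your setup (compile $\varphi$ into a deterministic parity automaton $\CD$ and form the product game) matches the paper's, but your claim that the resulting game is ``a parity game on a finite arena'' is not correct: because the transition taken in round $n$ depends on the letter $w_n$ of the parameter, the arena is genuinely $\N\times Q$ (plus the intermediate positions), not $Q$. The collapse to a finite arena over $Q$ only works when there is no parameter $W$ at all --- that is precisely the content of \cref{rem:effective-strategy} in the paper, not of the lemma you are proving. Consequently, your final step --- ``the winning region of Player~$\rII$ \ldots is itself an $\omega$\=/regular property of that input word, one can build (effectively) a deterministic parity automaton $\CB$ over $\albet_W$'' --- asserts the conclusion of the lemma rather than proving it; the ``usual product-with-the-game construction'' and ``computability of winning regions'' apply to a fixed finite game, whereas here the winner depends on the entire infinite future of $\word{w}$. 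Deferring to Kaiser at that point leaves the lemma unproved.

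The missing idea, which the paper uses, is this: apply positional determinacy (\cref{thm:parity-determined}) to the \emph{infinite} arena $\N\times Q$, so that Player~$\rII$ wins iff she has a \emph{positional} winning strategy; such a strategy is a map assigning to each $(n,q,x)$ a letter of $\albet_Y$, hence it can be encoded as an $\omega$\=/word $\word{\sigma}\in\big(Q\times\albet_X\to\albet_Y\big)^\omega$ over a finite alphabet. Then one writes a pure\=/MSO formula $\psi^{(\rII)}(W,\Sigma)$ saying ``$\Sigma$ encodes a positional winning strategy in $\theGame(W,\CD)$'' --- this is MSO\=/expressible because MSO can quantify over the plays consistent with $\Sigma$ and state the parity condition along the induced priority sequence --- and the game\=/quantified formula becomes $\exists\Sigma.\,\psi^{(\rII)}(W,\Sigma)$. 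This existential quantification over the strategy, rather than a direct determinisation of the winning set, is the step your argument needs to be complete.
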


This construction can be found in a work by Kaiser~\cite{kaiser_game_quant}.
We include a proof for the sake of completeness.
The concepts introduced in this proof will be useful later on in the paper.

\begin{proof}
Let $\CD$ be a~deterministic parity automaton over the alphabet $\albet_W\times\albet_X\times\albet_Y$ of a~strong parity index $\anIndex=\langle \{i,i{+}1,\ldots,j\},P_{i,j}\rangle$ that is equivalent to $\varphi$,
that is, the automaton accepts an~$\omega$\=/word $\langle\word{w},\word{x},\word{y}\rangle$ if and only if $\varphi(\word{w},\word{x},\word{y})$ holds.

Given an~$\omega$\=/word $\word{w}=w_0w_1w_2\dots\in(\albet_W)^\omega$,
we can consider a~parity game $\theGame(\word{w},\CD)$ obtained as a~product of $\theGame(\word{w},\varphi)$ with the automaton $\CD$, defined as follows.

\begin{definition}
\label{def:product-game}
Let $Q$ be the set of states of $\CD$, and $\delta$ its transition function.
The set of positions of $\theGame(\word{w},\CD)$ is then given by $V^{(\rI)}\eqdef \N\times Q$ and $V^{(\rII)}\eqdef \N\times Q\times \albet_X$.
From a~position $(n,q)\in\N\times Q$ first Player~$\rI$ proposes $x_n\in \albet_X$ and the game moves to the position~$(n,q,x_n)$. Then Player~$\rII$ proposes $y_n\in\albet_Y$ and the game moves to the position~$(n{+}1,q')$ where $\delta(q, (w_n, x_n, y_n))=(k_n, q')$. The label of the former edge equals the lowest priority $i$ (i.e.,~is irrelevant), while the label of the latter edge equals $k_n$.
\end{definition}

It is easy to see that $\theGame(\word{w},\CD)$ is equivalent to $\theGame(\word{w},\varphi)$ in the sense that a~player $P$ wins one game if and only if she wins another:
the automaton $\CD$ is deterministic, so there is a one\=/to\=/one correspondence between choices in $\theGame(\word{w},\varphi)$ and choices in $\theGame(\word{w},\CD)$,
so that strategies from one game can be directly transferred to the other game.
Moreover, due to positional determinacy of parity games (see \cref{thm:parity-determined}),
Player~$\rII$ wins $\theGame(\word{w},\varphi)$ if and only if Player~$\rII$ has a~positional winning strategy in $\theGame(\word{w},\CD)$.

A~positional strategy $\sigma^{(\rII)}$ of Player~$\rII$ in $\theGame(\word{w},\CD)$ can be represented by an~$\omega$\=/word $\word{\sigma}=\sigma_0\sigma_1\sigma_2\cdots\in \big(Q\times\albet_X\to\albet_Y\big)^\omega$, where $Q$ is the set of states of $\CD$: in this $\omega$\=/word, the letter $\sigma_n$ satisfies $\sigma_n(q,x)=y$ where $\sigma^{(\rII)}(n,q,x)=\big((n,q,x),k,(n{+}1,q')\big)$ with $\delta_\CD\big(q,(w_n,x,y))=(k,q')$.
The following claim is straightforward, as MSO allows us to quantify over infinite plays in $\theGame(\word{w},\CD)$ and can express the parity condition $\CP_{i,j}$.

\begin{claim}\label{claim:formula-for-strategy}
There exists an~MSO formula $\psi^{(\rII)}(W,\Sigma)$
such that $\psi^{(\rII)}(\word{w},\word{\sigma})$ holds for $\word{w}\in(\albet_W)^\omega$ and $\word{\sigma}\in \big(Q\times\albet_X\to\albet_Y\big)^\omega$ if and only if
$\word{\sigma}$ encodes a~positional winning strategy~$\sigma^{(\rII)}$ of Player~$\rII$ in $\theGame(\word{w},\CD)$.
\end{claim}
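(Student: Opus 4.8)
The plan is to build $\psi^{(\rII)}$ by translating the definition of ``positional winning strategy'' directly into MSO over $\N$ with successor, exploiting that all the relevant alphabets are finite. First I would record that, for a fixed $\word{w}$, a positional strategy $\sigma^{(\rII)}$ of Player~$\rII$ in $\theGame(\word{w},\CD)$ is in bijection with an $\omega$\=/word $\word{\sigma}\in(\albet_\Sigma)^\omega$ over the \emph{finite} alphabet $\albet_\Sigma\eqdef(Q\times\albet_X\to\albet_Y)$: a position of Player~$\rII$ has the form $(n,q,x)$, and once she commits to some $y\in\albet_Y$ the outgoing edge (hence the successor position and the priority) is determined by $\delta_\CD(q,(w_n,x,y))=(k,q')$, so $\sigma_n(q,x)\eqdef y$ recovers the strategy and conversely. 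Consequently \emph{every} $\word{\sigma}\in(\albet_\Sigma)^\omega$ encodes a positional strategy, so $\psi^{(\rII)}$ need not express any well\=/formedness; it only has to express that the encoded strategy is winning.

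Next I would observe that a play consistent with $\sigma^{(\rII)}$ is completely determined by Player~$\rI$'s sequence of moves $\word{x}\in(\albet_X)^\omega$: it visits states $\rho_0=\iota_\CD$ and $\rho_{n+1}=q'$ where $\delta_\CD(\rho_n,(w_n,x_n,\sigma_n(\rho_n,x_n)))=(k_n,q')$, with priority sequence $\word{k}=k_0k_1k_2\cdots$. Hence $\word{\sigma}$ encodes a winning strategy iff for every $\word{x}$ the resulting $\word{k}$ lies in $P_{i,j}$, i.e.\ $\limsup_n k_n$ is even. So I would set $\psi^{(\rII)}(W,\Sigma)$ to be $\forall X\in(\albet_X)^\omega.\ \exists R\in Q^\omega.\ \exists K\in(\albet_\anIndex)^\omega.\ \big(\mathrm{Play}(W,X,\Sigma,R,K)\land\mathrm{Par}(K)\big)$, where $\mathrm{Play}$ states that $R,K$ describe the play generated by $W,X$ and the strategy encoded by $\Sigma$, and $\mathrm{Par}(K)$ states that $K$ satisfies the parity condition; note that since $\delta_\CD$ is a function, $\mathrm{Play}$ pins $R,K$ down uniquely given $W,X,\Sigma$, so the inner existential quantifiers really range over a unique witness.

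The formula $\mathrm{Play}$ I would write as a finite conjunction: $R$ equals the initial state of $\CD$ at the least position, and for every position $n$ (with successor $n{+}1$) and every tuple $(f,q,a,x)\in\albet_\Sigma\times Q\times\albet_W\times\albet_X$, writing $\delta_\CD(q,(a,x,f(q,x)))=(k,q')$, the implication $\big(\Sigma(n){=}f\land R(n){=}q\land W(n){=}a\land X(n){=}x\big)\rightarrow\big(K(n){=}k\land R(n{+}1){=}q'\big)$ holds; this is legitimate MSO since all four alphabets are finite. The formula $\mathrm{Par}(K)$ I would write in the standard way as the disjunction, over even priorities $\ell\in\{i,\ldots,j\}$, of ``$K(n)=\ell$ for infinitely many $n$, and $K(n)=\ell'$ for only finitely many $n$ for each $\ell'>\ell$'', using that ``$\theta(n)$ holds for infinitely many $n$'' is the MSO formula $\forall m\,\exists n\,(n{>}m\land\theta(n))$. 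The construction is effective, and the correctness of $\psi^{(\rII)}$ is exactly the chain of equivalences set up in the first two paragraphs.

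Since the claim is flagged as straightforward, I expect no genuine obstacle; the only point needing care is the treatment of the function\=/valued letters of $\Sigma$ — we must ``apply'' $\Sigma(n)\in(Q\times\albet_X\to\albet_Y)$ to the pair $(R(n),X(n))$, which is possible inside MSO precisely because $\albet_\Sigma$ is finite and is therefore handled by the explicit finite case analysis above. Beyond that the argument is routine MSO bookkeeping.
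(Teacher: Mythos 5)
Your proposal is correct and follows exactly the route the paper indicates: the paper gives no detailed proof, merely noting the claim is straightforward because MSO can quantify over plays (your $\forall X$ together with the uniquely determined run $R,K$) and can express the parity condition. Your finite case analysis over the function-valued alphabet $\albet_\Sigma$ is the right way to make the ``application'' of $\Sigma(n)$ precise, and nothing further is needed.
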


It follows that the formula $\qG X\sto Y.\,\varphi(W,X,Y)$ is equivalent to
\[\exists \Sigma\in\big(Q\times\albet_X\to\albet_Y\big)^\omega.\,\psi^{(\rII)}(W,\Sigma),\]
where the set $\big(Q\times\albet_X\to\albet_Y\big)$ is finite and therefore one can treat it as an~alphabet. Consequently, this formula belongs to pure MSO.
\end{proof}

Using the above lemma to inductively eliminate an~innermost game quantifier, we immediately obtain the following corollary.

\begin{corollary}
\label{cor:elim-game-quant}
	The expressive power of MSO+$\qG$ is equal to that of MSO. Moreover, there exists an~effective procedure that eliminates the game quantifiers.
\end{corollary}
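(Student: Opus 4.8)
The statement to prove is \cref{cor:elim-game-quant}: the expressive power of MSO+$\qG$ equals that of MSO, and the game quantifiers can be effectively eliminated.

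The plan is to proceed by induction on the structure of an MSO+$\qG$ formula, using \cref{lem:elim-1-game-quant} as the inductive step. Concretely, given a formula $\Phi$ of MSO+$\qG$, I would argue by induction on the number of occurrences of the game quantifier $\qG$ in $\Phi$. If there are no occurrences, $\Phi$ is already a pure MSO formula and there is nothing to do. Otherwise, pick an \emph{innermost} occurrence of $\qG$ in $\Phi$ — that is, a subformula of the form $\qG X \sto Y.\, \varphi(W, X, Y)$ where $\varphi$ contains no game quantifier, hence is a pure MSO formula (here $W$ collects all the free variables of this subformula, combined into a single variable over a product alphabet as per the stated convention). Such an innermost occurrence exists whenever $\Phi$ contains at least one game quantifier, since the occurrences are finitely many and partially ordered by the subformula relation.

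Next, I would apply \cref{lem:elim-1-game-quant} to this subformula $\qG X \sto Y.\, \varphi(W, X, Y)$, obtaining effectively an equivalent pure MSO formula $\varphi'(W)$. Substituting $\varphi'(W)$ for the chosen subformula inside $\Phi$ yields a formula $\Phi'$ of MSO+$\qG$ that is equivalent to $\Phi$ and contains strictly fewer occurrences of $\qG$. The equivalence is preserved under substitution in context because the semantics of all MSO connectives and quantifiers (as well as of the remaining game quantifiers) is compositional: replacing a subformula by a logically equivalent one does not change the truth value of the whole formula under any valuation. Since the number of game quantifiers strictly decreases, the induction terminates, and iterating the procedure produces a pure MSO formula equivalent to $\Phi$. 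Effectiveness follows because the construction in \cref{lem:elim-1-game-quant} is effective and each step is a syntactic substitution, so the whole elimination procedure is an algorithm.

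I do not anticipate a genuine obstacle here, as this is a routine induction; the only mild points of care are the choice of an innermost quantifier (to ensure the body $\varphi$ is pure MSO before invoking the lemma) and the observation that logical equivalence is a congruence with respect to the formula-forming operations of MSO+$\qG$, so that the local replacement is sound in the global context. One should also note that \cref{lem:elim-1-game-quant} is stated for the body $\varphi$ having free variables among $W, X, Y$ with $X, Y$ bound by the quantifier; when the innermost $\qG$-subformula occurs under outer (ordinary or game) quantifiers binding some of its free variables, those simply become part of the parameter tuple $W$ in the sense of the stated convention, so the lemma applies verbatim. This completes the argument.
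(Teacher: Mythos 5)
Your proposal is correct and is exactly the paper's argument: the paper derives the corollary by "using the above lemma to inductively eliminate an innermost game quantifier," which is precisely your induction on the number of $\qG$-occurrences combined with \cref{lem:elim-1-game-quant}. Your additional remarks on compositionality and on folding outer-bound free variables into the parameter $W$ are sound elaborations of details the paper leaves implicit.
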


\begin{remark}
\label{rem:effective-strategy}
	Consider formulae without the parameter $W$, that is, $\qG X\sto Y.\,\varphi(X,Y)$, where $\varphi$ is in MSO. In this case the game $\theGame(\word{w},\CD)$ can be played over the arena $Q$ instead of $\N\times Q$.
	Thus, it is a~finite parity game, which can be solved directly.
	The resulting strategy $\sigma^{(\rII)}$ takes the shape of a~transducer $\tau\from \albet_X\tto\albet_Y$ (its set of states is just~$Q$) such that for every $\word{x}\in(\albet_X)^\omega$ we have $\varphi\big(\word{x},\tau(\word{x})\big)$.
\end{remark}

The above remark can be seen as a~modern version of a~proof of the B\"uchi-Landweber theorem~\cite{buchi_synthesis}, based on determinacy of parity games.
This means that the proposed procedure of elimination of a~game quantifier can be seen as a~parametrised version of the construction of B\"uchi and Landweber,
where we search for a~strategy that may depend on the parameter $\word{w}\in(\albet_W)^\omega$.

\subsection{Sequential strategies}

One may ask if it is possible to recover some version of \cref{rem:effective-strategy} in the presence of external parameters $W$, namely represent the strategy $\sigma^{(\rII)}$ as a~transducer.
Of course the exact strategy may depend on the global properties of $W$,
so one cannot expect to have a~single transducer $\tau\from \albet_W\times\albet_X\tto\albet_Y$ that would realise the strategy.
However, what happens if we allow the transducer to depend on a~given $\omega$\=/word $\word{w}$?

\begin{question}
	Assume that for some parameter $\word{w}\in(\albet_W)^\omega$ a~formula $\qG X\sto Y.\,\varphi(\word{w},X,Y)$ holds.
	Does it mean that there exists a~transducer $\tau_{\word{w}}\from\albet_W\times\albet_X\tto\albet_Y$ that realises a~winning strategy of Player~$\rII$ in $\theGame(\word{w},\varphi)$? In other words, we ask if we can ensure that
	\begin{equation}
	\label{eq:transducer}
	\text{for every $\word{x}\in(\albet_X)^\omega$ we have $\varphi\big(\word{w},\word{x},\tau_{\word{w}}(\word{w},\word{x})\big)$.}
	\end{equation}
\end{question}

It turns out that the answer is negative---the strategies used by Player~$\rII$ may not be made finite\=/memory, even if $\word{w}$ is known in advance.
Intuitively, this boils down to the fact that $\word{w}$ may not be ultimately periodic, while $\varphi$ may require some position\=/to\=/position correspondence between $Y$ and $W$.
More precisely, we have the following fact.

\begin{restatable}{fact}{factNoTransducer}
\label{ft:no-transducer}
	There exists a formula $\varphi(W,X,Y)$ in MSO such that for some concrete $\omega$\=/word $\word{w}\in(\albet_W)^\omega$ we have $\qG X\sto Y.\,\varphi(\word{w},X,Y)$
	while no transducer $\tau\from (\albet_W\times\albet_X)\tto\albet_Y$ satisfies \cref{eq:transducer}.
\end{restatable}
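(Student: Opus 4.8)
The goal is to exhibit a concrete MSO formula $\varphi(W,X,Y)$ and a concrete non–ultimately–periodic $\omega$-word $\word{w}$ such that Player~$\rII$ wins $\theGame(\word{w},\varphi)$, but no finite-memory transducer reading $\langle\word{w},\word{x}\rangle$ can realise such a winning strategy. The intuition is that a transducer, being finite-memory, can only ``track'' regular information about the prefix it has seen; if $\varphi$ forces the value $y_n$ produced at position $n$ to encode some aperiodic information about $\word{w}$ that the transducer cannot have recomputed from its bounded state, we get a contradiction. The cleanest way to engineer this is to pick $\word{w}\in\{0,1\}^\omega$ aperiodic and let $\varphi$ demand simply that $\word{y}=\word{w}$ shifted by one position, independently of $\word{x}$; more precisely, take $\albet_W=\albet_X=\albet_Y=\{0,1\}$ and let
\[
\varphi(W,X,Y)\ \eqdef\ \forall v.\ \big(Y(v')=W(v)\big),
\]
where $v'$ is the successor of $v$ (so $\word{y}$ must reproduce $\word{w}$ with a one-step delay; the value $y_0$ is unconstrained). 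Since Player~$\rII$ in round $n$ plays $y_n$ after seeing $w_0,\dots,w_n$ (and in particular already knows $w_{n-1}$), she can simply set $y_n\eqdef w_{n-1}$ for $n\ge 1$ and $y_0$ arbitrary; this is a winning strategy in $\theGame(\word{w},\varphi)$ for every $\word{w}$, so $\qG X\sto Y.\,\varphi(\word{w},X,Y)$ holds.

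**The obstruction for transducers.** Now suppose $\tau\from(\albet_W\times\albet_X)\tto\albet_Y$ satisfies \cref{eq:transducer} for this $\word{w}$, and fix once and for all some input $\omega$-word $\word{x}$, say $\word{x}\eqdef 0^\omega$. Feed $\langle\word{w},\word{x}\rangle$ into $\tau$; the run visits states $\rho_0\rho_1\rho_2\cdots$, and since $\tau$ has finitely many states there exist $m<n$ with $\rho_m=\rho_n$. Because $\tau$ is deterministic, from state $\rho_m$ onwards the output on any continuation depends only on that continuation of the input. Consider two inputs over $(\albet_W\times\albet_X)$: the genuine one $\langle\word{w},\word{x}\rangle$, and the ``swapped'' one $\langle\word{w}',\word{x}\rangle$ where $\word{w}'$ agrees with $\word{w}$ on the first $m$ letters but then has $w'_m\neq w_m$ and is otherwise arbitrary afterwards — but now observe this second input is no longer of the form required, since we need \cref{eq:transducer} only for the fixed $\word{w}$. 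So instead I run the argument purely with $\word{w}$ fixed and vary only $\word{x}$: by \cref{eq:transducer}, for \emph{every} $\word{x}$ we need the output at position $k+1$ to equal $w_k$. Take two inputs $\langle\word{w},\word{x}^{(1)}\rangle$ and $\langle\word{w},\word{x}^{(2)}\rangle$ that coincide up to position $m-1$, differ at position $m$ (where $\rho_m=\rho_n$ is the repeated state — so in fact one uses a pumping argument: replace the input segment between positions $m$ and $n-1$ by its repetition, shifting where letters of $\word{w}$ land relative to the state). The point is that pumping the segment $[m,n)$ of $\langle\word{w},\word{x}\rangle$ produces a new input $\langle\word{w}'',\word{x}''\rangle$ on which $\tau$'s output at each position is determined, yet $\word{w}''$ now differs from $\word{w}$ only in that a finite block is repeated — and for the transducer's output constraint to keep holding along \emph{this} input we would need $\word{w}$ itself to be eventually periodic from position $m$, because the pumped $\omega$-word has an eventually periodic tail and $\tau$'s output on it (forced by \cref{eq:transducer} applied to the original $\word{w}$ on the unpumped positions, combined with determinism on the pumped loop) would force $\word{w}$'s tail to be periodic too. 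Choosing $\word{w}$ to be, say, the characteristic word of the squares $\{n^2:n\in\N\}$ — which is aperiodic — yields the contradiction.

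**Where the real work is.** The delicate point is making the pumping argument precise: a transducer reading $\langle\word{w},\word{x}\rangle$ cannot be ``fooled'' about $\word{w}$ by varying $\word{x}$, since $\word{w}$ is held fixed in \cref{eq:transducer}. So the variation has to be engineered differently — the honest route is to let $\varphi$ encode a correspondence that is sensitive to $\word{x}$, not to $\word{w}$ directly. The fix: let $\albet_X=\{0,1\}$ play the role of a ``clock'', and let $\varphi$ say that whenever $X$ has a $1$ at position $v$, then $Y$ at position $v$ must equal the label of $W$ at the position obtained by counting the number of $1$'s of $X$ seen so far — i.e.\ $\varphi$ asks $\word{y}$ restricted to the positions marked by $\word{x}$ to reproduce the \emph{compressed} word $\word{w}$. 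Player~$\rII$ can still win online by keeping a counter (this is not finite-memory, which is exactly the point), but a finite-memory $\tau$, on inputs where $\word{x}$ has $1$'s spaced far apart, would have to output $w_c$ at the $c$-th marked position using only finitely much state between marks, forcing $\word{w}$ to be eventually periodic. With $\word{w}$ aperiodic this is impossible. The routine verifications — that $\varphi$ is expressible in MSO (it is: ``position reached by counting $1$'s'' is MSO-definable), that Player~$\rII$'s counting strategy wins, and that two distant marked positions landing on the same transducer state force a periodicity in $\word{w}$ via a standard pumping lemma for transducers — are then straightforward, and this last pumping step is the only place requiring care.
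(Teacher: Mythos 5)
Your proposal has two genuine gaps, and neither of your candidate formulae works. The first formula, $\forall v.\,Y(v')=W(v)$, imposes $y_{n}=w_{n-1}$ for $n\geq 1$, i.e.\ a one-step \emph{delay}. But a delayed copy is trivially realised by a finite-memory transducer: $\tau$ simply stores the previously read letter of $W$ in its state and outputs it at the next step. So \cref{eq:transducer} \emph{is} satisfiable for this $\varphi$ and every $\word{w}$, and there is no obstruction to prove; this is why your pumping attempt in the second paragraph collapses (as you yourself notice, you cannot vary $\word{w}$, and varying $\word{x}$ changes nothing since $\varphi$ ignores $X$). The shift must go the other way: the paper's formula requires $y_n=w_{n+1}$, a one-step \emph{prediction}. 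Player~$\rII$ still wins $\theGame(\word{w},\varphi)$ because $\word{w}$ is a fixed parameter of the game and she may base her strategy on all of it, whereas a transducer at position $n$ has only read $w_0\cdots w_n$. Taking $\word{w}=0^1 1 0^2 1 0^3 1\cdots$, a pigeonhole argument on the states visited inside a block $10^{n+1}1$ with $n\geq|Q|$ (where the transducer must output $0$ on the first $n$ zeroes and $1$ on the last) gives the contradiction.

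Your fallback formula does not repair this, because it is not expressible in MSO. The relation ``$u$ equals the number of $1$'s of $X$ occurring up to position $v$'' amounts to comparing the cardinality of an arbitrary finite set with the length of an initial segment; the corresponding language is not regular (it is a counting condition of the same nature as $\{a^nb^n\}$), hence not MSO-definable. So the premise of \cref{ft:no-transducer} --- that $\varphi$ be a formula of MSO --- would be violated. The correct construction needs no counting on the $X$-side at all: the aperiodicity is placed entirely in the parameter $\word{w}$, and the formula only demands position-to-position prediction of $\word{w}$'s next letter.
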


\begin{proof}
	Let $\albet_W=\albet_Y=\{0,1\}$ and let $\varphi(\word{w},\word{x},\word{y})$ for $\word{w}=w_0w_1w_2\cdots$ and $\word{y}=y_0y_1y_2\cdots$ say that for every $n\in\N$ we have $y_n=w_{n+1}$.
	Notice that for all $\word{w}\in(\albet_W)^\omega$ we have $\qG X\sto Y.\,\varphi(\word{w},X,Y)$ because $X$ plays no role in $\varphi$
	and it is enough for Player~$\rII$ to play consecutive values $y_0\eqdef w_1$, $y_1\eqdef w_2$, and so on.

	Let $\word{w}$ be defined as $0^1 1 0^2 1 0^3 1 \cdots\in (\albet_W)^\omega$.
	It remains to show that no transducer $\tau_{\word{w}}\from (\albet_W\times\albet_X)\tto\albet_Y$ satisfies \cref{eq:transducer}.
	Assume to the contrary that $\tau_{\word{w}}$ is such a~transducer with a~set of states $Q$ and a~transition function $\delta$.
	Fix any letter $x\in\albet_X$, and consider the unique run of $\tau$ over the $\omega$\=/word $\word{w}$ defined above and over $\word{x}=xxx\cdots\in(\albet_X)^\omega$.
	Take any $n\geq|Q|$, and concentrate on the fragment of this run reading the infix $10^{n+1}1$ of~$\word{w}$.
	The transducer should produce $0$'s while reading the first $n$ zeroes of the input fragment (because the next input letter is $0$),
	and $1$ over the last zero (because the next input letter is $1$).
	Let $\rho_0,\rho_1,\ldots,\rho_{n+1}$ be the states of $\tau$ visited over this fragment, with $\rho_0$ before the first $0$, and $\rho_{n+1}$ after the last $0$.
	By the pigeonhole principle, we have $\rho_k=\rho_\ell$ for some $k$, $\ell$ with $0\leq k<\ell\leq n$.
	For $i\in\set{0,\dots,n{-}1}$ we have $\delta(\rho_i,(0,x))=(0,\rho_{i+1})$, which applied to consecutive positions after $k$ and $\ell$ implies $\rho_{k'}=\rho_n$, where $k'=k+(n-\ell)<n$.
	But then $(0,\rho_{k'+1})=\delta(\rho_{k'},(0,x))=\delta(\rho_n,(0,x))=(1,\rho_{n+1})$.
	In other words, the transducer has no way of counting where to produce a~$1$, if the number of zeroes exceeds the number of its states.
\end{proof}

\ignore{
The idea is to take $\varphi(W,X,Y)$ which does not depend on $X$ and requires that $y_n=w_{n+1}$ for every $n\in\N$. Such a~formula in a~sense forces the strategy to use some kind of \emph{lookahead}. Then clearly $\qG X\sto Y.\,\varphi(\word{w},X,Y)$ holds, but for $\word{w}$ which is not ultimately periodic no transducer can realise the respective strategy. A~complete proof is given in \cref{app:no-transducer}.
}

This negative answer can be explained from two perspectives. One, directly suggested by the above example, focuses on the need of a~lookahead---if $\tau_{\word{w}}$ was able to perform some lookahead to the future of the parameter word ${\word{w}}$, then it could easily realise the respective strategy. This observation is formalised in \cref{lem:allow-lookahead}, where the lookahead is allowed. This approach follows similar lines as the results of Winter and Zimmermann~\cite{winter-delay-games}, where the authors study games with lookahead.

Another point of view is that in contrast to the construction by B\"uchi and Landweber~\cite{buchi_synthesis} (see also~\cite{thomas-solving}), the arena of $\Game(\word{w},\CD)$ is infinite. Thus, some subtle synchronisation between the variables may go on indefinitely. To avoid this problem, we consider the notion of a~formula that \emph{depends separately} on one variable (see \cref{ssec:separate}). It turns out that in this case the infiniteness of the arena stops being a~problem and the strategies can again be realised by transducers, as stated in \cref{thm:transducer}.

\subsection{Uniformisation by transducers with lookahead}
\label{ssec:uniformisation}

Before we move on, we need to first show how \emph{uniformised} relations can be realised by transducers with lookahead. We say that an~MSO formula $\psi(W,Y)$ is \emph{uniformised} if for every~$\word{w}$ there exists at most one $\word{y}$ such that $\psi(\word{w},\word{y})$ holds.
The next fact states that a~partial function described by a~uniformised MSO\=/formula can be realised by a~transducer composed with a~lookahead.
This fact is rather general and almost folklore; it relies on the composition method for MSO~\cite{shelah_composition} (expressed by Wilke algebras in our setup).

\begin{restatable}{fact}{ftUniformisedLookahead}
\label{ft:uniformised-lookahead}
	Assume that $\psi(W,Y)$ is uniformised.
	Then, one can effectively construct a~homomorphism $\alpha\from \Wilk{(\albet_W)}\to S$ onto a~finite Wilke algebra $S$
	together with a~transducer $\tau\from \albet_W\tto (\Winf{S}\to \albet_Y)$
	such that for every $\word{w}\in(\albet_W)^\omega$ for which $\exists Y.\,\psi(\word{w},Y)$ holds we have
	\[ \psi\big(\word{w}, \funapp{\tau(\word{w})}{\lk_\alpha(\word{w})}\big).\]
\end{restatable}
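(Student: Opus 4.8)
The plan is to combine the composition method for MSO (via Wilke algebras) with a reading of the uniformisation property as a ``predictability'' statement about the unique witness $\word{y}$. The point is that, once we fix enough algebraic information about the past of $\word{w}$ (a prefix value in $\Wfin{S}$) and about the future of $\word{w}$ (a tail value in $\Winf{S}$), the corresponding letter $y_n$ of the unique $\word{y}$ satisfying $\psi(\word{w},\word{y})$ is forced, and this forcing is computable. Concretely, first I would apply \cref{thm:recognition} to the language of $\psi(W,Y)$ over the product alphabet $\albet_W\times\albet_Y$, obtaining a finite Wilke algebra $T$ and an onto homomorphism $\beta\from\Wilk{(\albet_W\times\albet_Y)}\to T$ with a set $F\subseteq\Winf{T}$ recognising $\{\langle\word{w},\word{y}\rangle\mid\psi(\word{w},\word{y})\}$. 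Projecting to the first coordinate gives a homomorphism $\alpha\from\Wilk{(\albet_W)}\to S$ onto a finite Wilke algebra $S$ (one may take $S$ to be the image of $T$ under the projection, or simply the syntactic algebra of $\exists Y.\,\psi$); by \cref{thm:recognition} we may and do require $\alpha$ onto. This $\alpha$ is the homomorphism in the statement, and it is what drives the lookahead $\lk_\alpha(\word{w})$.

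Next I would design the transducer $\tau\from\albet_W\tto(\Winf{S}\to\albet_Y)$. Its state space is $\Peps{\Wfin{T}}$ (the monoid obtained from the finite semigroup $\Wfin{T}$ by adjoining $\epsilon$), with initial state $\epsilon$; this state is meant to track $\beta\big(\langle w_0y_0,\ldots,w_{n-1}y_{n-1}\rangle\big)$, i.e.\ the $T$\=/value of the common prefix of $\word{w}$ and the witness $\word{y}$ read so far. The subtle part is the output: at position $n$, having accumulated a prefix value $p\in\Peps{\Wfin{T}}$ and reading the current letter $w_n\in\albet_W$, the transducer must emit a \emph{function} $g\from\Winf{S}\to\albet_Y$; for $s\in\Winf{S}$ we set $g(s)$ to be the unique letter $b\in\albet_Y$ such that there exists a $T$\=/value $q\in\Winf{T}$ with $\Winf{\alpha\restriction}$... more carefully: $g(s)$ is the unique $b$ for which some $\omega$\=/word $\word{w}'\word{y}'$ of $T$\=/tail value lying in $F$ has $S$\=/projection $w_n\cdot s$, first letter pair $(w_n,b)$, and prefix value $p$; formally one shows, using the uniformisation hypothesis, that the set of such $b$ is a singleton whenever the relevant global word admits a witness. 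Then the state updates to $p\cdot\beta(w_nb)$. Once $g$ is emitted, applying it to $(\lk_\alpha(\word{w}))_n=\alpha(w_{n+1}w_{n+2}\cdots)\in\Winf{S}$ produces the intended $y_n$. Finiteness of the state space and of $\albet_Y$, together with finiteness of $(\Winf{S}\to\albet_Y)$, makes $\tau$ a legitimate transducer, and all the ``unique $b$'' choices are computable from the multiplication tables of $T$ and the set $F$.

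Finally I would verify correctness: assuming $\exists Y.\,\psi(\word{w},Y)$, let $\word{y}$ be the unique witness. By induction on $n$ one shows that the run of $\tau$ on $\word{w}$ reaches state $p_n=\beta\big(\langle w_0y_0\cdots w_{n-1}y_{n-1}\rangle\big)$ (using \cref{eq:assoc}\=/style associativity, i.e.\ the homomorphism property of $\beta$ on finite prefixes), and that the $n$\=/th output function $g_n$ applied to $\alpha(w_{n+1}w_{n+2}\cdots)$ yields exactly $y_n$: indeed $y_n$ is \emph{some} value making $g_n$'s defining condition hold (witnessed by the genuine decomposition $\word{w}\word{y}=(w_0y_0\cdots w_{n-1}y_{n-1})(w_ny_n)(w_{n+1}y_{n+1}\cdots)$ with tail value in $F$), and by uniformisation it is the \emph{only} such value, so $g_n(\alpha(w_{n+1}\cdots))=y_n$. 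Hence $\funapp{\tau(\word{w})}{\lk_\alpha(\word{w})}=\word{y}$ and $\psi\big(\word{w},\funapp{\tau(\word{w})}{\lk_\alpha(\word{w})}\big)$ holds.

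The main obstacle I anticipate is the well\=/definedness (singleton\=/ness) of the output function $g_n$. A priori, for a fixed prefix value $p$, current letter $w_n$, and lookahead value $s\in\Winf{S}$, there could be several $\word{y}$\=/letters $b$ extendable to a global witness, because the lookahead records only the $S$\=/value of the tail of $\word{w}$, not $\word{w}$ itself, and different concrete tails with the same $\alpha$\=/value might admit different continuations of the witness. The resolution is that for the \emph{actual} $\word{w}$ we care about, the lookahead value does pin down $y_n$ by uniqueness of the global witness; so rather than demanding $g_n$ be a total well\=/defined function on all of $\Winf{S}$, I would define $g_n(s)$ by an arbitrary (computable) choice when the set of candidate $b$'s is not a singleton, and then prove that on the relevant argument $s=\alpha(w_{n+1}w_{n+2}\cdots)$ the set \emph{is} a singleton equal to $\{y_n\}$. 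This sidesteps the difficulty cleanly: the transducer is always well defined, and correctness is recovered precisely on inputs admitting a witness, which is all that the statement asks.
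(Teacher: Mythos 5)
Your overall strategy (composition method plus uniqueness of the witness) is the right one, but as written the proof has two genuine gaps, both of which the paper's proof is specifically engineered to avoid.

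First, the homomorphism $\alpha$ is never actually constructed, and the crucial singleton claim does not follow from uniformisation with any of the choices you suggest. ``Projecting $\beta\from \Wilk{(\albet_W\times\albet_Y)}\to T$ to the first coordinate'' is not an operation that yields a homomorphism on $\Wilk{(\albet_W)}$: there is no map on $T$ that forgets the $\albet_Y$\=/component, and the syntactic algebra of $\exists Y.\,\psi$ is far too coarse (it carries no information about individual letters of the witness). More importantly, your defining condition for $g_n(s)$ existentially quantifies over \emph{all} infinite words whose $\alpha$\=/value equals $s$, not over the actual suffix $w_{n+1}w_{n+2}\cdots$; uniformisation only fixes the witness for the actual $\word{w}$, so two different words sharing the value $s$ may force two different letters $b$, and the candidate set need not be a singleton on the relevant argument. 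Making the lookahead value fine enough to pin down $y_n$ is exactly the content one has to supply. The paper does this by recognising the position\=/marked languages $\psi_y(W,Z)$ over the alphabet $\albet_W\times\{0,1\}$ (where $Z=\chi_n$ marks a single position): then $\alpha=\add_0\circ\beta$ depends on $\word{w}$ alone, and $\alpha(w_0\cdots w_{n-1})\cdot\beta(w_n,1)\cdot\alpha(w_{n+1}w_{n+2}\cdots)\in F_y$ is \emph{literally equivalent} to $\psi_y(\word{w},\chi_n)$, i.e.\ to $y_n=y$, so uniformisation applies directly. (Your route could in principle be repaired via a powerset\=/type algebra recording the set of achievable $T$\=/values over a suffix, but showing that this is a Wilke algebra homomorphism on infinite products is nontrivial extra work that your proposal does not address.)

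Second, your transducer is not a transducer: the state update $p\mapsto p\cdot\beta(w_nb)$ requires knowing $b=y_n$, which in your construction is only determined after applying the emitted function to the lookahead value $\alpha(w_{n+1}w_{n+2}\cdots)$ — information the machine never receives. Hence the state at time $n{+}1$ is not a function of $w_0\cdots w_n$. One can patch this by making the state itself a function from $\Winf{S}$ to prefix values (this is the bookkeeping the paper performs in the proof of \cref{lem:allow-lookahead}), but the cleanest fix is the paper's: track only $\alpha(w_0\cdots w_{n-1})\in\Peps{\Wfin{S}}$, which depends on the input alone, and push all dependence on $\word{y}$ into the single marked position.
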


In other words, for an~input $\omega$\=/word $\word{w}=w_0w_1w_2\cdots$ and $f_0f_1f_2\cdots\eqdef \tau(\word{w})$ we consider $h_n\eqdef \alpha(w_{n+1}w_{n+2}\cdots)$ and $y_n\eqdef f_n(h_n)$ defined for $n=0,1,2,\ldots$, and claim that $\psi(\word{w},y_0y_1y_2\cdots)$ holds.

\begin{proof}
	For $n\in\N$ let $\chi_n\in\set{0,1}^\omega$ denote the $\omega$\=/word having $1$ at the position $n$, and zeroes everywhere else.
	For each $y\in\albet_Y$ consider a~formula $\psi_y(W,Z)$ such that, assuming $\exists Y.\,\psi(\word{w},Y)$, we have $\psi_y(\word{w},\chi_n)$
	if the letter at position $n$ of the unique $\word{y}$ such that $\psi(\word{w},\word{y})$ holds equals $y$;
	such a formula can be easily constructed out of $\psi$.

	Apply \cref{thm:recognition} to the tuple of languages defined by formulae $\big(\psi_y(W,Z)\big)_{y\in\albet_Y}$ to obtain a~homomorphism $\beta\from\Wilk{(\albet_W\times\{0,1\})}\to S$ onto a~finite Wilke algebra $S$ together with a~tuple of sets $(F_y)_{y\in \albet_Y}$ such that $\psi_y(\word{w},\word{z})$ holds if and only if $\beta\big(\langle\word{w},\word{z}\rangle\big)\in F_y$.

	Let $\add_0\from\Wilk{(\albet_W)}\to \Wilk{(\albet_W\times\{0,1\})}$
	be the homomorphism adding $0$ on the second coordinate of all letters in a~given word.
	Then as $\alpha\from \Wilk{(\albet_W)}\to S$ we take $\add_0\circ\beta$.

	Next, we construct the transducer $\tau\from \albet_W\tto (\Winf{S}\to \albet_Y)$.
	It remembers the value under~$\alpha$ of the prefix read so far.
	To this end, its set of states is the monoid $\Peps{\Wfin{S}}$ obtained from $\Wfin{S}$ by adding a~formal neutral element $\epsilon$. The initial state is $\epsilon$.
	For $v\in\Peps{\Wfin{S}}$ and $w\in\albet_W$ let
	\[\delta\big(v, w\big)\eqdef \big(f, v\cdot \alpha(w)\big),\]
	where $f\from \Winf{S}\to \albet_Y$ is defined for every $h\in \Winf{S}$ as follows: $f(h)$ is any fixed letter $y\in\albet_Y$ such that $v\cdot\beta(w,1)\cdot h\in F_y$,
	or just any element of $\albet_Y$ if $v\cdot\beta(w,1)\cdot h\not\in F_y$ for all $y\in\albet_Y$
	(morally, one should think that there is a unique such $y$;
	however strictly speaking this needs not to be true, which is caused by words $\word{w}$ for which $\exists Y.\,\psi(\word{w},Y)$ does not hold).

	Fix now an~input $\omega$\=/word $\word{w}=w_0w_1w_2\cdots\in(\albet_W)^\omega$ such that $\exists Y.\,\psi(\word{w},Y)$ holds.
	After reading a prefix $w_0w_1\cdots w_{n-1}$, the state of $\tau$ is $\alpha(w_0w_1\cdots w_{n-1})$ (or just $\epsilon$ if $n=0$).
	It follows that the $n$\=/th letter of $\funapp{\tau(\word{w})}{\lk_\alpha(\word{w})}$ is a~letter $y$ that satisfies
	\[\alpha(w_0w_1\cdots w_{n-1})\cdot\beta(w_n,1)\cdot\alpha(w_{n+1}w_{n+2}w_{n+3}\cdots)\in F_y.\]
	This is the case precisely when $\psi_y(\word{w},\chi_n)$ holds, and because $\psi$ is uniformised, this holds for precisely one $y$,
	which is the letter at position $n$ in the unique $\word{y}$ such that $\psi(\word{w},\word{y})$ holds.
	We thus obtain that $\psi\big(\word{w}, \funapp{\tau(\word{w})}{\lk_\alpha(\word{w})\big)}$ holds, as required.
\end{proof}

\subsection{Allow lookahead}

Using \cref{ft:uniformised-lookahead} we now show that a~winning strategy of Player~$\rII$ for a~game quantifier can be realised by a~transducer composed with a lookahead.

\begin{restatable}{lemma}{lemAllowLookahead}
\label{lem:allow-lookahead}
	Given a formula $\varphi(W,X,Y)$, one can effectively construct a~homomorphism $\alpha\from \Wilk{(\albet_W)}\to S$ onto a~finite Wilke algebra $S$
	together with a~transducer $\tau\from (\albet_W\times\albet_X)\tto (\Winf{S}\to \albet_Y)$ such that
	for every $\omega$\=/word $\word{w}\in(\albet_W)^\omega$ satisfying $\qG X\sto Y.\,\varphi(\word{w},X,Y)$, and for every $\word{x}\in(\albet_X)^\omega$ we have
	\[\varphi\big(\word{w},\word{x},\funapp{\tau(\word{w},\word{x})}{\lk_\alpha(\word{w})}\big).\]
\end{restatable}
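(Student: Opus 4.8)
The plan is to combine the game-quantifier elimination from \cref{lem:elim-1-game-quant} with the uniformisation-by-transducer result \cref{ft:uniformised-lookahead}. The key observation is that \cref{lem:elim-1-game-quant} does more than eliminate the quantifier: it produces a formula $\psi^{(\rII)}(W,\Sigma)$ (from \cref{claim:formula-for-strategy}) such that $\psi^{(\rII)}(\word{w},\word{\sigma})$ holds precisely when $\word{\sigma}$ encodes a positional winning strategy of Player~$\rII$ in the product game $\theGame(\word{w},\CD)$, and the game quantifier is equivalent to $\exists\Sigma.\,\psi^{(\rII)}(W,\Sigma)$. So whenever $\qG X\sto Y.\,\varphi(\word{w},X,Y)$ holds, there is at least one such $\word{\sigma}$.

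First I would make the existential witness canonical so that uniformisation applies. The alphabet of $\Sigma$ is the finite set $\albet_\Sigma\eqdef(Q\times\albet_X\to\albet_Y)$, where $Q$ is the state set of the deterministic automaton $\CD$ equivalent to $\varphi$. Fix an arbitrary linear order on $\albet_\Sigma$, and let $\psi(W,\Sigma)$ be the MSO formula saying ``$\psi^{(\rII)}(W,\Sigma)$ holds and $\Sigma$ is lexicographically least among all $\Sigma'$ with $\psi^{(\rII)}(W,\Sigma')$''; this is expressible in MSO and is uniformised in the sense of \cref{ssec:uniformisation}. Moreover $\exists\Sigma.\,\psi(W,\Sigma)$ is equivalent to $\exists\Sigma.\,\psi^{(\rII)}(W,\Sigma)$, hence to $\qG X\sto Y.\,\varphi(W,X,Y)$. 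Now apply \cref{ft:uniformised-lookahead} to $\psi(W,\Sigma)$: it yields a homomorphism $\alpha\from\Wilk{(\albet_W)}\to S$ onto a finite Wilke algebra and a transducer $\tau_\Sigma\from\albet_W\tto(\Winf{S}\to\albet_\Sigma)$ such that for every $\word{w}$ satisfying $\qG X\sto Y.\,\varphi(\word{w},X,Y)$ we have $\psi\big(\word{w},\funapp{\tau_\Sigma(\word{w})}{\lk_\alpha(\word{w})}\big)$; call the resulting $\omega$-word $\word{\sigma}=\sigma_0\sigma_1\cdots\in(\albet_\Sigma)^\omega$, a positional winning strategy for Player~$\rII$ in $\theGame(\word{w},\CD)$.

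Next I would assemble the desired transducer $\tau\from(\albet_W\times\albet_X)\tto(\Winf{S}\to\albet_Y)$ by running the automaton $\CD$ in parallel with $\tau_\Sigma$ and applying the produced strategy function at each step. Concretely, $\tau$'s state is a pair: the current state $q\in Q$ of $\CD$, together with the state of $\tau_\Sigma$; it starts in $(\iota_\CD,\iota_{\tau_\Sigma})$. On reading input letter $(w_n,x_n)$ with $\tau_\Sigma$ currently in state $s$ and $\CD$ in state $q$, the transducer $\tau_\Sigma$ outputs some $f_n^\Sigma\from\Winf{S}\to\albet_\Sigma$ and moves to a new state $s'$; the output letter of $\tau$ is the function $f_n\from\Winf{S}\to\albet_Y$ defined by $f_n(h)\eqdef\big(f_n^\Sigma(h)\big)(q,x_n)$, i.e.\ first compute the strategy letter $\sigma_n=f_n^\Sigma(h)$ and then apply $\sigma_n(q,x_n)$ to get $y_n\in\albet_Y$; finally $\CD$ moves to the state $q'$ with $\delta_\CD(q,(w_n,x_n,y_n))=(k_n,q')$, and $\tau$'s new state is $(q',s')$. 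Note this is well-defined as a transducer because the lookahead value $h$ is substituted only afterwards via $\fcirc$, while the state update of $\CD$ depends on $y_n$ which depends on $h$ — so care is needed: in fact I would have $\tau$ output instead, at step $n$, the function sending $h$ to $y_n$, and run a \emph{separate} deterministic simulation; the cleanest formulation is to let $\tau$'s state carry $q$ and to define the $\CD$-update using the already-fixed strategy, observing that once $\word{w}$ (hence $\word{\sigma}=\funapp{\tau_\Sigma(\word{w})}{\lk_\alpha(\word{w})}$) and $\word{x}$ are fixed, the play $x_0,y_0,x_1,y_1,\dots$ in $\theGame(\word{w},\CD)$ is determined, and $\word{y}=\funapp{\tau(\word{w},\word{x})}{\lk_\alpha(\word{w})}$ is exactly Player~$\rII$'s responses. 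Since $\word{\sigma}$ is a winning strategy, the resulting play satisfies the parity condition, so $\CD$ accepts $\langle\word{w},\word{x},\word{y}\rangle$, i.e.\ $\varphi(\word{w},\word{x},\word{y})$ holds.

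The main obstacle is the one flagged above: the transducer $\tau$ must be genuinely sequential, yet the choice of $y_n$ depends on the lookahead value $h_n=\alpha(w_{n+1}w_{n+2}\cdots)$ through $\sigma_n$, and the subsequent state of $\CD$ depends on $y_n$. I would resolve this by a careful bookkeeping argument: $\tau$ does not need to commit to a concrete $y_n$ to update its finite memory, because the \emph{right} choice of $y_n$, as a function of $h_n$, together with the current $\CD$-state $q$ and input letter $x_n$, determines the next $\CD$-state as a function of $h_n$ — but $\CD$ is deterministic, so I should instead track, as part of $\tau$'s state, enough information so that after the lookahead is revealed everything is consistent. The clean fix (and what I expect the authors do) is to observe that $\funapp{\tau_\Sigma(\word{w})}{\lk_\alpha(\word{w})}$ is \emph{itself} of the form: apply at position $n$ a function of a lookahead that only looks strictly to the future, so the dependence on the future is entirely localised in the $\fcirc$ operation; one then composes the strategy $\word{\sigma}$ with the deterministic automaton run $\rho_0\rho_1\cdots$ of $\CD$ over $\langle\word{w},\word{x},\word{y}\rangle$, and this composition is realisable by a transducer precisely because $\CD$'s run over a fixed input is a sequential object. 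Making this composition precise — that ``transducer output of functions, then apply lookahead'' is closed under post-composition with a deterministic automaton run that feeds back into the input — is the technical heart, and I would prove it as a small self-contained lemma about transducers with lookahead before deducing \cref{lem:allow-lookahead}.
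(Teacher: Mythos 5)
Your proposal follows the same skeleton as the paper's proof (take $\psi^{(\rII)}(W,\Sigma)$ from \cref{claim:formula-for-strategy}, uniformise it, apply \cref{ft:uniformised-lookahead}, then compose the strategy\-/producing transducer with $\CD$), but it has two genuine gaps. The first is the uniformisation step: you define $\psi(W,\Sigma)$ by requiring $\Sigma$ to be lexicographically least among the winning\-/strategy encodings, and you claim $\exists\Sigma.\,\psi(W,\Sigma)$ is equivalent to $\exists\Sigma.\,\psi^{(\rII)}(W,\Sigma)$. This fails: a non\-/empty $\omega$\-/regular language need not contain a lexicographically least element. For instance $\{0^n10^\omega\mid n\in\N\}$ with $0<1$ is a strictly decreasing chain whose infimum $0^\omega$ lies outside the language, and sets of winning strategies of the shape ``eventually always play $a$'' exhibit exactly this phenomenon. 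So existence of a witness is not preserved, and the first bullet of \cref{lem:uniformisation} is violated by your construction. Uniformisation over $\omega$\-/words is genuinely non\-/trivial (it may require looking into the future); the correct move is simply to invoke \cref{lem:uniformisation}, which is what the paper does.

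The second gap is the one you yourself flag and then defer: the composition of $\tau_\Sigma$ with $\CD$. You correctly observe the circularity --- $y_n$ depends on the lookahead value $h_n$ through $\sigma_n$, and the next state of $\CD$ depends on $y_n$, so a sequential transducer cannot carry the state $q\in Q$ of $\CD$ directly in its memory --- but you do not resolve it; you announce ``a small self\-/contained lemma'' without proving it, and your intermediate suggestion (``let $\tau$'s state carry $q$'') is exactly what cannot work. The paper's device is to store the current state of $\CD$ as a \emph{function} $r\from\Winf{S}\to Q$ of the future lookahead value. This is well defined as a finite\-/state update because the lookahead at the current position determines the lookahead at every earlier position via $h=\alpha(w)\cdot h'$; hence, once the true value $h'$ is substituted at position $n$, all earlier outputs $y_0,\dots,y_{n-1}$ and therefore the state $r(h')$ become mutually consistent, and a short induction establishes $y_n=\ell_n(h_n)(q_n,x_n)$ for all $n$. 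Without this (or an equivalent) mechanism the construction does not go through, so the proposal is incomplete at precisely its decisive step.
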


In other words, if for every $n\in\N$ as $f_n\in(\Winf{S}\to\albet_Y)$ we take the output letter produced by $\tau$ after reading the prefixes $w_0w_1\dots w_n$ of $\word{w}$ and $x_0x_1\dots x_n$ of $\word{x}$
(so that $f_0f_1f_2\dots=\tau(\word{w},\word{x})$), and we consider $h_n\eqdef \alpha(w_{n+1}w_{n+2}w_{n+3}\ldots)$ and $y_n\eqdef f_n(h_n)$, then $\varphi(\word{w},\word{x},y_0y_1y_2\ldots)$ holds.
Intuitively, the above lemma says that one can construct the resulting $\omega$\=/word $\word{y}$ by a~transducer, assuming that we allow a~lookahead over the whole $\omega$\=/word $\word{w}$
(note that there is no lookahead over $\word{x}$: moves of Player~$\rII$ cannot be allowed to depend on future moves of Player~$\rI$).

This lemma is essentially a~composition of \cref{ft:uniformised-lookahead} with the following lemma. The only technical difficulty lies in the fact that the lookahead is given after the transducer has read the whole input $\omega$\=/word. A~complete proof of \cref{lem:allow-lookahead} is given in \cref{app:allow-lookahead}.

\begin{lemma}[\cite{lifsches_skolem,rabinovich_decidable,siefkes_monadic}]\label{lem:uniformisation}
	For every MSO formula $\psi(W,Y)$ one can construct a~uniformised formula $\psi_\mathsf{u}(W,Y)$ such that
	\begin{itemize}
	\item for all $\omega$\=/words $\word{w}$ we have $\big(\exists Y.\,\psi(\word{w},Y)\big)\Rightarrow(\exists Y.\,\psi_\mathsf{u}(\word{w},Y)\big)$, and
	\item for all $\omega$\=/words $\word{w}$, $\word{y}$ we have $\psi(\word{w},\word{y})\Leftarrow\psi_\mathsf{u}(\word{w},\word{y})$.
	\end{itemize}
\end{lemma}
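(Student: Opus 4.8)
The plan is to route the statement through automata and then take $\psi_\mathsf{u}$ to be an MSO formula describing a single, canonically chosen witness $\word{y}$ for $\word{w}$: one produced by a deterministic, left-to-right greedy rule that reads $\word{w}$ and commits, position by position, to the ``best'' next letter of $\word{y}$.

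Concretely, first I would use the B\"uchi--McNaughton translation to obtain a deterministic parity automaton $\CD$ over $\albet_W\times\albet_Y$ equivalent to $\psi$, with state set $Q$, transition function $\delta$ and priority function $\Omega$. For each state $q\in Q$ the set of words that $\CD$, started in $q$, accepts after projecting away the $Y$\=/track is $\omega$\=/regular (projection preserves $\omega$\=/regularity), so it is defined by an MSO formula $\mathrm{Live}_q(W)$ expressing ``from $q$, reading this word on the $W$\=/track, $\CD$ admits an accepting completion on the $Y$\=/track''. The formula $\psi_\mathsf{u}(W,Y)$ then asserts that there is a run $\rho=q_0q_1q_2\cdots$ of $\CD$ on $\langle W,Y\rangle$ such that for every position $n$ the letter $y_n$ is the $\leq$\=/least element (for a fixed linear order on $\albet_Y$) among those $b\in\albet_Y$ for which $\mathrm{Live}_{\delta(q_n,(w_n,b))}$ holds of the suffix $w_{n+1}w_{n+2}\cdots$ \emph{and} an additional \emph{progress} predicate (towards realising the parity condition of $\CD$) is met. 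Granting such a progress predicate, the three requirements follow at once: the rule is deterministic, so at most one $\word{y}$ satisfies $\psi_\mathsf{u}(\word{w},\cdot)$; on every $\word{w}$ with $\exists Y.\,\psi(\word{w},Y)$ the admissible sets of letters are always non\=/empty, so the rule is total there, giving $\exists Y.\,\psi(\word{w},Y)\Rightarrow\exists Y.\,\psi_\mathsf{u}(\word{w},Y)$; and the progress predicate is exactly what certifies that the $\word{y}$ produced genuinely satisfies $\psi$, giving $\psi_\mathsf{u}(\word{w},\word{y})\Rightarrow\psi(\word{w},\word{y})$.

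The only real difficulty — and where I expect the bulk of the work to sit — is designing this progress predicate so that it is simultaneously (i) MSO\=/expressible and (ii) always satisfiable while staying inside the ``live'' region. Plain ``keep some accepting completion alive'' greediness is insufficient: the resulting infinite run can fail the parity condition even though at every finite stage an escape existed (the automaton can be coaxed into looping on an odd priority forever). The standard remedy is to target, from each point on, the least even priority $p$ still achievable as the $\limsup$ of some live completion, and to insist on a well\=/founded rank that strictly decreases between consecutive visits to priority $p$ — in the spirit of progress measures for parity games, or of the ``shortest path to the next accepting state'' argument for B\"uchi automata — thereby forcing $p$ to recur infinitely often while no higher priority survives. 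The subtle point is that these ranks must be kept \emph{bounded} so that the MSO formula can track them; this is arranged by carrying out the argument inside the deterministic automaton (respectively, over a subset construction in a B\"uchi presentation), where the relevant distances range over a finite set. Once a bounded progress predicate of this kind is in place, everything else is routine bookkeeping, and the construction is effective since $\CD$, the formulae $\mathrm{Live}_q$ and the rank data are all computable. This is precisely the content of the selection theorems of Siefkes, of Rabinovich, and of Lifsches and Shelah, to which we refer for the remaining details.
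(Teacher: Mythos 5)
First, a point of reference: the paper does not actually prove \cref{lem:uniformisation} --- it imports it from the literature (Lifsches--Shelah, Rabinovich, Siefkes), and the surrounding discussion even notes that the justification given by Lifsches and Shelah is itself defective. So there is no in-paper argument to compare yours against, and your sketch must stand on its own. Its architecture is a legitimate, standard route to the selection theorem: a deterministic parity automaton $\CD$ for $\psi$, MSO\=/definable liveness predicates $\mathrm{Live}_q$, and a greedy left\=/to\=/right choice of the least live letter refined by a progress condition. You also correctly identify the crux, namely that liveness alone does not force the constructed run to be accepting.

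Your resolution of that crux, however, has a genuine gap. You assert that the progress ranks ``must be kept bounded so that the MSO formula can track them'' and that boundedness is ``arranged by carrying out the argument inside the deterministic automaton \dots where the relevant distances range over a finite set.'' This is false: the distance from a configuration $(q,\,w_nw_{n+1}\cdots)$ to the earliest position at which a target even priority can be realised by a live continuation depends on the suffix of $\word{w}$, not only on the state, and is unbounded in $n$. This is exactly the phenomenon the paper stresses elsewhere --- the arena of $\theGame(\word{w},\CD)$ is $\N\times Q$, not $Q$, which is why the B\"uchi--Landweber argument (where such distances \emph{are} bounded by the size of a finite arena) does not transfer. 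The repair is not boundedness but a different use of MSO: either quantify existentially over an auxiliary set of checkpoint positions and demand that the target priority occur between consecutive checkpoints (MSO tolerates unbounded gaps), or factorise $\word{w}$ via Ramsey's theorem into blocks of a common idempotent type and select $\word{y}$ blockwise --- the composition\=/method route the paper itself uses in \cref{ft:uniformised-lookahead} and in the proof of \cref{thm:transducer}. Two further points are glossed over: for a genuine parity (rather than B\"uchi) condition, ``the least even priority still achievable as the $\limsup$ of some live completion'' can change along the run, so the targeting needs a stabilisation argument; and your closing appeal to the selection theorems of Siefkes, Rabinovich, and Lifsches--Shelah ``for the remaining details'' defers precisely the part of the lemma that was to be established. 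As it stands, the sketch is a plausible plan with its central step unsupported.
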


It may be worth mentioning that the original proof of the above lemma as~given by Lifsches and Shelah~\cite[Theorem~6.3]{lifsches_skolem} says ``By [1].'',
where ``[1]'' is the work of B\"uchi and Landweber on synthesis~\cite{buchi_synthesis}.
This is incorrect, because B\"uchi and Landweber show how to win games using finite\=/state strategies, while a~uniformisation may in general depend on the future.
More precisely, it is always possible to uniformise a~formula $\psi(X,Y)$ even if $\qG X\sto Y.\,\psi(X,Y)$ does not hold
(imagine $\psi(X,Y)$ saying that the first letter of $Y$ is $1$ if and only if infinitely many letters of $X$ are $1$).
However, the mistake made by Lifsches and Shelah may not be a~coincidence: \cref{lem:allow-lookahead} shows that uniformisation is indeed possible using transducers with lookahead, while \cref{thm:transducer} proved later shows how to eliminate the lookahead when $\psi(W,X,Y)$ depends separately on the involved variable $Y$ (see \cref{def:separate}).

\subsection{Separate coordinates}
\label{ssec:separate}

\cref{ft:no-transducer} tells us that in general the lookahead $\lk_\alpha$ in \cref{lem:allow-lookahead} is necessary when we want to realise a~winning strategy by a transducer.
The example from \cref{ft:no-transducer} needed a~lookahead only to check the letter on the next position of the parameter $\word{w}$.
One can imagine another example: in order to win, Player~$\rII$ should output $y_n$ that equals the first letter in $\set{a,b}$ among $w_{n+1},w_{n+2},w_{n+3},\dots$
(skipping all letters $c$ before it).
Here the future interval checked by the lookahead needs to be unbounded, but still finite.
On the other hand, the lookahead never needs to check ``the whole infinite future'' of $\word{w}$, since $\word{w}$ is known in advance.
To see the intuitions for this, assume that the value of $y_n$ needed to win depends on whether letter $a$ belongs to the set $\{w_{n+1},w_{n+2},w_{n+3},\ldots\}$.
Here a~winning strategy seems to depend on the whole future of $\word{w}$, but since we know $\word{w}$ in advance, we may avoid this:
either $\word{w}$ contains infinitely many $a$ (and then there is always some $a$ in the future),
or the last $a$ occurs on some position $n$ (and then the transducer may count to the fixed number~$n$).
This suggests that it should be possible to create a~transducer which does not use a lookahead, but produces letters of $\word{y}$ with some delay, needed to check future properties of~$\word{w}$
(formally, this is impossible to realise, because there is also an issue of synchronisation between $\word{y}$ and $\word{x}$).

The main result of this section states that
we may avoid the aforementioned need for a~lookahead (or a~delay) once we disallow $\varphi$ to enforce any position\=/to\=/position correspondence between $\word{w}$ and $\word{y}$.
More precisely, we introduce the following definition.

\begin{definition}
\label{def:separate}
We say that a~formula $\varphi(W,X,Y)$ depends \emph{separately on~$Y$} if it is a~finite Boolean combination of formulae $\psi_i(W,X)$ and formulae $\gamma_i(Y)$.
\end{definition}

\begin{restatable}{theorem}{thmTransducer}
\label{thm:transducer}
	Assume that $\varphi(W,X,Y)$ depends separately on $Y$ and that $\word{w}\in(\albet_W)^\omega$ is such that $\qG X\sto Y.\,\varphi(\word{w},X,Y)$ holds.
	Then, there exists a~transducer $\tau_{\word{w}}\from\albet_W\times\albet_X\tto\albet_Y$ such that \cref{eq:transducer} holds, that is,
	for every $\word{x}\in(\albet_X)^\omega$ we have $\varphi\big(\word{w},\word{x},\tau_{\word{w}}(\word{w},\word{x})\big)$.
\end{restatable}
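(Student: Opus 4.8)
The plan is to combine \cref{lem:allow-lookahead} with the separateness assumption on $\varphi$ to eliminate the lookahead $\lk_\alpha(\word{w})$ in favour of a finite amount of information that the transducer can precompute from $\word{w}$. Since $\varphi(W,X,Y)$ depends separately on $Y$, write it as a Boolean combination of formulae $\psi_i(W,X)$ and $\gamma_i(Y)$; by moving to disjunctive normal form it suffices to handle the case where $\varphi(\word{w},\word{x},\word{y})$ is equivalent to $\big(\psi(\word{w},\word{x}) \Rightarrow \gamma(\word{y})\big)$ for an MSO formula $\gamma(Y)$, after absorbing the $W,X$-part into a single regular condition $\psi$. (The key point is that the \emph{only} way $\word{y}$ enters is through the single property $\gamma(\word{y})$, an $\omega$-regular property of $\word{y}$ alone.)

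First I would record what \cref{lem:allow-lookahead} gives: a homomorphism $\alpha\from\Wilk{(\albet_W)}\to S$ onto a finite Wilke algebra and a transducer $\tau\from(\albet_W\times\albet_X)\tto(\Winf{S}\to\albet_Y)$ such that, for our fixed $\word{w}$ (which satisfies $\qG X\sto Y.\,\varphi(\word{w},X,Y)$) and every $\word{x}$, writing $h_n\eqdef\alpha(w_{n+1}w_{n+2}\cdots)\in\Winf{S}$ and $f_0f_1\cdots\eqdef\tau(\word{w},\word{x})$ and $y_n\eqdef f_n(h_n)$, we have $\varphi(\word{w},\word{x},y_0y_1\cdots)$. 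The obstacle is that $\word{y}$ is produced using the lookahead values $h_n$, which the transducer $\tau_{\word{w}}$ we want to build cannot see. But because $\word{w}$ is fixed and known, the sequence $h_0h_1h_2\cdots\in(\Winf{S})^\omega$ is a fixed $\omega$-word; the question is whether this fixed sequence is "eventually simple enough" to be hard-coded into a finite-memory device. In general it is not ultimately periodic — but here is where separateness bites. Because $\gamma$ constrains $\word{y}$ alone, it is enough for the transducer to output \emph{some} $\word{y}$ satisfying $\gamma$ (when $\psi(\word{w},\word{x})$ holds), rather than the exact $\word{y}$ that \cref{lem:allow-lookahead}'s machinery would produce.

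The plan for the core step: Take a deterministic parity automaton $\CB$ for $\gamma(Y)$ over $\albet_Y$. Consider the "output game" in which at each step the transducer must pick a letter $y_n$, tracking the state of $\CB$, subject to: if $\psi(\word{w},\word{x})$ turns out to hold then the run of $\CB$ on $\word{y}$ must be accepting. Since $\word{w}$ is fixed, and the only relevant contribution of $\word{w}$ (and $\word{x}$) is the single bit "does $\psi(\word{w},\word{x})$ hold", this reduces — via the standard determinacy-plus-positional-strategy argument of \cref{rem:effective-strategy} — to a \emph{finite} parity game over the state space of $\CB$ together with a fixed small bookkeeping component. More concretely: run the argument of \cref{lem:elim-1-game-quant} on the game $\theGame(\word{w},\varphi)$, but observe that once we know $\qG X\sto Y.\,\varphi(\word{w},X,Y)$ holds and $\varphi$ factors as $\psi(\word{w},\word{x})\Rightarrow\gamma(\word{y})$, Player~$\rII$'s winning task decouples: against any $\word{x}$ making $\psi$ true she must steer $\word{y}$ into $\gamma$, and she has a strategy to do so that never needs to inspect $\word{w}$ beyond a bounded prefix, because $\gamma$ does not mention $\word{w}$ at all and $\CB$ is finite. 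Formally, I would take $\CD$ the deterministic parity automaton for the full formula $\varphi$ and argue that Player~$\rII$ has a winning strategy in $\theGame(\word{w},\CD)$ that is \emph{eventually positional on the $Q$-component}: since the arena is $\N\times Q$, but the only role of the $\N$-coordinate is to feed $w_n$ into the transition, and $\word{w}$ enters $\varphi$ only through $\psi$ (not through $\gamma$), the set of "good" moves stabilises after finitely many steps — using Ramsey/\cref{thm:ramsey-semi} on the fixed sequence $\alpha(w_0)\alpha(w_1)\cdots$ to find an idempotent decomposition of $\word{w}$ — yielding a strategy realisable by a transducer whose memory is $Q$ plus a bounded counter for the non-periodic prefix of $\word{w}$.

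I expect the main obstacle to be making precise the claim that separateness allows the lookahead to be replaced by a bounded prefix: one must argue that, although \cref{lem:allow-lookahead} genuinely needs $\lk_\alpha(\word{w})$, the lookahead values $h_n$ influence the \emph{winning} of the game only through which $\gamma$-constraint on $\word{y}$ must be met, and that constraint is a single $\omega$-regular property fixed once $\word{w}$ is fixed; so the transducer can commit, after reading a long enough prefix of $\word{w}$ (long enough to determine the relevant idempotent/limit behaviour of $\word{w}$ under $\alpha$ via \cref{thm:ramsey-semi}), to a positional strategy in the finite parity game of \cref{rem:effective-strategy} built from $\CB$ and the $\psi$-automaton. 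The bookkeeping to glue the finite non-periodic prefix of $\word{w}$ to this eventual positional strategy is routine but must be spelled out. Once that is done, $\tau_{\word{w}}$ is exactly this composite transducer, and \cref{eq:transducer} follows because on inputs where $\psi(\word{w},\word{x})$ fails the implication $\varphi$ is satisfied vacuously, while on inputs where it holds the transducer's output lies in $\gamma$ by construction.
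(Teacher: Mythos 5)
There is a genuine gap, and it occurs at the very first step. The reduction ``by moving to disjunctive normal form it suffices to handle $\varphi\equiv\big(\psi(W,X)\Rightarrow\gamma(Y)\big)$'' is not valid. A Boolean combination of formulae $\psi_i(W,X)$ and $\gamma_i(Y)$ is in general equivalent to a conjunction $\bigwedge_t\big(\psi_t(W,X)\Rightarrow\Gamma_t(Y)\big)$ over the atoms $\psi_t$ of the Boolean algebra generated by the $\psi_i$'s, with \emph{different} consequents $\Gamma_t$; it cannot be collapsed into a single implication, and the game quantifier distributes over neither disjunction nor conjunction. This matters, because in the single-implication case the theorem is essentially vacuous: if $\gamma$ is non-empty it contains an ultimately periodic $\word{y}_0$, and Player~$\rII$ may output that fixed word while ignoring the input entirely; if $\gamma$ is empty, the hypothesis forces $\lnot\psi(\word{w},\word{x})$ for all $\word{x}$ and any output works. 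The actual difficulty of the theorem is that Player~$\rII$ learns $\word{x}$ online, different evolutions of $\word{x}$ force different target conditions $\Gamma_t(\word{y})$, and she must hedge among them while irrevocably committing to letters of $\word{y}$. Your reduction discards exactly this difficulty.

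Your opening moves (invoking \cref{lem:allow-lookahead} and a Ramsey/idempotent decomposition of the fixed $\word{w}$) do match the paper's proof, but the ensuing claim that the strategy becomes ``eventually positional'' and realisable with ``a bounded counter for the non-periodic prefix of $\word{w}$'' is unjustified and, as stated, false: the $e$\=/splittable positions of $\word{w}$ are scattered with unbounded gaps, must be detected online (the paper does this via the ``two consecutive blocks evaluating to $e$'' argument of \cref{claim:split}), and between consecutive split points the transducer must still emit one letter of $\word{y}$ per step even though the correct lookahead value is known only at the split points. This is precisely where separateness is used in the paper: since $\varphi$ depends on $\word{y}$ only through $\beta(\word{y})$ for a homomorphism $\beta\from\Wilk{(\albet_Y)}\to T$ recognising the $\gamma_i$'s, one first builds a transducer producing a delayed output over $\Peps{\Wfin{T}}$ (accumulating products between split points) and then de-pads it into a genuine letter-by-letter output with the same value under the infinite product (\cref{lem:safety-delayed,lem:elim-neutral,lem:monoid2letter}). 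What you defer as ``routine bookkeeping'' is in fact the core of the proof.
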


It is worth mentioning that if $\varphi(W,X,Y)$ is a~Boolean combination of formulae $\psi_i(W,X)$ and formulae $\gamma_i(X,Y)$
then one can still recover the example from \cref{ft:no-transducer} by writing that either $(\word{x}\neq\word{w})$ or $y_n=x_{n+1}$ for all $n\in\N$,
which is of the required shape and still no transducer can realise the strategy.

Since each transducer induces a~strategy, \cref{thm:transducer} in fact provides an~equivalence.

\begin{corollary}
\label{cor:strat-vs-trans}
Take $\varphi(W,X,Y)$ that depends separately on $Y$ and any $\word{w}\in(\albet_W)^\omega$. Then $\qG X\sto Y.\,\varphi(\word{w},X,Y)$ holds if and only if there exists a~transducer $\tau_{\word{w}}\from\albet_W\times\albet_X\tto\albet_Y$ such that for every $\word{x}\in(\albet_X)^\omega$ we have $\varphi\big(\word{w},\word{x},\tau_{\word{w}}(\word{w},\word{x})\big)$.
\end{corollary}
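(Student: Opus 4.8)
The plan is to prove the two implications separately; only the left\=/to\=/right one carries any content. \emph{From right to left} the claim holds for an~arbitrary $\varphi$, with no separateness assumption: a~transducer is causal, so $\tau_{\word{w}}$ induces a~strategy for Player~$\rII$ in $\theGame(\word{w},\varphi)$, namely, in round $n$, having observed the parameter $\word{w}$ and Player~$\rI$'s letters $x_0,\dots,x_n$ played so far, Player~$\rII$ answers with the $n$\=/th letter output by $\tau_{\word{w}}$ on the prefix $\langle w_0,x_0\rangle\cdots\langle w_n,x_n\rangle$ (this is well defined, since the $n$\=/th output of a~transducer depends only on its first $n{+}1$ input letters). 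Against any play $\word{x}$ of Player~$\rI$ the word produced is exactly $\tau_{\word{w}}(\word{w},\word{x})$, for which $\varphi\big(\word{w},\word{x},\tau_{\word{w}}(\word{w},\word{x})\big)$ holds by hypothesis; hence the strategy is winning and $\qG X\sto Y.\,\varphi(\word{w},X,Y)$ holds.

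\emph{From left to right} is precisely \cref{thm:transducer}, and this is where separateness is essential (by \cref{ft:no-transducer} the implication fails for general $\varphi$). To prove \cref{thm:transducer} I would start from \cref{lem:allow-lookahead}, which already supplies a~homomorphism $\alpha\from\Wilk{(\albet_W)}\to S$ and a~transducer with lookahead realising a~winning strategy of Player~$\rII$; the whole task is then to remove the lookahead $\lk_\alpha(\word{w})$. Write $\varphi(W,X,Y)$ as a~Boolean combination $\Phi$ of formulae $\psi_i(W,X)$ and $\gamma_j(Y)$, and fix deterministic automata computing the ``$\psi$\=/type'' $\vec\psi(\word{w},\word{x})$ of $\langle\word{w},\word{x}\rangle$ and the ``$\gamma$\=/type'' $\vec\gamma(\word{y})$ of $\word{y}$, so that winning for Player~$\rII$ amounts to $\Phi\big(\vec\psi(\word{w},\word{x}),\vec\gamma(\word{y})\big)$ being true. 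Separateness is used here: $\varphi(\word{w},\word{x},\word{y})$ depends on $\word{y}$ only through $\vec\gamma(\word{y})$, an~$\omega$\=/regular property that is insensitive to every finite modification of $\word{y}$; hence it suffices for the lookahead\=/free transducer to output \emph{some} word whose $\gamma$\=/type coincides with that of the word the lookahead transducer would produce.

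To eliminate the lookahead, fix an~idempotent (Ramsey) factorisation $\word{w}=u_0u_1u_2\cdots$ with $\alpha(u_1)=\alpha(u_2)=\cdots=e$ idempotent (via \cref{thm:ramsey-semi}). Then $h_n:=\alpha(w_{n+1}w_{n+2}\cdots)$ ranges over a~finite set of the form $\{s\cdot e^\omega\}$, and $h_n=e^\omega$ at all the (infinitely many) block boundaries. Since the transducer actually \emph{reads} $\word{w}$, it can track $v_n:=\alpha(w_0\cdots w_n)$, and we may additionally hard\=/code the fixed element $\alpha(\word{w})\in\Winf{S}$; together these pin $h_n$ down to the computable set $\{h : v_n\cdot h=\alpha(\word{w})\}$. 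The residual ambiguity is bounded --- it concerns only the position inside the current idempotent block --- and the crux is to show it is harmless: one lets the transducer commit to a~fixed periodic continuation of $\word{y}$ throughout each (unknown, possibly long) block and re\=/adjust only at block ends, invoking the insensitivity of $\gamma$\=/types to such bounded rearrangements, with separateness guaranteeing that no position\=/to\=/position coupling between $\word{w}$ and $\word{y}$ is enforced. Once reduced to a~strategy that uses only bounded information, the problem becomes an~$\omega$\=/regular game on a~\emph{finite} arena, solved by the finite\=/arena Büchi--Landweber construction of \cref{rem:effective-strategy}, from which the required transducer is read off. The main obstacle is exactly this synchronisation: a~transducer sees $\word{w}$ one letter at a~time and can never detect, with finite memory, where the unboundedly long idempotent blocks of $\word{w}$ end, so one must argue that a~finite\=/memory strategy nevertheless wins --- the precise point at which the separateness hypothesis is indispensable.
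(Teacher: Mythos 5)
Your proof of the corollary itself is correct and is exactly the paper's: the right\=/to\=/left direction is the observation that a~(causal) transducer induces a~strategy for Player~$\rII$, and the left\=/to\=/right direction is delegated to \cref{thm:transducer}, which is precisely how the paper derives \cref{cor:strat-vs-trans}.

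Your supplementary sketch of \cref{thm:transducer} has the right skeleton (start from \cref{lem:allow-lookahead}, Ramsey\=/factorise $\word{w}$ into idempotent blocks, use separateness to replace $\word{y}$ by its image under a~recognising homomorphism $\beta\from\Wilk{(\albet_Y)}\to T$), but two steps are not right as stated. First, the claim that the $\gamma$\=/type of $\word{y}$ is ``insensitive to every finite modification'' is false in general (a~formula $\gamma_j(Y)$ may constrain the first letter of $Y$); what the paper actually exploits is that $\varphi$ depends on $\word{y}$ only through $\beta(\word{y})$, and the filler emitted inside a~block must be an~idempotent of $T$ absorbed by the product accumulated so far ($c\cdot e=c$, \cref{lem:elim-neutral}), not an~arbitrary periodic continuation. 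Second, the paper does not reduce to a~finite\=/arena game solved by the B\"uchi--Landweber construction; it builds the transducer directly: it detects $e$\=/splittable positions of $\word{w}$ with a~delay (two consecutive blocks evaluating to $e$ certify that the position between them is splittable, \cref{claim:split}), meanwhile outputs elements of $\Peps{\Wfin{T}}$ with $\epsilon$ marking the delay (\cref{lem:safety-delayed}), and only afterwards eliminates the $\epsilon$'s and converts back to letters of $\albet_Y$ (\cref{lem:elim-neutral,lem:monoid2letter}). Since the corollary uses \cref{thm:transducer} only as a~black box, these inaccuracies do not affect the correctness of your proof of the corollary itself.
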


A~complete proof of \cref{thm:transducer} is given in \cref{app:transducer}, here we only provide some overview of the construction.
We begin by applying \cref{lem:allow-lookahead} to construct a~transducer which constructs the desired winning strategy.
The transducer uses lookahead given by a~homomorphism $\alpha\from\Wilk{(\albet_W)}\to S$ into some finite Wilke algebra $S$.
We fix the parameter~$\word{w}$ and apply Ramsey's theorem (cf.~\cref{thm:ramsey}) to split the word $\word{w}$ into a~finite prefix
and then infinitely many subwords whose image under $\alpha$ is the same idempotent $e\in\Wfin{S}$.
In all the split points the transducer can be sure that the lookahead (i.e., the value of the suffix under~$\alpha$) is $\alpha(e^\omega)$.
But how can the transducer detect the split points?
The first one can be hardcoded in the transducer.
Then, knowing some split point, the transducer has to find a~next one.
A brave conjecture would be that every subword evaluating to $e$ moves us from one splitting point to a next one;
but this is false (maybe we should split after a~different subword evaluating to $e$).
However, a~slightly stronger condition is sufficient: if~the transducer encounters two consecutive subwords evaluating to $e$,
then it can be sure that the position after the first of them can be chosen (that is, the suffix after this position can be split into infinitely many subwords evaluating to $e$).
We remark that a~similar technical trick occurs in a~work of Thomas~\cite{thomas_first_order}.

These infinitely many splitting points detected by the transducer (with some delay) are positions where we know the value of the lookahead,
and thus we can produce at those points the fragments of an~actual output $\omega$\=/word $\word{y}$.
Since these positions are scattered in an~arbitrary way, we need to be able to \emph{pad} the output in\=/between these positions.
This is where the assumption of separate dependency on $Y$ comes into play: the satisfaction of $\varphi(\word{w},\word{x},\word{y})$
depends separately on $\langle\word{w},\word{x}\rangle$ and on $\beta(\word{y})$ for an~appropriately chosen homomorphism $\beta\from\Wilk{(\albet_Y)}\to T$ onto another finite Wilke algebra $T$.
Now, using some standard techniques involving idempotents, we can ensure that we pad the output $\omega$\=/word in such a~way that there are no delays in its generation
(i.e., we really produce some letter from $\albet_Y$ in each step),
while still we control the final value of $\beta(\word{y})$, making sure that $\varphi(\word{w},\word{x},\word{y})$ holds.

\section{Index quantifiers over \texorpdfstring{$\omega$\=/}{w-}words}

We now use the previous results to show quantifier elimination procedure for index quantifiers.

\begin{theorem}
\label{thm:index-omega}
	The logic MSO+$\qI$ effectively reduces to the pure MSO over $\omega$\=/words.
\end{theorem}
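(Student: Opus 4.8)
The plan is to eliminate index quantifiers one at a time, from an innermost occurrence outwards; so it suffices to translate a single formula $\qI^{D}_\anIndex X.\,\varphi(W,X)$, with $\varphi$ already in pure MSO, into pure MSO. I would dispose of the deterministic case $D=\Idet$ first and reduce the non-deterministic case to it.

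For $D=\Idet$: by \cref{rem:aut-vs-trans} a deterministic automaton of fixed index $\anIndex$ over $\albet_W\times\albet_X$ is literally a transducer $\tau\from\albet_W\times\albet_X\tto\albet_\anIndex$, with the automaton accepting $\langle\word w,\word x\rangle$ iff $\tau(\word w,\word x)\in L_\anIndex$. Hence $\qI^{\Idet}_\anIndex X.\,\varphi(\word w,X)$ holds for $\word w$ exactly when there is a transducer $\tau_{\word w}\from\albet_W\times\albet_X\tto\albet_\anIndex$ such that $\varphi(\word w,\word x)\Leftrightarrow\tau_{\word w}(\word w,\word x)\in L_\anIndex$ for every $\word x$. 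Now take the MSO formula $\varphi'(W,X,Y):=\big((Y\in L_\anIndex)\leftrightarrow\varphi(W,X)\big)$, where $Y$ ranges over $(\albet_\anIndex)^\omega$ and $(Y\in L_\anIndex)$ is the pure-MSO condition "$\limsup Y$ (resp.\ $\sup Y$) is even". This is a Boolean combination of $(Y\in L_\anIndex)$, which mentions only $Y$, and of $\varphi(W,X)$, which does not mention $Y$; so $\varphi'$ depends separately on $Y$ in the sense of \cref{def:separate}. By \cref{cor:strat-vs-trans}, for each $\word w$ the statement $\qG X\sto Y.\,\varphi'(\word w,X,Y)$ holds iff there is a transducer $\tau_{\word w}\from\albet_W\times\albet_X\tto\albet_\anIndex$ with $\varphi'\big(\word w,\word x,\tau_{\word w}(\word w,\word x)\big)$ for all $\word x$, i.e.\ iff $\qI^{\Idet}_\anIndex X.\,\varphi(\word w,X)$ holds. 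Thus $\qI^{\Idet}_\anIndex X.\,\varphi(W,X)$ is equivalent to $\qG X\sto Y\in(\albet_\anIndex)^\omega.\,\big((Y\in L_\anIndex)\leftrightarrow\varphi(W,X)\big)$, and \cref{cor:elim-game-quant} turns this, effectively, into pure MSO.

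For $D=\Indet$ I would exploit the collapse of the non-deterministic index hierarchy over $\omega$-words; after the standard normalisation $i\in\{0,1\}$ of indices, there are two cases. If $\anIndex=\CP_{i,j}$ with $j\ge 2$, then a complete non-deterministic Büchi automaton (priorities $\{1,2\}\subseteq\{i,\dots,j\}$) is a legal non-deterministic $\anIndex$-automaton; since the language of $\varphi$ is $\omega$-regular it is recognised by such an automaton, and one automaton — independent of the parameter — witnesses the quantifier for every $\word w$, so $\qI^{\Indet}_{\CP_{i,j}}X.\,\varphi(W,X)\equiv\top$. In the remaining cases — $\anIndex=\CP_{0,1}$ (co-Büchi), the degenerate $\CP_{0,0}$ and $\CP_{1,1}$, and all weak indices $\CW_{i,j}$ — I claim $\qI^{\Indet}_\anIndex X.\,\varphi(W,X)\equiv\qI^{\Idet}_\anIndex X.\,\varphi(W,X)$, which then reduces to the already-handled case. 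The nontrivial inclusion rests on the classical fact that over $\omega$-words a non-deterministic automaton of such an index recognises a language already recognised by a deterministic automaton of the same index (cf.\ \cite{wagner_hierarchy}). For co-Büchi this is clean: the set of (input, run)-pairs with an accepting run is the intersection of the closed set "$\rho$ is a valid run" with the $\bsigma{2}$ set "the output lies in $L_\anIndex$", hence $\bsigma{2}$; projecting away $\rho$ along the compact space $Q^\omega$ of runs is a closed map and therefore preserves $\bsigma{2}$, so $\lang$ of a non-deterministic co-Büchi automaton is $\bsigma{2}$ and $\omega$-regular, hence deterministic co-Büchi by Landweber's characterisation. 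The weak indices are analogous, and the two degenerate indices are immediate. Applying this to $\lang(\CA_{\word w})$ for a hypothetical non-deterministic witness $\CA_{\word w}$ yields a deterministic witness of the same index, still agreeing with $\varphi$ on the $\word w$-slice because $\lang(\CA_{\word w})$ does.

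Iterating these transformations from the innermost index quantifiers outward eliminates all of them, and every step is effective, which proves the theorem. I expect the non-deterministic case to be the main obstacle: the game-quantifier machinery behind \cref{thm:transducer} produces only deterministic, transducer-shaped strategies, so it cannot directly simulate the existential over runs inherent in non-deterministic acceptance; what rescues the situation is precisely the collapse of the non-deterministic index hierarchy over $\omega$-words, which lets us remove non-determinism before invoking that machinery. This is also the feature that fails over infinite trees, where the non-deterministic (and alternating) hierarchies are strict — in line with the undecidability obtained there.
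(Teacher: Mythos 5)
Your treatment of the deterministic case is essentially the paper's own proof: the paper's \cref{lem:index2game} performs exactly the translation you describe, passing through the separately\-/depending formula $\varphi_\anIndex(W,X,K)\equiv\big((K\in L_\anIndex)\leftrightarrow\varphi(W,X)\big)$, \cref{rem:aut-vs-trans}, \cref{cor:strat-vs-trans}, and then \cref{lem:elim-1-game-quant}. That half is correct.

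There is, however, a genuine error in your non\-/deterministic case: the claimed equivalence $\qI^{\Indet}_{\CW_{i,j}}X.\,\varphi\equiv\qI^{\Idet}_{\CW_{i,j}}X.\,\varphi$ is false for weak indices with $j\geq 3$ (after the normalisation $i\in\set{0,1}$). A deterministic $\CW_{1,3}$ automaton recognises a Boolean combination of open sets (hence a language in $\bdelta{2}$), and so does every slice $\set{\word{x}\mid\langle\word{w},\word{x}\rangle\in\lang(\CA)}$ of such a language, since fixing $\word{w}$ is a continuous preimage. By contrast, a non\-/deterministic $\CW_{1,3}$ automaton can recognise any $\bsigma{2}$ $\omega$\-/regular language: for instance ``$X$ contains finitely many $a$'s'' (guess a point, emit priority $2$ there, emit $3$ on any later $a$). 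Taking $\varphi(W,X)$ to be exactly that property, $\qI^{\Indet}_{\CW_{1,3}}X.\,\varphi(\word{w},X)$ holds for every $\word{w}$, while $\qI^{\Idet}_{\CW_{1,3}}X.\,\varphi(\word{w},X)$ holds for no $\word{w}$, because no $\bdelta{2}$ slice equals the $\bsigma{2}$\-/complete set of words with finitely many $a$'s. Your supporting argument reveals the slip: the projection\-/along\-/compact\-/fibres step shows that the language of a non\-/deterministic weak automaton is $\bsigma{2}$, hence \emph{deterministic co\-/B\"uchi} --- it does not show it is deterministic weak of the same index, and Landweber\-/style characterisations of the deterministic weak levels are not preserved under projection. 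The correct reduction, as in the paper, sends $\qI^{\Indet}_{\CW_{i,j}}$ for $j\geq 3$ to the deterministic co\-/B\"uchi quantifier (a different index), after which your deterministic machinery applies; the remaining weak cases $\CW_{0,1}$, $\CW_{1,2}$, $\CW_{0,2}$, the degenerate indices, and co\-/B\"uchi are handled correctly in your proposal.
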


First, one can observe that the non\=/deterministic index quantifiers $\qI^\Indet_\anIndex$ are either trivial or equivalent to the deterministic ones $\qI^\Idet_{\anIndex'}$, with an~appropriate change of indices. More precisely, the following equivalences hold:
\begin{itemize}
\item $\qI^\Indet_{\anIndex} X.\,\varphi(W,X)$ is equivalent to $\forall X.\, \varphi(W,X)$ whenever $\anIndex$ is either $\CP_{0,0}$ or $\CW_{0,0}$, because automata of these indices accept all $\omega$\-/words.
\item $\qI^\Indet_{\anIndex} X.\,\varphi(W,X)$ is equivalent to $\forall X.\, \lnot\varphi(W,X)$ whenever $\anIndex$ is either $\CP_{1,1}$ or $\CW_{1,1}$, because automata of these indices recognise empty languages.
\item $\qI^\Indet_{\CP_{1,2}} X.\, \varphi(W,X)$ is always true, because non\=/deterministic B\"uchi automata recognise all regular languages of $\omega$\=/words~\cite{buchi_decision}.
	The same holds for $\CP_{i,j}$ with $i\in\set{0,1}$ and $j\geq 2$.
\item $\qI^\Indet_{\CP_{0,1}} X.\, \varphi(W,X)$ is equivalent to $\qI^\Idet_{\CP_{0,1}} X.\, \varphi(W,X)$, because non\=/deterministic co\=/B\"uchi automata have the same expressive power as deterministic co\=/B\"uchi automata~\cite{miyano_threshold}.
\item $\qI^\Indet_{\anIndex} X.\, \varphi(W,X)$ is equivalent to $\qI^\Idet_{\anIndex} X.\, \varphi(W,X)$ whenever $\anIndex$ is either $\CW_{0,1}$ (safety), $\CW_{1,2}$ (reachability), or $\CW_{0,2}$, again because of the ability to determinise these automata without change of index (simple powerset\=/like constructions suffice).
\item $\qI^\Indet_{\CW_{1,3}} X.\, \varphi(W,X)$ is equivalent to $\qI^\Idet_{\CW_{0,1}} X.\, \varphi(W,X)$, because non\=/deterministic $\CW_{1,3}$ automata have the same expressive power as deterministic co\=/B\"uchi automata.
	The same holds for all $\CW_{i,j}$ with $i\in\set{0,1}$ and $j\geq 3$.
\end{itemize}
This covers all cases, because we can always shift the indices so that $i\in\set{0,1}$.

Thus, for the rest of this section we focus on the deterministic index quantifiers~$\qI_\anIndex^{\Idet}$.
Similarly as in \cref{cor:elim-game-quant} we proceed inductively, that is, we eliminate index quantifiers starting from inside.
Let $\anIndex=\langle\albet_\anIndex,L_\anIndex\rangle$ be an~index (either $\anIndex=\CP_{i,j}$, or $\anIndex=\CW_{i,j}$ for $i\leq j$).
Consider a~formula $\qI^{\Idet}_{\anIndex} X.\,\varphi(W,X)$ with $\varphi(W,X)$ in MSO.
Our goal is to construct a formula of pure MSO that is equivalent to $\qI^{\Idet}_{\anIndex} X.\,\varphi(W,X)$.
As in the previous section, we consider here only the case of a single parameter $W$
(which can encode multiple parameters using a~product alphabet).

Consider $\varphi_\anIndex(W,X,K)$ which, for $\word{w}\in(\albet_W)^\omega$, $\word{x}\in (\albet_X)^\omega$, and $\word{k}\in(\albet_\anIndex)^\omega$,
says that $\word{k}\in L_\anIndex$ if and only if $\varphi(\word{w},\word{x})$ holds.
Note that $\varphi_\anIndex(W,X,K)$ depends separately on the variable $K$.

\begin{lemma}\label{lem:index2game}
	The formulae $\qI^{\Idet}_{\anIndex} X.\,\varphi(W,X)$ and $\qG X \sto K.\,\varphi_\anIndex(W,X,K)$ are equivalent.
\end{lemma}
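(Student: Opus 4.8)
The plan is to establish the two implications separately, translating between ``a deterministic automaton of index $\anIndex$ recognising the slice'' and ``a winning strategy for Player~$\rII$ in the game quantifier game''. Recall from Remark~\ref{rem:aut-vs-trans} that, with the index $\anIndex$ fixed, a deterministic automaton $\CD$ over $\albet_W\times\albet_X$ of index $\anIndex$ is the same thing as a transducer $\tau\from(\albet_W\times\albet_X)\tto\albet_\anIndex$, with $\lang(\CD)=\{\langle\word{w},\word{x}\rangle\mid \tau(\langle\word{w},\word{x}\rangle)\in L_\anIndex\}$. This is the bridge I would use in both directions.

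First, suppose $\qI^{\Idet}_{\anIndex} X.\,\varphi(\word{w},X)$ holds, witnessed by a deterministic automaton $\CA_{\word{w}}$ of index $\anIndex$ such that $\varphi(\word{w},\word{x})$ holds iff $\CA_{\word{w}}$ accepts $\langle\word{w},\word{x}\rangle$. View $\CA_{\word{w}}$ as a transducer $\tau_{\word{w}}\from(\albet_W\times\albet_X)\tto\albet_\anIndex$. I claim this transducer gives a winning strategy for Player~$\rII$ in $\theGame(\word{w},\varphi_\anIndex)$: in round $n$, having seen $w_0\dots w_n$ (the parameter is fixed) and $x_0\dots x_n$ (Player~$\rI$'s moves so far), Player~$\rII$ plays the letter $k_n\in\albet_\anIndex$ output by $\tau_{\word{w}}$ at position $n$. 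Since $\tau_{\word{w}}$ is a transducer, $k_n$ depends only on the prefix $\langle w_0,x_0\rangle\cdots\langle w_n,x_n\rangle$, so this is a legitimate strategy for Player~$\rII$ (no lookahead into future $x$'s). At the end of the play, $\word{k}=\tau_{\word{w}}(\langle\word{w},\word{x}\rangle)$, so $\word{k}\in L_\anIndex$ iff $\CA_{\word{w}}$ accepts $\langle\word{w},\word{x}\rangle$ iff $\varphi(\word{w},\word{x})$ holds; this is exactly the condition $\varphi_\anIndex(\word{w},\word{x},\word{k})$, so Player~$\rII$ wins.

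Conversely, suppose $\qG X\sto K.\,\varphi_\anIndex(\word{w},X,K)$ holds. Here I would invoke Corollary~\ref{cor:strat-vs-trans}: the formula $\varphi_\anIndex(W,X,K)$ depends separately on $K$ (as noted just before the lemma, since $\varphi_\anIndex$ is a Boolean combination of $\varphi(W,X)$ and the formula ``$K\in L_\anIndex$''), so the winning strategy of Player~$\rII$ can be realised by a transducer $\tau_{\word{w}}\from\albet_W\times\albet_X\tto\albet_\anIndex$, i.e.\ for every $\word{x}$ we have $\varphi_\anIndex\big(\word{w},\word{x},\tau_{\word{w}}(\word{w},\word{x})\big)$. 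Via Remark~\ref{rem:aut-vs-trans}, this transducer is a deterministic automaton $\CA_{\word{w}}$ over $\albet_W\times\albet_X$ of index $\anIndex$, and $\CA_{\word{w}}$ accepts $\langle\word{w},\word{x}\rangle$ iff $\tau_{\word{w}}(\word{w},\word{x})\in L_\anIndex$ iff (by $\varphi_\anIndex$) $\varphi(\word{w},\word{x})$ holds. Thus $\CA_{\word{w}}$ witnesses $\qI^{\Idet}_{\anIndex} X.\,\varphi(\word{w},X)$.

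The only delicate point is the backward direction's reliance on Corollary~\ref{cor:strat-vs-trans}, hence ultimately on Theorem~\ref{thm:transducer}: a priori Player~$\rII$'s winning strategy need not be finite-memory, and it is precisely the separate-dependence of $\varphi_\anIndex$ on $K$ that lets us convert it into a transducer without lookahead — which is exactly what ``deterministic automaton of index $\anIndex$'' requires. The forward direction and the automaton/transducer dictionary are routine; the content of the lemma is really in correctly matching the ``separately depends on $K$'' hypothesis of Theorem~\ref{thm:transducer} with the shape of $\varphi_\anIndex$.
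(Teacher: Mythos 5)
Your proof is correct and follows essentially the same route as the paper: the paper's proof is exactly the chain ``index quantifier $\Leftrightarrow$ deterministic automaton $\Leftrightarrow$ transducer (via \cref{rem:aut-vs-trans}) $\Leftrightarrow$ game quantifier (via \cref{cor:strat-vs-trans}, using that $\varphi_\anIndex$ depends separately on $K$)''. You merely unfold the two directions of \cref{cor:strat-vs-trans} explicitly, and you correctly identify that the substantive content sits in the backward direction's appeal to \cref{thm:transducer}.
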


Note that this lemma concludes the proof of \cref{thm:index-omega} because the game quantifier~$\qG$ involved in the latter formula can be effectively eliminated due to \cref{lem:elim-1-game-quant}.

\begin{proof}
Take any parameter $\word{w}\in (\albet_W)^\omega$. The following conditions are equivalent:
\begin{itemize}
\item $\qI^{\Idet}_{\anIndex} X.\,\varphi(\word{w},X)$ holds;
\item there exists a~deterministic automaton $\CD_{\word{w}}$ of index $\anIndex$ such that for every $\omega$\=/word $\word{x}\in(\albet_X)^\omega$ we have $\varphi(\word{w},\word{x})$ if and only if $(\word{w},\word{x})\in\lang(\CD_{\word{w}})$;
\item there exists a~transducer $\tau_{\word{w}}\from (\albet_W\times\albet_X)\tto \albet_\anIndex$ such that for every $\omega$\=/word $\word{x}\in(\albet_X)^\omega$ we have $\varphi(\word{w},\word{x})$ if and only if $\tau_{\word{w}}(\word{w},\word{x})\in L_\anIndex$;
\item $\qG X\sto K.\,\varphi_\anIndex(\word{w},X,K)$ holds.
\end{itemize}

The first two items are equivalent from the definition of the index quantifier. The second two items are equivalent due to \cref{rem:aut-vs-trans}. The last two items are equivalent by the choice of $\varphi_\anIndex$ and \cref{cor:strat-vs-trans}.
\end{proof}

\section{Index quantifiers over trees}

In this section we prove the following theorem.

\begin{restatable}{theorem}{thmTressIndexUndec}
\label{thm:trees-index-undec}
The theory of MSO+$\qI$ over trees is undecidable. This holds even if we allow only the simplest possible index, namely the weak parity index $\CW_{0,1}$ (i.e.,~safety), and any type of determinism $D\in\{\Idet, \Indet\}$, thus we use only the index quantifier $\qI^D_{\CW_{0,1}}$.
\end{restatable}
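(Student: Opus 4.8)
The goal is to show that MSO+$\qI^D_{\CW_{0,1}}$ over the full binary tree is undecidable. The natural strategy is to reduce an already-undecidable problem to the satisfiability (or truth) problem of this logic. A canonical source of undecidability in the ``MSO + extra power'' landscape is the halting problem for Turing machines, encoded so that a crucial computational constraint gets enforced precisely by a safety-index quantifier. Concretely, I would look for a property $P$ of trees (or of tuples of set-parameters) such that (i) $P$ is \emph{not} MSO-definable, yet (ii) $P$ becomes expressible once we may assert ``the language $\{\tree{x} \mid \varphi(\tree{w},\tree{x})\}$, as a function of the parameter $\tree{w}$, is recognised by a safety automaton reading $\langle\tree{w},\tree{x}\rangle$,'' and such that (iii) the truth of such assertions encodes acceptance of an arbitrary Turing machine on empty input. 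The slogan ``being recognised by a safety automaton = being topologically closed'' (mentioned in the excerpt) is what makes $\qI^D_{\CW_{0,1}}$ genuinely add power here: closedness of a \emph{parametrised} family of tree languages is not an MSO property of the parameter, even though closedness of a single fixed regular language is decidable.

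**Key steps, in order.** First, fix a Turing machine $M$ and recall the standard MSO-encoding of a (halting) run of $M$ as a labelling of the binary tree — or, more conveniently here, of a distinguished branch or of an $\omega$-sequence of ``configuration blocks'' laid out along the tree. Second, design a formula $\varphi(W,X)$ over trees, with $W$ a tuple of parameters and $X$ the quantified variable, so that for a fixed valuation $\tree{w}$ the set $L_{\tree{w}} = \{\tree{x} : \varphi(\tree{w},\tree{x}) \text{ holds}\}$ is a safety-recognisable language over $\albet_W \times \albet_X$ \emph{if and only if} $\tree{w}$ encodes a genuine terminating computation of $M$ — the idea being that $X$ ranges over ``proposed certificates'' and $\varphi$ asks that $X$ witness some property that is closed exactly when the run in $\tree{w}$ is legitimate, while an illegitimate run forces $L_{\tree{w}}$ to contain a non-closed (e.g. properly $\bpi{2}$ or $\bsigma{2}$) pattern. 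Third, wrap this in the index quantifier: the formula $\qI^D_{\CW_{0,1}} X.\,\varphi(W,X)$ then holds for $\tree{w}$ iff $L_{\tree{w}}$ is safety-recognisable iff $\tree{w}$ is a valid halting run; and finally existentially quantify over $W$ to get a closed sentence that is true iff $M$ halts. Since $D\in\{\Idet,\Indet\}$ and safety automata determinise without index change (noted in the excerpt), the argument is insensitive to the choice of $D$; alternatively one handles both cases by observing that the non-closed witness one builds is not even non-deterministically safety-recognisable. Undecidability of ``$M$ halts'' then transfers to the theory of MSO+$\qI$ over trees.

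**Where the difficulty lies.** The main obstacle is engineering $\varphi(W,X)$ so that the \emph{exact} dichotomy holds: if the run encoded by $\tree{w}$ is valid then $L_{\tree{w}}$ must be literally recognised by a safety automaton (closed), and if it is invalid then $L_{\tree{w}}$ must fail to be so, uniformly. The positive direction is usually easy — one writes $\varphi$ as a conjunction of a manifestly safety/closed condition on $\langle W,X\rangle$ with a side condition depending only on $W$, so that when the side condition on $W$ holds, $L_{\tree{w}}$ is exactly the closed set cut out by the first conjunct. The negative direction — forcing non-closedness from an error in the run — is the delicate part: a single local error in a Turing-machine computation must be amplified, via the tree structure and the extra freedom in $X$, into a language that provably contains a topologically $\bsigma{2}$-hard (hence non-closed) configuration, e.g. ``$X$ marks finitely many but unboundedly-placeable positions'' or ``$X$'s branch hits the error infinitely/finitely often'' type conditions that no safety automaton can pin down. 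Getting this amplification to work while keeping $\varphi$ in plain MSO, and ruling out \emph{non-deterministic} safety recognition in the error case, is the crux; one typically leans on the fact that a safety-recognisable language is a countable intersection of clopen (indeed regular-open) sets, so exhibiting a single infinite-branch limit point outside $L_{\tree{w}}$ that is approached from inside $L_{\tree{w}}$ suffices to kill safety-recognisability for both $D=\Idet$ and $D=\Indet$.
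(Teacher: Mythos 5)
Your proposal is a plan rather than a proof: the entire technical core --- constructing a concrete $\varphi(W,X)$ and establishing the dichotomy ``$L_{\tree{w}}$ is safety\-/recognisable iff $\tree{w}$ is good'' --- is explicitly deferred (``getting this amplification to work \dots is the crux''), and that deferred part is exactly where all the work lies. Moreover, two of the specific choices you make would not survive an attempt to fill in the details. First, the reduction source: if ``$\tree{w}$ encodes a valid halting run of $M$'' were MSO\-/expressible, then $\exists W$ over it would already be decidable by Rabin's theorem, so the non\-/MSO ingredient must be isolated explicitly; you never say what it is. The paper pins it down: the only thing the index quantifier needs to express is \emph{boundedness} of the lengths of a family of independent intervals, and the rest of the reduction is inherited wholesale from the undecidability of MSO+$\qU$ (Minsky\-/machine runs encoded in sequences of depth\-/$4$ trees, re\-/embedded into the binary tree by first\-/child/next\-/sibling). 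Second, your proposed tool for the negative direction --- exhibit a limit point to show $L_{\tree{w}}$ is not topologically closed --- is off target for this kind of property. The witnessing language used in the paper, $\{\tree{x} \mid \tree{x}\subseteq\tree{w}\wedge\forall x\in\tree{x}.\,\exists y\in\tree{y}.\,x\preceq y\}$, is of the form $\{\tree{x}\mid \tree{x}\subseteq S\}$ for a fixed set $S$, hence \emph{always closed}; what fails when interval lengths are unbounded is recognisability by a \emph{finite\-/state} safety automaton reading the (non\-/regular) parameters. Closedness and safety\-/recognisability of a parametrised cut are not the same thing here, so no purely topological argument can close the gap.

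What the paper actually does instead, and what you would need to reproduce in some form: it writes $\varphi_\mathsf{b}(Z)\equiv\forall W\subseteq Z.\,\forall Y\subseteq Z.\,\qI^D_{\CW_{0,1}}X.\,\big(X\subseteq W\wedge\forall x\in X.\,\exists y\in Y.\,x\preceq y\big)$ and proves (\cref{lem:bound2safety}) that this holds exactly when the intervals of $Z$ have bounded length. The positive direction is the easy counting automaton you anticipate. The negative direction is a diagonalisation over \emph{all} (monotone) non\-/deterministic safety automata, choosing $\tree{w}$ as the topmost points of the intervals and building $\tree{y}$ from bottommost points of intervals of growing length, then defeating each automaton by a Ramsey argument on context types combined with pumping along a long interval --- an argument with no topological content. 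Without something of this kind, your negative direction does not go through for either value of $D$.
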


To simplify the notations, in this section we consider only variables over the alphabet $\set{0,1}$; thus their valuations can be seen as sets, rather than functions from tree nodes to $\set{0,1}$. Thus, we can write $\tree{x}\cup \tree{y}$, $\tree{x}\subseteq\tree{y}$, etc.

A~set $I\subseteq \{\dL,\dR\}^\ast$ of tree nodes is called an~\emph{interval} if it is of the form $\{u\dR^i\mid 0\leq i\leq n\}$ for some \emph{topmost node} $u$ (denoted~$\topn(I)$), \emph{bottommost node} $u\dR^n$, and some \emph{length} $n\in\N$ (denoted~$\len(I)$).
Two intervals $I_1$, $I_2$ are \emph{independent} if their topmost nodes $u_1$, $u_2$ are such that $u_1\not\preceq u_2$ and $u_2\not\preceq u_1$ (i.e.,~none of them is a~descendant of the other).
A~\emph{union of independent intervals} is a~set~$\tree{z}$ that can be written as $\bigcup_{i\in J}I_i$, where $(I_i)_{i\in J}$ are pairwise independent intervals.
Note that the decomposition of such $\tree{z}$ into intervals is unique.
Moreover, it can be accessed in MSO (we can write in MSO things like ``$I$ is one of the intervals in $Z$'', etc.).

The crucial technical contribution is the following lemma.

\begin{restatable}{lemma}{lemBoundToSafety}
\label{lem:bound2safety}
	In MSO+$\qI^\Indet_{\CW_{0,1}}$ and in MSO+$\qI^\Idet_{\CW_{0,1}}$ over trees
	one can write a~formula $\varphi_\mathsf{b}(Z)$ such that for each union of independent intervals $\tree{z}$,
	the intervals in $\tree{z}$ have lengths bounded by some $n\in\N$ if and only if $\varphi_\mathsf{b}(\tree{z})$ holds.
\end{restatable}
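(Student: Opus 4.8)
The plan is to express boundedness of interval lengths in $\tree{z}$ by asking whether a certain regular relation between $Z$ and a newly quantified tree-variable can be recognised by a safety automaton that does \emph{not} read $Z$ honestly but only sees a subtree-bounded amount of information — more precisely, to build a language whose recognisability by a $\CW_{0,1}$-automaton is equivalent to the desired uniform bound. The key intuition (exactly the one underlying \cref{rem:index-no-access}) is that a safety automaton has only finitely many states, so along any path it can ``count'' only up to a fixed threshold; hence if a $\CW_{0,1}$-automaton correctly accepts/rejects according to a property that forces it to distinguish intervals of arbitrarily large length, no such automaton can exist, whereas if the interval lengths are uniformly bounded by $n$, an automaton with roughly $n$ states along the relevant paths suffices.

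\textbf{Concretely.} First I would, given $Z$, describe in pure MSO an auxiliary relation: for a union of independent intervals $\tree{z}$, and for an auxiliary variable $X$, let $\varphi'(Z,X)$ say that $X$ marks, inside each interval $I$ of $Z$, exactly one node, namely a node at distance $\len(I)$ from $\topn(I)$ — equivalently, $X$ selects precisely the bottommost node of each interval — together with the syntactic conditions that $Z$ is genuinely a union of independent intervals and $X\subseteq Z$. All of this is MSO-expressible using the fact (stated in the excerpt) that the interval decomposition of $\tree{z}$ is accessible in MSO. The point is that a deterministic (or non-deterministic) safety automaton recognising exactly $\{(\tree{z},\tree{x}) : \varphi'(\tree{z},\tree{x})\}$ would, while descending an interval of $\tree{z}$, have to remember the distance already travelled in order to know whether the $X$-mark is allowed — and a finite safety automaton can do this only up to a bound on $\len$. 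So I would set $\varphi_\mathsf{b}(Z) \eqdef \qI^D_{\CW_{0,1}} X.\, \varphi'(Z,X)$, and the bulk of the proof is showing this formula has the claimed meaning.

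\textbf{The two directions.} For the easy direction, suppose the intervals in $\tree{z}$ have lengths $\le n$. Then there is a deterministic safety automaton $\CD_{\tree{z}}$ of index $\CW_{0,1}$ doing the job: it reads the pair $(\tree{z},\tree{x})$, detects entering an interval of $Z$, counts travelled distance modulo nothing but up to $n$ (a counter in $\{0,\dots,n\}$ suffices since lengths are $\le n$), and enters the rejecting state iff the $X$-marking is inconsistent with being exactly the bottommost node of the current interval, or iff any of the structural conditions on $Z,X$ fails; everything checkable with finitely many states once $n$ is fixed. Crucially $\CD_{\tree{z}}$ \emph{may depend on $\tree{z}$}, in particular on the bound $n$ — this is precisely what the index quantifier allows. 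For the hard direction, suppose $\varphi_\mathsf{b}(\tree{z})$ holds, i.e.\ some $\CW_{0,1}$-automaton $\CA$ (of either type $D$) recognises $\{\tree{x} : \varphi'(\tree{z},\tree{x})\}$ among inputs $\langle\tree{z},\tree{x}\rangle$; I must derive that interval lengths in $\tree{z}$ are bounded by $|Q_\CA|$ (or a similar function of $|Q_\CA|$). The argument is a pumping/pigeonhole argument along a single interval: if some interval $I$ of $\tree{z}$ had length $> |Q_\CA|$, then on the path from $\topn(I)$ to the bottom, the run of $\CA$ (for non-deterministic $\CA$, fix an accepting run on the correct $\tree{x}$) repeats a state at two nodes $v \prec v'$ of $I$; one then swaps/relocates the $X$-mark to produce a second input that $\CA$ cannot distinguish — either accepting something outside the language or rejecting something inside it, contradicting correctness. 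For non-determinism one must also argue about rejecting runs (safety: every run must stay safe on accepted inputs; on rejected inputs every run hits the rejecting state), and the pumping has to be set up so that both an accepted and a rejected witness are obtained from the same repeated-state segment.

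\textbf{Expected main obstacle.} The delicate point is the non-deterministic case of the hard direction: with non-determinism a repeated state along one interval does not immediately give a contradiction, because $\CA$ might use a different run on the pumped input, and safety automata are not closed under the naive pumping one would like. I expect to handle this by exploiting that safety acceptance is a \emph{universal} condition on runs restricted along branches (all branches stay safe), so on a \emph{rejected} input \emph{every} run eventually reaches the rejecting state, and by combining the repeated-state argument on the relevant interval-path with a careful choice of $\tree{x}$ that is in the language versus a perturbation $\tree{x}'$ that is not, forcing $\CA$ to behave identically on the differing segment. A secondary subtlety is making sure $\varphi'$ is genuinely MSO-definable and that the structural side-conditions (``$Z$ is a union of independent intervals'', ``$X$ picks the bottommost node of each interval'') do not themselves already require unbounded memory in a way that trivialises or breaks the equivalence — but since these conditions are $\omega$-regular along each branch and the branches carrying intervals are essentially the $\dR$-spines hanging off the $Z$-tree, a finite automaton handles the structural part uniformly, isolating all the ``unbounded counting'' into the single length-matching requirement, which is exactly where the safety index bites.
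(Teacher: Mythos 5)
Your formula does not work, and the failure is in the hard direction. Under the paper's semantics of the index quantifier (see the discussion around \cref{rem:index-no-access}), the witnessing automaton \emph{does} read the parameter: for $\qI^D_{\CW_{0,1}}X.\,\varphi'(Z,X)$ it runs over $\langle\tree{z},\tree{x}\rangle$. But for a fixed union of independent intervals $\tree{z}$, the property ``$\tree{x}$ is exactly the set of bottommost nodes of the intervals of $\tree{z}$'' is \emph{locally} checkable in $\tree{z}$: a node $v$ is bottommost in its interval if and only if $v\in\tree{z}$ and $v\dR\notin\tree{z}$. A deterministic safety automaton with a constant number of states (independent of $\tree{z}$) verifies this by passing the labels of $v$ one step down to $v\dR$ and checking consistency there; it never needs to count the distance from $\topn(I)$, because the end of the interval is visible in the $Z$-coordinate of its input. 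Hence $\qI^D_{\CW_{0,1}}X.\,\varphi'(\tree{z},X)$ holds for \emph{every} union of independent intervals, bounded or not, and your $\varphi_\mathsf{b}$ is identically true on its intended domain. Your pumping argument presupposes that the automaton must remember ``distance travelled so far'', which is exactly the assumption that access to the parameter destroys. (Switching to the variant where the automaton does not read the parameter is not an option either: that is a different quantifier, and the resulting singleton language $\{\tree{x}_0\}$ can be safety-recognisable even for unbounded $\tree{z}$, e.g.\ when $\tree{z}$ is a regular tree.)

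This is precisely the obstacle the paper's construction is built to circumvent. The paper's $\varphi_\mathsf{b}(Z)$ universally quantifies over subsets $W,Y\subseteq Z$ and places inside the index quantifier a formula $\psi(W,Y,X)$ that does \emph{not} mention $Z$; the hypothetical safety automaton therefore sees only $W$, $Y$, $X$, and an adversarial choice of $Y$ (bottommost points of a sparse, diagonally chosen subfamily of intervals of growing lengths) hides from the automaton where the intervals end. Checking ``every $x\in X$ has a descendant in $Y$'' then genuinely requires unbounded counting, and the hard direction goes through a diagonalisation over all safety automata combined with a Ramsey-style pumping argument that relocates the pumped segment outside the fixed parameter set $\tree{y}$. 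Any correct proof needs some mechanism of this kind to deny the automaton direct access to the interval boundaries; without it, the ``bounded counting'' intuition has no purchase.
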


\begin{proof}[Proof sketch]
As $\varphi_\mathsf{b}(Z)$ we take
\[\forall W\subseteq Z.\ \forall Y\subseteq Z.\ \qI^D_{\CW_{0,1}}X.\ \underbrace{\big(X\subseteq W\land\forall x\in X.\,\exists y\in Y.\,x\preceq y\big)}_{\psi(W,Y,X)},\]
where $D$ is either $\Idet$ or $\Indet$ (i.e.,~the index quantifier is either for deterministic or non\=/deterministic automata---both will work). The formula $\varphi_\mathsf{b}(Z)$, given a~set $\tree{z}$, expresses that for every choice of subsets $\tree{w},\tree{y}\subseteq \tree{z}$ there is a~safety automaton $\CA$ that checks whether a~given subset~$\tree{x}$ of $\tree{w}$ contains only points with a~descendant in~$\tree{y}$.

Note that the subformula $\psi(W,Y,X)$ does not depend on $\tree{z}$ so a~hypothetical safety automaton~$\CA$ does not have access to $\tree{z}$, even though the variable $Z$ is formally available in the scope of the subformula $\psi(W,Y,X)$ (one may ensure that $\tree{z}$ is not visible inside $\psi(W,Y,X)$ by artificially overshadowing the variable $Z$ by another quantifier $\exists Z.$ in front of $\qI^D_{\CW_{0,1}}$).

Suppose first that the intervals in some set $\tree{z}$ have lengths bounded by some $n\in\N$, and take any $\tree{w},\tree{y}\subseteq \tree{z}$.
The only elements of $\tree{y}$ that can be descendants of elements of $\tree{w}$ are elements of the same interval, hence they are located at most $n$ levels below.
Thus the property in question is recognised by the following deterministic safety automaton:
after every element of $\tree{x}$ (check if it belongs to $\tree{w}$ and) wait for $n$ levels on the branch going only right;
if no element of $\tree{y}$ was found, reject.
This shows that $\varphi_\mathsf{b}(\tree{z})$ holds (no matter whether we used $\qI^\Indet_{\CW_{0,1}}$ or $\qI^\Idet_{\CW_{0,1}}$ in its definition).
Note that the size of the constructed automaton depends on $n$.

	Suppose now that intervals in $\tree{z}$ have unbounded lengths.
	We want to prove that $\varphi_\mathsf{b}(\tree{z})$ does not hold.
As $\tree{w}$ we choose the topmost points of all intervals in $\tree{z}$.
The set $\tree{y}$ contains the bottommost points of carefully chosen intervals from $\tree{z}$. Among other properties, the chosen intervals need to have growing lengths.
First of all, the whole construction is done in a~diagonal way, because the hypothetical safety automaton $\CA$ is not known in advance, so we iterate over all such automata, ensuring that each of them is not a~good witness for $\qI^D_{\CW_{0,1}}X.\ \psi(\tree{w},\tree{y},X)$ to hold.

For a~fixed automaton $\CA$, if the lengths of the intervals are unbounded, then some interval exceeds the counting capacity of the automaton~$\CA$. Thus, some interval whose bottommost point $y$ belongs to $\tree{y}$ needs to be sufficiently long, so that the automaton $\CA$ admits a~pumping pattern with two repeating states along the interval. This allows to repeat the pattern indefinitely (i.e.,~\emph{pump}), which effectively removes the node $y$ from $\tree{y}$, making the formula $\psi(\tree{w},\tree{y},\tree{x})$ false, while the hypothetical automaton still accepts $\langle \tree{w},\tree{y},\tree{x}\rangle$. However, we are not allowed to change the parameter set $\tree{y}$, so the actual pumping needs to be performed on another part of $\tree{y}$, which requires to apply a~Ramsey\=/like argument allowing us to shift the pumping place outside the considered set $\tree{y}$ (this vaguely resembles the pumping scheme from Carayol and L\"oding~\cite{loding_choice}).

A~complete proof of this implication is given in \cref{appp:bound2safety}.
\end{proof}

Once we know that boundedness of sets of independent intervals is expressible in MSO+$\qI^D_{\CW_{0,1}}$, it remains to adjust the technical construction from Bojańczyk et al.~\cite{bojanczyk_msou_final} to express runs of Minsky machines in terms of boundedness of sets of intervals. This is a~rather standard adjustment; see \cref{app:trees-index-undex} for more details.

\section{Conclusions}

Our motivation in this work was to introduce and study index quantifiers $\qI$, which try to incorporate the index problem into the syntax of the logic. From that perspective,
% the most important result is
an~important message stems from
\cref{thm:trees-index-undec} stating that the logic MSO+$\qI$ is undecidable over trees, even in the simplest form of the quantifier, namely deterministic safety index quantifier.
The problem whether a~regular tree language can be recognised by an~automaton of index $\anIndex$ amounts to the satisfaction of a formula $\qI^D_\anIndex X.\,\varphi(X)$, in which the index quantifier is used only once and no parameters are allowed.
While this fragment can still be decidable, \cref{thm:trees-index-undec} shows that the frontier is close, and this is (to our knowledge) the first
undecidability result related to the index problem.

On the other hand, our study of index quantifiers~$\qI$ over $\omega$\=/words provides a~more optimistic view.
We have introduced an~effective index\=/quantifier elimination procedure, thus reducing MSO+$\qI$ into pure MSO.
Although the usefulness of index quantifiers in the realm of verification and model\=/checking is questionable,
our approach provides a~new and generic game\=/based way of proving that the index problem over $\omega$\=/words is decidable~\cite{wagner_hierarchy}.
This follows similar lines to Löding~\cite{loding_wadge_dpda}.
We believe that the most important consequence of this aspect of our study was a~thorough analysis of \emph{game quantifiers}~$\qG$---%
our proof of quantifier elimination for $\qI$ goes through a~translation to $\qG$ and then elimination of those.

Game quantifiers have been known and studied since 70s, however their applications were mostly limited to descriptive set theory.
In a~way similar to Kaiser~\cite{kaiser_game_quant}, we explicitly introduce the notion of game quantifiers~$\qG$ and study the expressive power of MSO+$\qG$.
The quantifier\=/elimination procedure for $\qG$ is quite direct and resembles a~parametrised version of the construction of B\"uchi and Landweber (see \cref{rem:effective-strategy}).
In contrast to index quantifiers, we believe that a~direct use of game quantifiers in MSO may be quite useful when expressing certain game\=/related properties.

For the sake of applying our quantifier\=/elimination procedure to index quantifiers, we needed to develop a~theory of transducers realising winning strategies in parametrised $\omega$\=/regular games.
To achieve it, we show a~novel and apparently quite strong result (see \cref{thm:transducer}) stating that if the involved variables are in some sense separate,
then whenever a~game quantifier holds, the existence of a~respective strategy can be witnessed by a~finite\=/memory transducer.

\bibliography{mskrzypczak}

\newpage\appendix

\section{Proof of \texorpdfstring{\cref{lem:allow-lookahead}}{Lemma \ref{lem:allow-lookahead}}}
\label{app:allow-lookahead}

\lemAllowLookahead*

\begin{proof}
	Recall the game $\theGame(\word{w},\CD)$ from \cref{def:product-game}, where $\CD$ is a~deterministic parity automaton equivalent to $\varphi$.
	As previously, let $Q$ be the set of states of~$\CD$.
	Take the formula $\psi^{(\rII)}(W,\Sigma)$ from \cref{claim:formula-for-strategy},
	which is such that $\psi^{(\rII)}(\word{w},\word{\sigma})$ holds if $\word{\sigma}\in(Q\times\albet_X\to \albet_Y)^\omega$ encodes a~positional winning strategy of Player~$\rII$ in $\theGame(\word{w},\CD)$.
	Using \cref{lem:uniformisation}, change $\psi^{(\rII)}(W,\Sigma)$ into a~uniformised formula $\psi^{(\rII)}_\mathsf{u}(W,\Sigma)$.
	For every $\omega$\=/word $\word{w}\in(\albet_W)^\omega$ satisfying $\qG X\sto Y.\,\varphi(\word{w},X,Y)$, such a positional winning strategy exists,
	hence $\psi^{(\rII)}_{\mathsf{u}}(\word{w},\word{\sigma})$ holds for precisely one $\omega$\=/word $\word{\sigma}\in(Q\times\albet_X\to \albet_Y)^\omega$,
	and this $\omega$\=/word encodes a~positional winning strategy of Player~$\rII$ in $\theGame(\word{w},\CD)$.
	Applying \cref{ft:uniformised-lookahead} to $\psi^{(\rII)}_\mathsf{u}$
	we obtain a~homomorphism $\alpha\from \Wilk{(\albet_W)}\to S$ onto a~finite Wilke algebra $S$ together with a~transducer
	\[\theta\from \albet_W\tto \big(\Winf{S}\to (Q\times\albet_X\to \albet_Y)\big)\]
	such that, for every $\omega$\=/word $\word{w}\in(\albet_W)^\omega$ satisfying $\qG X\sto Y.\,\varphi(\word{w},X,Y)$,
	the result $\funapp{\theta(\word{w})}{\lk_\alpha(\word{w})}$ encodes the unique winning strategy in $\theGame(\word{w},\CD)$.

	The homomorphism $\alpha$ is already satisfactory; it can be taken as the homomorphism $\alpha$ in the statement of \lcnamecref{lem:allow-lookahead}.
	On the other hand, the transducer has to be improved.
	The problem is that $\theta$ reads only $\word{w}$, and outputs a~strategy that depends on the state of $\CD$ and on the current letter of $\word{x}$;
	however $\tau$ should read both $\word{w}$ and $\word{x}$ and produce just the played letter of $\word{y}$.
	Thus, we enhance $\theta$ with a~part keeping track of the current state of $\CD$ based on the parts of $\word{w}$ and $\word{x}$ read so far,
	and on the previous letters of $\word{y}$ produced by $\theta$ itself.
	Note that the produced letters of $\word{y}$, hence also the current state of $\CD$, depend on results of the lookahead of the homomorphism $\alpha$,
	examining the future part of $\word{w}$; however values of the homomorphism in the past can be recovered from the current value.
	Formally, we construct the~new transducer $\tau\from (\albet_W\times\albet_X)\tto (\Winf{S}\to \albet_Y)$ as follows.

	Suppose that
	\begin{align*}
	\theta &=\big\langle \albet_W, \big(\Winf{S}\to (Q\times\albet_X\to \albet_Y)\big),P,\iota_\theta, \delta_\theta\big\rangle,\\
	\CD &=\big\langle(\albet_W\times\albet_X\times \albet_Y), \{i,i{+}1,\ldots,j\}, Q,\iota_\CD,\delta_\CD\big\rangle.
	\end{align*}
	As the set of states of $\tau$ we take the product $P\times (\Winf{S}\to Q)$.
	The initial state of $\tau$ is $(\iota_\theta,r_\init)$ where $r_\init(h)=\iota_\CD$ for all $h\in \Winf{S}$.
	In order to define the transition function $\delta_\tau$ of $\tau$, consider a state $(p,r)\in P\times(\Winf{S}\to Q)$ and an~input letter $(w,x)\in\albet_W\times\albet_X$.
	First, define $(\ell,p')\eqdef\delta_\theta(p,w)$;
	recall that $\ell\from \Winf{S}\to(Q\times\albet_X\to\albet_Y)$ is an~output letter produced by $\theta$.
	Next, for each $h'\in \Winf{S}$ define (as illustrated on \cref{fig:trans-flow})
	\begin{itemize}
	\item $h\eqdef \alpha(w)\cdot h'$,
	\item $q\eqdef r(h)$,
	\item $y\eqdef \ell(h')(q,x)$,
	\item $q'\eqdef \delta_\CD\big(q,(w,x,y)\big)$,
	\item $f(h')\eqdef y$,
	\item $r'(h')\eqdef q'$.
	\end{itemize}
	Having all this, we take
	\[\delta_\tau\big((p,r),(w,x)\big)=\big(f, (p', r')\big).\]

\tikzstyle{lookahead} = [line width=1pt, double distance=3pt, arrows = {-Latex[length=0pt 3 0]}]
\tikzstyle{transhead} = [bend left=15, -latex]
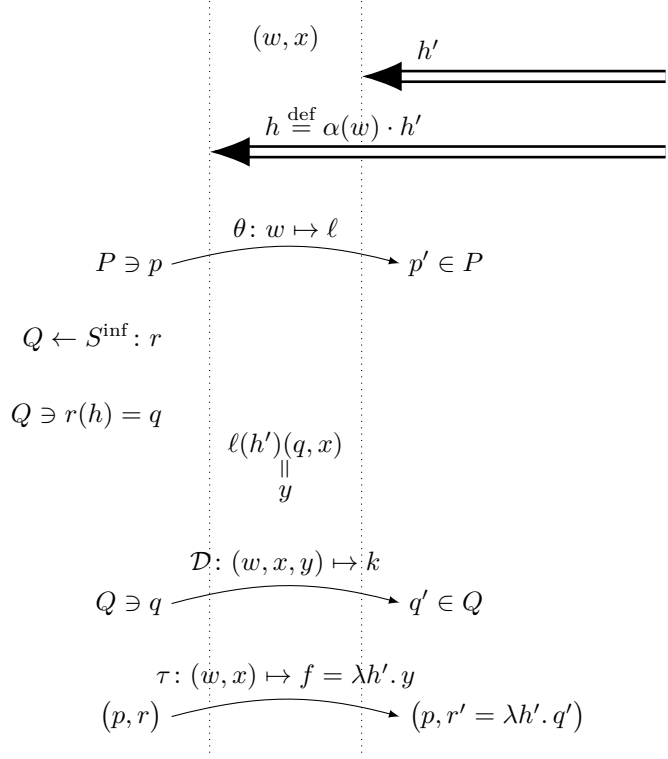
\begin{figure}
\centering
\begin{tikzpicture}
\draw (-1, 3) edge[dotted] (-1, -7);
\draw (+1, 3) edge[dotted] (+1, -7);

\node (w) at (0, 2.5) {$(w,x)$};

\draw (5, 2) edge[lookahead] (+1, 2);
\node[bzL] at (+1+0.6, 2.3) {$h'$};

\draw (5, 1) edge[lookahead] (-1, 1);
\node[bzL] at (-1+0.6, 1.3) {$h\eqdef \alpha(w)\cdot h'$};

\node[bzR] (p0) at (-1.5, -0.5) {$P \ni p$};
\node[bzL] (p1) at (+1.5, -0.5) {$p'\in P$};
\draw (p0.east) edge[transhead] node[above] {$\theta\from w\mapsto \ell$} (p1.west);

\node[bzR] (r0) at (-1.5, -1.5) {$Q\ot \Winf{S}\colon r$};
\node[bzR] (q0) at (-1.5, -2.5) {$Q \ni r(h)=q$};

\node[bzC] (y) at (0, -2.9) {$\ell(h')(q,x)$};
\node[bzC, rotate=90] (y) at (0, -3.2) {$=$};
\node[bzC] (y) at (0, -3.5) {$y$};

\node[bzR] (q1) at (-1.5, -5) {$Q \ni q$};
\node[bzL] (q2) at (+1.5, -5) {$q'\in Q$};
\draw (q1.east) edge[transhead] node[above] {$\CD\from (w,x,y)\mapsto k$} (q2.west);

\node[bzR] (a0) at (-1.5, -6.5) {$\big(p,r\big)$};
\node[bzL] (a1) at (+1.5, -6.5) {$\big(p,r'=\lambda h'.\,q'\big)$};
\draw (a0.east) edge[transhead] node[above] {$\tau\from (w,x)\mapsto f=\lambda h'.\,y$} (a1.west);
\end{tikzpicture}
\caption{An illustration of the flow of information in the proof of \cref{lem:allow-lookahead}. We use the notation $q\tran{\tau\from w\mapsto y} q'$ to indicate that a~transducer $\tau$ has a~transition $\delta(q,w)=(y,q')$. The types of suffixes of the $\omega$\=/word $\word{w}$ given by $\lk_\alpha(\word{w})$ are drawn using double arrows.}
\label{fig:trans-flow}
\end{figure}
	Consider now input words $\word{w}=w_0w_1w_2\cdots\in(\albet_W)^\omega$ and $\word{x}=x_0x_1x_2\cdots\in(\albet_X)^\omega$.
	Let $\word{\ell}=\ell_0\ell_1\ell_2\cdots\eqdef\theta(\word{w})$ be the output of $\theta$ and let $\word{f}=f_0f_1f_2\cdots\eqdef\tau(\word{w},\word{x})$ be the output of $\tau$.
	Moreover, for every $n\in\N$, let $h_n\eqdef\alpha(w_{n+1}w_{n+2}w_{n+3}\dots)$, let $y_n\eqdef f_n(h_n)$,
	and let $(p_n,r_n)$ be the state of $\tau$ after reading the word $(w_0,x_0)(w_1,x_1)\cdots(w_{n-1},x_{n-1})$.
	First, we observe by induction on $n\in\N$ that each $p_n$ is the state of $\theta$ after reading the word $w_0w_1\cdots w_{n-1}$;
	this is clear, because $p_{n+1}$ is defined as the second coordinate of $\delta_\theta(p_n,w_n)$.

	Next, we observe by induction on $n\in\N$ that $q_n=r_n(\alpha(w_n)\cdot h_n)$ and that $y_n=\ell_n(h_n)(q_n,x_n)$.
	Indeed, clearly $q_0=\iota_\CD=r_\init(\alpha(w_0)\cdot h_0)=r_0(\alpha(w_0)\cdot h_0)$.
	Then, suppose that $q_n=r_n(\alpha(w_n)\cdot h_n)$ for some $n\in\N$.
	We have
	\[y_n=f_n(h_n)=\ell_n(h_n)(r_n(\alpha(w_n)\cdot h_n),x_n)=\ell_n(h_n)(q_n,x_n),\]
	where the first equality holds by the definition of $y_n$,
	the second by the definition of $\delta_\tau$, because $f_n$ is the output of $\tau$ produced while reading $(w_n,x_n)$ from state $(p_n,r_n)$,
	and the third by the induction hypothesis.
	Having this, we obtain
	\begin{align*}
		q_{n+1}&=\delta_\CD(q_n,(w_n,x_n,y_n))=\delta_\CD(q_n,(w_n,x_n,\ell_n(h_n)(q_n,x_n)))\\
			&=r_{n+1}(h_n)=r_{n+1}(\alpha(w_{n+1})\cdot h_{n+1}),
	\end{align*}
	where the first equality holds by the definition of $q_{n+1}$, the second by the previously shown equality about $y_n$,
	the third by the definition of $\delta_\tau$, because $r_{n+1}$ is the second coordinate of the state of $\tau$ reached after reading $(w_n,x_n)$ from the state $(p_n,r_n)$,
	and the last by the definition of $h_n$ and by the fact that $\alpha$ is a homomorphism.

	Finally, the equality $y_n=\ell_n(h_n)(q_n,x_n)$, obtained for all $n\in\N$, implies that if in $\theGame(\word{w},\CD)$ Player~$\rI$ plays the $\omega$\=/word $\word{x}$,
	then according to the strategy encoded by $\word{\sigma}=\funapp{\theta(\word{w})}{\lk_\alpha(\word{w})}$ Player~$\rII$ answers with the $\omega$\=/word $\word{y}\eqdef y_0y_1y_2\cdots=\funapp{\tau(\word{w},\word{x})}{\lk_\alpha(\word{w})}$.
	The aforementioned strategy is winning, thus $\varphi(\word{w},\word{x},\word{y})$ holds.
\end{proof}

\section{Proof of \texorpdfstring{\cref{thm:transducer}}{Theorem \ref{thm:transducer}}}
\label{app:transducer}

\thmTransducer*

For the whole proof assume that $\word{w}\in(\albet_W)^\omega$ is fixed and such that $\qG X\sto Y.\,\varphi(\word{w},X,Y)$ holds.

Let $\beta\from\Wilk{(\albet_Y)}\to T$ be a homomorphism onto a~finite Wilke algebra $T$ recognising all the languages $\gamma_i(Y)$ in the formula $\varphi(W,X,Y)$, as given by \cref{thm:recognition}.
We then have $\varphi(\word{w},\word{x},\word{y})\Leftrightarrow\varphi(\word{w},\word{x},\word{y}')$ as long as $\beta(\word{y})=\beta(\word{y}')$.

Recall the monoid $\Peps{\Wfin{T}}$ obtained by adding a~formal neutral element $\epsilon$ to $\Wfin{T}$.
As a first step towards constructing $\tau_{\word{w}}$, in \cref{lem:safety-delayed} we construct a~transducer $\theta_{\word{w}}$ that outputs values from $\Peps{\Wfin{T}}$
(which aim to represent values of infixes of $\word{y}$ under the homomorphism~$\beta$) instead of concrete letters of $\word{y}$.
Thanks to such a change, it is easier to implement the aforementioned delay.
Later, in \cref{lem:elim-neutral,lem:monoid2letter} we show how to change $\theta_{\word{w}}$ into the desired transducer $\tau_{\word{w}}$.

For a saturated $\omega$\=/word $\word{t}\in(\Peps{\Wfin{T}})^\omega$
we write $\varphi(\word{w},\word{x},\word{t})$ to say that any $\omega$\=/word $\word{y}\in(\albet_Y)^\omega$ such that $\beta(\word{y})=\odot(\word{t})$ satisfies $\varphi(\word{w},\word{x},\word{y})$.
Moreover, for $r\in\N$ we say that such $\word{t}$ is \emph{$r$\=/full} if the first $r$ letters of $\word{t}$ are from $\Wfin{T}$ (i.e.,~they are not $\epsilon$).

\begin{lemma}\label{lem:safety-delayed}
	For every $r\in\N$ one can effectively construct a~transducer $\theta_{\word{w}}\from \albet_W\times\albet_X\tto \Peps{\Wfin{T}}$
	such that for every $\word{x}\in(\albet_X)^\omega$ the output $\theta_{\word{w}}(\word{w},\word{x})$ is saturated and $r$\=/full,
	and $\varphi(\word{w},\word{x},\theta_{\word{w}}(\word{w},\word{x}))$ holds.
\end{lemma}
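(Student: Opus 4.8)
The plan is to run the lookahead transducer supplied by \cref{lem:allow-lookahead} inside a larger finite-memory machine, and to remove its only external input, the lookahead word $\lk_\alpha(\word w)$, by reconstructing online (from the fixed $\word w$) the positions at which that lookahead is forced to a known value. First I would apply \cref{lem:allow-lookahead} to $\varphi$, obtaining a homomorphism $\alpha\from\Wilk{(\albet_W)}\to S$ into a finite Wilke algebra and a transducer $\tau\from(\albet_W\times\albet_X)\tto(\Winf{S}\to\albet_Y)$. Set $f_0f_1\cdots\eqdef\tau(\word w,\word x)$, $h_n\eqdef\alpha(w_{n+1}w_{n+2}\cdots)$, $y^\ast_n\eqdef f_n(h_n)$ and $\word{y}^\ast\eqdef y^\ast_0y^\ast_1\cdots$; since $\qG X\sto Y.\,\varphi(\word w,X,Y)$ holds by assumption, \cref{lem:allow-lookahead} gives $\varphi(\word w,\word x,\word{y}^\ast)$ for every $\word x$. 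The transducer $\theta_{\word w}$ will output, with delays and interspersed with the formal letter $\epsilon$, the $\beta$-values of consecutive blocks of $\word{y}^\ast$, so that $\odot\big(\theta_{\word w}(\word w,\word x)\big)=\beta(\word{y}^\ast)$ by associativity (\cref{eq:assoc}); since $\varphi$ depends separately on $Y$ (\cref{def:separate}) and $\beta$ recognises the $\gamma_i$'s, any $\word y$ with $\beta(\word y)=\beta(\word{y}^\ast)$ satisfies $\varphi(\word w,\word x,\word y)$, which is exactly $\varphi\big(\word w,\word x,\theta_{\word w}(\word w,\word x)\big)$ in the sense of that notation.

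To make this work I would fix, by Ramsey's theorem (\cref{thm:ramsey}, \cref{thm:ramsey-semi}) applied to $\word w$, an idempotent $e\in\Wfin{S}$ and positions $m_0<m_1<\cdots$ with $m_0\ge r$ such that $\alpha(w_{m_i}w_{m_i+1}\cdots w_{m_j-1})=e$ for all $i<j$; thus from each $m_i$ the continuation of $\word w$ evaluates to $e^\omega\in\Winf{S}$. The machine hard-codes $m_0$ and the prefix $w_0\cdots w_{m_0-1}$. On steps $0,\dots,m_0-1$ it simulates $\tau$ (reading the $x$-letters, the $w$-letters being known), and since each $h_m=\alpha(w_{m+1}\cdots w_{m_0-1})\cdot e^\omega$ is then a fixed element of $\Winf{S}$, it emits at step $m$ the genuine letter $\beta(y^\ast_m)=\beta\big(f_m(h_m)\big)\in\Wfin{T}$; this already makes the output $r$-full. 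From step $m_0$ on the machine parses $\word w$ greedily into factors of $\alpha$-value $e$ and---this is the delicate step discussed below---commits to a factor boundary only after the next such factor has been observed, producing committed boundaries $p_0\eqdef m_0<p_1<p_2<\cdots$ with $\alpha(w_{p_k}\cdots w_{p_{k+1}-1})=e$, hence $\alpha(w_{p_k}w_{p_k+1}\cdots)=e^\omega$, for all $k$.

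For each committed block $[p_k,p_{k+1})$ the machine must emit one letter $b_k\eqdef\beta(y^\ast_{p_k}\cdots y^\ast_{p_{k+1}-1})\in\Wfin{T}$. The obstruction is that $y^\ast_{p_k+i}=f_{p_k+i}\big(\alpha(w_{p_k+i+1}\cdots w_{p_{k+1}-1})\cdot e^\omega\big)$ depends on the part of the block not yet read at step $p_k+i$. I would handle this with the standard device of carrying a function: while processing a block, alongside the state of $\tau$, the machine keeps a map $\Phi\from\Peps{\Wfin{S}}\to\Peps{\Wfin{T}}$, where $\Phi(g)$ is the $\beta$-product of the block's $y^\ast$-letters produced so far under the hypothesis that the remainder of the block has $\alpha$-value $g$; upon reading $(w,x)$ with $\tau$-output $f$ the update is $\Phi'(g')=\Phi\big(\alpha(w)\cdot g'\big)\cdot\beta\big(f(g'\cdot e^\omega)\big)$, and when the block closes the correct hypothesis is $g=\epsilon$, so $\Phi(\epsilon)=b_k$. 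Because of the one-block commit delay the machine stores $b_k$ (one element of a finite set) until it is safe to emit it, outputting $\epsilon$ meanwhile. All emitted letters lie in $\Wfin{T}$, so the output is saturated, and by associativity it evaluates to $\beta(y^\ast_0\cdots y^\ast_{m_0-1})\cdot b_0\cdot b_1\cdots=\beta(\word{y}^\ast)$, as wanted.

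The hard part will be the online factor-boundary detection: the machine sees only a prefix of $\word w$ yet must settle, with bounded delay, on a sequence of positions from which $\word w$ genuinely decomposes into infinitely many $e$-factors. The naive guess that every $e$-factor of $\word w$ leads from one decomposition point to the next is false; one must argue, using the Ramsey decomposition of $\word w$ and a trick in the spirit of Thomas~\cite{thomas_first_order} (exploiting that a greedily chosen $e$-factor cannot overshoot the next point of a fixed infinite $e$-decomposition), that committing after two successive $e$-factors yields a legitimate, never-stalling sequence of boundaries. Everything else---the hard-coded $r$-full prefix, the $\Phi$-bookkeeping, and the final associativity identity---is routine.
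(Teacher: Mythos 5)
Your overall strategy is the same as the paper's: apply \cref{lem:allow-lookahead}, fix by Ramsey an idempotent $e\in\Wfin{S}$ and an $e$\=/splittable position $m_0\geq r$, hard\-code the prefix (which yields $r$\=/fullness), detect further split points online via the ``two consecutive $e$\=/factors'' criterion, carry a function\=/valued accumulator to cope with the unknown remainder of the current block, pad with $\epsilon$, and conclude via associativity and the fact that $\varphi$ only sees $\beta(\word{y})$. The legitimacy half of your boundary criterion is exactly the paper's \cref{claim:split}.

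There is, however, a genuine gap in the \emph{never\=/stalling} half, precisely at the step you yourself flag as delicate. You propose to track a \emph{single} greedy candidate: find the first $j$ with $\alpha(w_{i+1}\cdots w_j)=e$, then wait for a second $e$\=/factor starting at $j$. This can stall forever, and the ``a greedy $e$\=/factor cannot overshoot the next point of the fixed decomposition'' observation does not rescue it. Concretely, let $\Wfin{S}$ contain a left\=/zero product ($s\cdot s'=s$), realised e.g.\ by $\alpha(u)=$ the image of the first letter of $u$, and let $\word{w}$ alternate letters mapping to $e$ and to $v\neq e$ so that $\alpha(w_1)=\alpha(w_3)=\cdots=e$ and $\alpha(w_2)=\alpha(w_4)=\cdots=v$. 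Position $0$ is $e$\=/splittable via $0<2<4<\cdots$, but the greedy first $e$\=/factor ends at $j=1$, and $\alpha(w_2\cdots w_k)=v\neq e$ for every $k>1$: no second $e$\=/factor starting at $j$ ever appears, so your machine emits $\epsilon$ forever, the output is not saturated, and the conclusion fails. The fix --- and the point where the paper's construction genuinely differs from yours --- is to keep, between commitments, the \emph{set of all} candidate boundaries $j>j_0$ with $\alpha(w_{i+1}\cdots w_j)=e$, each paired with its own accumulated $(t,s,g)$ data (the paper's ``closed sectors''); this is finite\=/memory because each candidate is summarised by an element of a fixed finite set. Then, taking two consecutive points $j<k$ of the true decomposition from $i$, the candidate $j$ is present in the set and is closed off at $k$ because $\alpha(w_{j+1}\cdots w_k)=e$, which guarantees infinitely many commitments. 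Your single accumulator $\Phi$ must be replicated accordingly, one copy per pending candidate. (A minor additional point: the delay between commitments need not be bounded, and your construction does not need it to be --- the $\epsilon$\=/padding already absorbs unbounded gaps.)
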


\begin{proof}
	First, we use \cref{lem:allow-lookahead} to obtain a~homomorphism $\alpha\from\Wilk{(\albet_W)}\to S$ onto a~finite Wilke algebra $S$
	together with a~transducer $\tau_Y\from (\albet_W\times\albet_X)\tto (\Winf{S}\to \albet_Y)$ such that for every $\word{x}\in(\albet_X)^\omega$ we have $\varphi(\word{w},\word{x},\funapp{\tau_Y(\word{w},\word{x})}{\lk_\alpha(\word{w})})$.
	Note that we now have two Wilke algebras: $S$ for ``source'' letters and $T$ for ``target'' letters.

	We begin by composing the outputs of $\tau_Y$ (i.e.,~letters in $(\Winf{S}\to \albet_Y)$) with $\beta\from \albet_Y\to\Wfin{T}$. The resulting transducer is
	$\tau_T\from(\albet_W\times\albet_X)\tto(\Winf{S}\to \Peps{\Wfin{T}})$.
	Denoting $\word{y}=y_0y_1y_2\cdots\eqdef\funapp{\tau_Y(\word{w},\word{x})}{\lk_\alpha(\word{w})}$ and $\word{t}=t_0t_1t_2\cdots\eqdef\funapp{\tau_T(\word{w},\word{x})}{\lk_\alpha(\word{w})}$ we thus have $t_i=\beta(y_i)$ for all $i\in\N$.
	Because $\beta$ is a~homomorphism, \cref{eq:assoc} applies and therefore $\beta(\word{y})=\odot(\word{t})$, which means that $\varphi(\word{w},\word{x},\funapp{\tau_T(\word{w},\word{x})}{\lk_\alpha(\word{w})})$ holds.
	The transducer $\theta_{\word{w}}$, which we construct, will output an~$\omega$\=/word $\word{t}'$ such that $\odot(\word{t}')=\odot(\word{t})$;
	this will imply that $\varphi(\word{w},\word{x},\theta_{\word{w}}(\word{w},\word{x}))$ holds as well.

	It remains to construct $\theta_{\word{w}}$ out of $\tau_T$ and $\alpha$.
	Recall that the $\omega$\=/word $\word{w}=w_0w_1w_2\cdots$ is fixed.
	To shorten the notation, for $i,j\in\N$ with $i<j$ we define $\alpha_{\word{w}}(i,j)\eqdef\alpha(w_{i+1}w_{i+2}\cdots w_j)\in\Wfin{S}$,
	as well as $\alpha_{\word{w}}(i,i)\eqdef\epsilon\in\Peps{\Wfin{S}}$ and $\alpha_{\word{w}}(i,\infty)\eqdef\alpha(w_{i+1}w_{i+2}w_{i+3}\cdots)\in\Winf{S}$.
	Given $e\in\Wfin{S}$, we say that a~position $i$ is \emph{$e$\=/splittable} if there are positions $i=i_0<i_1<i_2<\dots$ such that $\alpha_{\word{w}}(i_k,i_\ell)=e$ for all $k,\ell\in\N$ with $k<\ell$
	(in other words, if the suffix of $\word{w}$ starting after position $i$ can be split into subwords such that the image under $\alpha$ of each of them is $e$).
	By the Ramsey theorem (\cref{thm:ramsey}; see also~\cite[Theorem~2.1, page~78]{perrin_pin_words}), there exists a~position $i_0$ and a~value $e\in\Wfin{S}$ such that $i_0$ is $e$\=/splittable and $i_0\geq r$,
	where $r$ is the number from the statement of the \lcnamecref{lem:safety-delayed}.
	We fix $i_0$ and $e$ for the rest of the proof.
	Note that $e$ is an idempotent, that is, $e\cdot e=e$.

	The overall idea of $\theta_{\word{w}}$ is that it cumulates (multiplies) the output letters of $\tau_T$ until reaching the next $e$\=/splittable position.
	This is needed because $\tau_T$ has access to a~lookahead on $\alpha$, while $\theta_{\word{w}}$ does not,
	but on each $e$\=/splittable position the transducer $\theta_{\word{w}}$ can be sure that the value of $\alpha$ on the suffix is $e^\omega$
	(and values of $\alpha$ on finite subwords ending in an $e$\=/splittable position can be easily computed by $\theta_{\word{w}}$).

	How $\theta_{\word{w}}$ can know whether its current position is $e$\=/splittable?
	As a~first $e$\=/splittable position it can remember $i_0$.
	Then, knowing some $e$\=/splittable position, the transducer has to find a~next one.
	Of course, $\theta_{\word{w}}$ can maintain the value of the subword starting at the previous $e$\=/splittable position, and check whether it equals $e$.
	A brave conjecture would be that if $i$ is $e$\=/splittable and $\alpha_{\word{w}}(i,j)=e$, then $j$ is $e$\=/splittable as well---but this is false.
	However, a slightly stronger condition is sufficient:
	if the transducer encounters two consecutive subwords evaluating to $e$,
	then it can be sure that the position after the first of them is $e$\=/splittable, as described by the following claim.

	\begin{claim}\label{claim:split}
		Let $i<j<k$.
		If a~position $i$ is $e$\=/splittable and $\alpha_{\word{w}}(i,j)=\alpha_{\word{w}}(j,k)=e$, then $j$ is $e$\=/splittable.
	\end{claim}

	\begin{claimproof}
		Because $i$ is $e$\=/splittable, there exists an $e$\=/splittable position $\ell>k$ such that $\alpha_{\word{w}}(i,\ell)=e$
		(in the split of the suffix starting at $i$ into subwords evaluating to $e$, as $\ell$ we take any position greater than $k$).
		We then have
		\begin{align*}
			\alpha_{\word{w}}(j,\ell)&=\alpha_{\word{w}}(j,k)\cdot \alpha_{\word{w}}(k,\ell)=e\cdot \alpha_{\word{w}}(k,\ell)=e\cdot e\cdot \alpha_{\word{w}}(k,\ell)\\
			&=\alpha_{\word{w}}(i,j)\cdot \alpha_{\word{w}}(j,k)\cdot \alpha_{\word{w}}(k,\ell)=\alpha_{\word{w}}(i,\ell)=e,
		\end{align*}
		which means that $j$ is $e$\=/splittable.
	\end{claimproof}

	This construction is vaguely based on the ideas from Thomas~\cite{thomas_first_order}, where a~similar trick is introduced.
	Translating it to our context, it allows one to express the value $\alpha(\word{w})\in\Winf{S}$ in first\=/order logic equipped with predicates $\psi_s(i,j)$ for $s\in\Wfin{S}$ which state that $\alpha_{\word{w}}(i,j)=s$.

	We now come into details of the construction of $\theta_{\word{w}}$.
	An~\emph{open sector} is a pair $(s,g)$, where $s\in\Peps{\Wfin{S}}$ and $g\from \Winf{S}\to \Peps{\Wfin{T}}$.
	Given also $\word{x}=x_0x_1x_2\cdots\in(\albet_X)^\omega$, and denoting $f_0f_1f_2\cdots\eqdef\tau_T(\word{w},\word{x})$ and $t_0t_1t_2\cdots\eqdef\funapp{\tau_T(\word{w},\word{x})}{\lk_\alpha(\word{w})}$,
	we say that $(s,g)$ \emph{describes} a~pair of positions $(i,k)$ (with $i\leq k$) if $s=\alpha_{\word{w}}(i,k)$ and $t_{i+1}\cdot t_{i+2}\cdot\ldots\cdot t_k=g(\alpha_{\word{w}}(k,\infty))$.
	An~\emph{update} of $(s,g)$ by a~pair $(w,f)\in\albet_W\times(\Winf{S}\to\Wfin{T})$ is the open sector $(s\cdot\alpha(w),g')$
	with $g'(h)\eqdef g(\alpha(w)\cdot h)\cdot f(h)$ for all $h\in\Winf{S}$.
	We see that if $(s,g)$ describes $(i,k)$, and we update it by $(w_{k+1},f_{k+1})$, then the updated open sector describes $(i,k{+}1)$.
	The \emph{empty open sector} is $(\epsilon,g_\epsilon)$, where $g_\epsilon(h)=\epsilon$ for all $h\in\Winf{S}$.
	It describes the pair $(i,i)$ for every $i\in\N$.

	A \emph{closed sector} is a triple $(t,s,g)$, where $t\in\Wfin{T}$ and $(s,g)$ is an~open sector.
	Given also $\word{x}=x_0x_1x_2\cdots\in(\albet_X)^\omega$, and denoting $f_0f_1f_2\cdots\eqdef\tau_T(\word{w},\word{x})$ and $t_0t_1t_2\cdots\eqdef\funapp{\tau_T(\word{w},\word{x})}{\lk_\alpha(\word{w})}$,
	we say that $(t,s,g)$ \emph{describes} a~triple of positions $(i,j,k)$ (with $i<j\leq k$) if $\alpha_W(i,j)=e$, and $t_{i+1}\cdot t_{i+2}\cdot\ldots\cdot t_j=t$, and
	the open sector $(s,g)$ describes the pair $(j,k)$.
	An~\emph{update} of $(t,s,g)$ by a~pair $(w,f)\in\albet_W\times(\Winf{S}\to\Wfin{T})$ is the closed sector $(t,s',g')$, where $(s',g')$ is the update of $(s,g)$ by $(w,f)$.
	Again, if $(t,s,g)$ describes $(i,j,k)$, and we update it by $(w_{k+1},f_{k+1})$, then the updated closed sector describes $(i,j,k+1)$.

	While reading letters at positions $0,1,\dots,i_0$, the new transducer $\theta_{\word{w}}$ directly simulates $\tau_T$, and counts up to $i_0$;
	when $\tau_T$ wants to output a~letter $f\from\Winf{S}\to\Wfin{T}$ while reading a letter at a~position $i\in\set{0,1,\dots,i_0}$, then
	$\theta_{\word{w}}$ outputs $f(\alpha(w_{i+1}w_{i+2}w_{i+3}\cdots))$ (where $\alpha(w_{i+1}w_{i+2}w_{i+3}\cdots)$ can be hardcoded into the transducer, as $\word{w}$ is known in advance).
	This way, we can be sure that the output $\omega$\=/word $\theta_{\word{w}}(\word{w},\word{x})$ is $r$\=/full (recall that $i_0\geq r$),
	and that the letters of $\theta_{\word{w}}(\word{w},\word{x})$ up to position $i_0$
	are the same as in $\funapp{\tau_T(\word{w},\word{x})}{\lk_\alpha(\word{w})}$, for every $\word{x}\in(\albet_X)^\omega$.

	After reaching the position $i_0$, the transducer $\theta_{\word{w}}$ maintains in its state the following information:
	\begin{itemize}
	\item	the current state $q$ of $\tau_T$,
	\item	one open sector $o$, and
	\item	a set $C$ of closed sectors.
	\end{itemize}
	At the position $i_0$ the state of $\tau_T$ is known from the previous phase;
	as the open sector we take the empty open sector, and as $C$ we take the empty set.

	When a new input letter $(w,x)\in\albet_W\times\albet_X$ comes, the state of $\theta_{\word{w}}$ is updated as follows:
	\begin{enumerate}
	\item	The state of $\tau_T$ is updated according to its transition function; let $f\in(\Winf{S}\to\Wfin{T})$ be the output letter produced by $\tau_T$.
	\item	The open sector $o$ and all closed sectors in $C$ are updated by $(w,f)$.
		It is possible that different closed sectors become identical after the update; we store the result only once, since $C$ is a set.
	\item	If the open sector $o$ is of the form $(e,g)$, with $e$ on the first coordinate, we add $(g(e^\omega),\epsilon,g_\epsilon)$ to $C$,
		where $(\epsilon,g_\epsilon)$ is the empty open sector.
	\item	We check if in $C$ there is a closed sector $(t,e,g)$, with $e$ on the second coordinate.
		If so, we output the letter $t$, we replace the open sector $o$ by $(e,g)$, and we replace $C$ by $\emptyset$.
		If there are multiple such closed sectors in $C$, we choose only one of them arbitrarily, and we do the above.
		If no such closed sector is found, we output the neutral element $\epsilon\in\Peps{\Wfin{T}}$.
	\end{enumerate}
	We claim that the state of $\CB$ satisfies the following invariant, after reading the input $\omega$\=/word up to position $k\geq i_0$:
	there exist indices $i$, $j_0$ with $i_0\leq i\leq j_0\leq k$ such that
	\begin{enumerate}[(a)]
	\item	the position $i$ is $e$\=/splittable,
	\item	the open sector $o$ describes $(i,k)$,
	\item	each of the closed sectors in $C$ describes $(i,j,k)$ for some $j$ with $i<j\leq k$, and
	\item	for every position $j$ such that $j_0<j\leq k$ and $\alpha_{\word{w}}(i,j)=e$, there is a closed sector in $C$ that describes $(i,j,k)$.
	\end{enumerate}
	It is clear that this invariant is satisfied at the beginning, for $k=i_0$, where as $i$ and $j_0$ we take $i_0$.
	Let us now see that it is preserved while reading the letter at some position $k+1$.
	Suppose first that the result of the check in Point~4 is negative.
	We then leave $i$ and $j_0$ unchanged.
	Item~(a) is trivially preserved.
	Items~(b) and~(c) are preserved due to the update applied in Point~2; likewise Item~(d) with respect to positions $j\leq k$.
	If $\alpha_{\word{w}}(i,k+1)=e$, Item~(d) requires now also a~closed sector that describes $(i,k{+}1,k{+}1)$; this is the sector added to $C$ in Point~3.

	Suppose now that the check from Point~4 encounters in $C$ a~closed sector $(t,e,g)$, which for some $j$ with $i<j\leq k$ describes $(i,j,k{+}1)$
	(with $k{+}1$ on the last coordinate, because we are already after the update from Point~2).
	We thus have $\alpha_{\word{w}}(i,j)=\alpha_{\word{w}}(j,k{+}1)=e$, so by \cref{claim:split} we obtain that $j$ is $e$\=/splittable.
	Then as the new $i$ we take $j$, and as the new $j_0$ we take $k{+}1$.
	The above shows Item~(a), while Items~(b)--(d) are trivially satisfied.

	Notice moreover that when $i$ is shifted to position $j$, then we output $t$,
	which is the product of letters of $\funapp{\tau_T(\word{w},\word{x})}{\lk_\alpha(\word{w})}$ at positions $i{+}1,i{+}2,\ldots,j$;
	and if $i$ is not shifted, then we output $\epsilon$.
	Thus, if $i$ is shifted infinitely often, then the output of $\theta_{\word{w}}$ is saturated,
	and its infinite product $\odot\big(\theta_{\word{w}}(\word{w},\word{x})\big)$ is indeed the same as the infinite product $\odot\big(\tau_T(\word{w},\word{x})\circ\alpha(\word{w})\big)$.

	It remains to see that indeed $i$ is shifted infinitely often.
	Suppose it is not the case, and consider the last values of $i$ and $j_0$ (note that $j_0$ changes only together with $i$, and that the position on which the last change happened is exactly $j_0$).
	Consider a split of the suffix starting after position $i$ into subwords evaluating to $e$;
	it exists by Item~(a).
	Take any two positions $j$, $k{+}1$ of this split, greater than $j_0$, so that $j_0<j<k{+}1$.
	Because $\alpha_{\word{w}}(i,j)=e$, Item~(d) of the invariant at the position $k$ says that in $C$ we have a~closed sector describing $(i,j,k)$.
	At the position $k{+}1$ it becomes a closed sector describing $(i,j,k{+}1)$.
	Because $\alpha_{\word{w}}(j,k{+}1)=e$, it then has $e$ on the second coordinate.
	It is thus found by Point~4, which shifts $i$ to a higher value, contrarily to our assumption.
\end{proof}

In order to obtain \cref{thm:transducer} from \cref{lem:safety-delayed}, we now only need to slightly massage the obtained transducer, using some algebraic properties of Wilke algebras.
This is provided by the next two lemmata.

\begin{lemma}\label{lem:elim-neutral}
	Let $T$ be finite Wilke algebra, and $\Peps{\Wfin{T}}$ the monoid obtained from $\Wfin{T}$ by adding a formal neutral element $\epsilon$.
	There exists a~constant $r\in\N$ and a~transducer $\theta_\mathsf{d}\from\Peps{\Wfin{T}}\tto\Wfin{T}$
	such that for every saturated $r$\=/full $\omega$\=/word $\word{t}\in (\Peps{\Wfin{T}})^\omega$ we have
	\[\odot (\word{t}) = \odot\big(\theta_\mathsf{d}(\word{t})\big).\]
\end{lemma}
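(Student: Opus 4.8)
The plan is to let $\theta_\mathsf{d}$ work in two regimes: a short \emph{bootstrap} that echoes a prefix of $\word t$ verbatim, and then a \emph{steady regime} that emits the content of $\word t$ with a delay of one block, filling the gaps with idempotents. I take $r$ to be the computable constant provided by \cref{thm:ramsey-semi} for the semigroup $\Wfin T$.

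\emph{Bootstrap.} Since $\word t$ is $r$\=/full, the letters $t_0,\dots,t_{r-1}$ all lie in $\Wfin T$, so $\theta_\mathsf{d}$ may simply output $t_n$ at step $n<r$. It keeps reading and echoing until the first step $j_0$ such that some suffix $t_{c+1}\cdots t_{j_0}$ of $t_0\cdots t_{j_0}$ is idempotent; by \cref{thm:ramsey-semi} applied to $t_0\cdots t_{r-1}$ such a $j_0$ exists with $j_0<r$. Put $e_0\eqdef t_{c+1}\cdots t_{j_0}$; it is idempotent and $(t_0\cdots t_{j_0})\cdot e_0=t_0\cdots t_{j_0}$, since an idempotent suffix is automatically right\=/absorbing. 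At this point $\theta_\mathsf{d}$ has output exactly $t_0\cdots t_{j_0}$, and it switches to the steady regime, remembering $e_0$.

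\emph{Steady regime.} From position $j_0+1$ on, $\theta_\mathsf{d}$ cuts the remaining input into consecutive intervals $\Sigma_1,\Sigma_2,\dots$: it accumulates the product $P$ of the non\=/$\epsilon$ letters read in the current interval together with all of their (at most $r$, by \cref{thm:ramsey-semi}) suffix products, and closes the interval at the first moment one of those suffix products is idempotent. Hence each $\Sigma_k$ contains at most $r$ non\=/$\epsilon$ letters (so it is finite: by saturation there is always a next one), and if $P_k$ is the product of $\Sigma_k$ and $f_k$ its closing idempotent suffix, then $P_k\cdot f_k=P_k$. As $\Wfin T$ is finite, tracking the accumulated product, its suffix products, a bounded counter, and the inherited pair $(P_{k-1},f_{k-1})$ uses finitely many states. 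Output is produced with a one\=/interval delay: over the positions of $\Sigma_1$, output $e_0$ at every step (block value $e_0^{|\Sigma_1|}=e_0$); over the positions of $\Sigma_{k+1}$ for $k\ge1$, output $P_k$ at the first step and $f_k$ afterwards (block value $P_k\cdot f_k^{|\Sigma_{k+1}|-1}=P_k$). Since $P_k$ and $f_k$ are known once $\Sigma_k$ has been read, i.e.\ before $\Sigma_{k+1}$ begins, this is realisable by a transducer, and every emitted letter lies in $\Wfin T$. For correctness, group $\word t$ along the blocks $[0,j_0{+}1),\Sigma_1,\Sigma_2,\dots$: using that $\odot$ ignores the neutral symbol and is associative, $\odot(\word t)=\odot\big((t_0\cdots t_{j_0})\,P_1\,P_2\cdots\big)=(t_0\cdots t_{j_0})\cdot\odot(P_1P_2\cdots)$. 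Grouping $\theta_\mathsf{d}(\word t)$ along the same blocks yields $\odot\big(\theta_\mathsf{d}(\word t)\big)=\odot\big((t_0\cdots t_{j_0})\,e_0\,P_1\,P_2\cdots\big)=(t_0\cdots t_{j_0})\cdot e_0\cdot\odot(P_1P_2\cdots)$, which equals $\odot(\word t)$ because $(t_0\cdots t_{j_0})\cdot e_0=t_0\cdots t_{j_0}$.

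I expect the main obstacle to be exactly the interaction between the one\=/interval delay and the resulting ``extra'' factor $e_0$ sitting in front of $\odot(P_1P_2\cdots)$ in the output: a naive construction fails because the idempotent that should pad an interval is not known before that interval starts, and shifting this knowledge by one block produces the spurious leading factor. The point of the bootstrap is precisely to absorb it — stopping the echo at an idempotent suffix makes $t_0\cdots t_{j_0}$ right\=/absorbed by $e_0$, which is what \cref{thm:ramsey-semi} guarantees and what the $r$\=/fullness hypothesis lets us reach by plain echoing; the remaining bookkeeping (tracking suffix products, bounding interval length, carrying $(P_{k-1},f_{k-1})$) is routine once this is in place.
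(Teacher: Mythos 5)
Your construction is correct and follows essentially the same route as the paper's proof: both rest on \cref{thm:ramsey-semi} to cut the input into blocks of at most $r$ non\=/$\epsilon$ letters whose products are right\=/absorbed by an idempotent, and both pad the output with such idempotents so that block products (and hence, by associativity of $\odot$, the infinite product) are preserved. The only difference is scheduling --- you emit each block's product at the first position of the \emph{next} block and pad afterwards with that block's own closing idempotent, whereas the paper emits it at the last position of the block itself and pads beforehand with the absorbing idempotent inherited from the previous close --- an inessential variation.
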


\begin{proof}
	As $r$ we take the constant from \cref{thm:ramsey-semi} for the semigroup $\Wfin{T}$.

	The transducer $\theta_\mathsf{d}$ works in two phases. The states of the first phase are of the form $c\in\Peps{\Wfin{T}}$, while the states of the second phase
	are of the form $(e,c)$ where $e\in\Wfin{T}$ and $c\in\Peps{\Wfin{T}}$.
	In the first phase, used on the prefix the input $\omega$\=/word, $\theta_\mathsf{d}$ copies the input letters
	and stores in its state the product $c\in\Peps{\Wfin{T}}$ of the word read so far.
	Then it moves to the second phase whenever there exists $e\in\Wfin{T}$ such that the new value of $c$ satisfies $c\cdot e=c$
	(if there are multiple such elements $e$, we choose one of them arbitrarily).
	Thus the initial state is $\epsilon$, and the transition function during the first phase from a~state $c\in \Peps{\Wfin{T}}$ is defined for $t\in\Wfin{T}$ by
	\begin{align*}
		\delta(c,t) &\eqdef \left\{\begin{array}{ll}
		\!\!\!\big(t, (e,\epsilon)\big)&\text{if $\exists e\in \Wfin{T}.\ c\cdot t\cdot e = c\cdot t$,}\\
		\!\!\!\big(t,c\cdot t\big) &\text{otherwise,}
	\end{array}\right.
	\end{align*}
	and $\delta(c,\epsilon)$ leads to a special ``error'' state, from which $\theta_\mathsf{d}$ constantly produces some fixed output letter.
	Note that after reading a~prefix $t_0t_1\cdots t_{j-1}$ with all letters in $\Wfin{T}$ (i.e.,~other than $\epsilon$),
	if the transducer is still in the first phase then its state is $t_0\cdot t_1\cdot \ldots \cdot t_{j-1}$ and the produced output is just a copy of the input.

	We now use \cref{thm:ramsey-semi} to see that while reading any $r$\=/full $\omega$\=/word $\word{t}=t_0t_1t_2\cdots\in (\Peps{\Wfin{T}})^\omega$,
	the transducer indeed enters the second phase somewhere over the first $r$ positions, avoiding the error state.
	Indeed, \cref{thm:ramsey-semi} applied to the prefix $t_0t_1\cdots t_{r-1}$ gives us a position $j$ and an idempotent $e$ such that (at latest) after reading $t_0t_1\cdots t_j$
	the current product $c\cdot t_j=t_0\cdot t_1\cdot\ldots\cdot t_j$ satisfies $c\cdot t_j\cdot e = c\cdot t_j$,
	which finishes the first phase.

	The transition function for this second phase is defined by
	\[\delta\big((e,c),t\big)\eqdef \left\{\begin{array}{lll}
		\!\!\!\big(c\cdot t, (e', \epsilon)\big)&\text{if $\exists e'\in \Wfin{T}.\ c\cdot t\cdot e' = c\cdot t$,}&\mbox{(Case I)}\\
		\!\!\!\big(e, (e, c\cdot t)\big)&\text{otherwise,}&\mbox{(Case II)}
	\end{array}\right.\]
	where if there are multiple elements $e'\in \Wfin{T}$, then we arbitrarily choose any of them.
	Note that $\epsilon\cdot e'\neq\epsilon$ for $e'\in\Wfin{T}$, so the output letter $c\cdot t$ in Case I is indeed in $\Wfin{T}$.

	Consider now some actual saturated $r$\=/full input $\omega$\=/word $\word{t}=t_0t_1t_2\cdots$, and a~position~$j$ after which the transducer is already in the second phase, in a~state $(e,c)$. Let $s_0s_1\cdots s_j$ be the output word produced so far.
	We keep the following invariant: there exists $j_0\leq j$ such that
	\begin{itemize}
	\item	$t_0\cdot t_1\cdot\ldots\cdot t_{j_0} = s_0\cdot s_1\cdot \ldots \cdot s_j$,
	\item	$t_0\cdot t_1\cdot\ldots\cdot t_{j_0} = t_0\cdot t_1\cdot\ldots\cdot t_{j_0}\cdot e$,
	\item	$t_{j_0+1}\cdot t_{j_0+2}\cdot\ldots\cdot t_j=c$.
	\end{itemize}
	The invariant clearly holds when we enter phase two from phase one, with $j_0=j$.
	Suppose now that a letter $t_{j+1}$ is read.
	If the transition function uses Case~I, then the invariant holds with $j_0$ shifted to $j{+}1$;
	after outputting $c\cdot t_{j+1}$, the product of output letters becomes multiplied by $c\cdot t_{j+1}=t_{j_0+1}\cdot t_{j_0+2}\cdot\ldots\cdot t_{j+1}$,
	which becomes compensated by the shift of $j_0$.
	If Case~II is used, the invariant is preserved with the same $j_0$; note that outputting $e$ we do not change the product of the output, because of the second item of the invariant.

	It remains to see that Case~I is used infinitely often, so that $j_0$ tends to infinity.
	To the contrary, assume that from some moment on, the value of $j_0$ remains unchanged.
	Take the first $r$ positions $i_0<i_1<\ldots<i_{r-1}$ such that $j_0<i_0$ and $t_{i_0},t_{i_1},\ldots,t_{i_{r-1}}\neq\epsilon$ (there are infinitely many such positions, because $\word{t}$ is saturated),
	and consider the word $t_{i_0}t_{i_1}\cdots t_{i_{r-1}}$.
	By \cref{thm:ramsey-semi} there is an index $j$ and an idempotent $e'$ such that $c\cdot t_{i_j}\cdot e' = c\cdot t_{i_j}$ for $c=t_{i_0}\cdot t_{i_1}\cdot\ldots\cdot t_{i_{j-1}}=t_{j_0+1}\cdot t_{j_0+2}\cdot\ldots\cdot t_{i_j-1}$.
	This means that Case~I is used while reading position $i_j$, contrarily to our assumption.
\end{proof}

\begin{lemma}\label{lem:monoid2letter}
	Assume that $\beta\from \Wilk{\albet}\to T$ is a~homomorphism onto a~finite Wilke algebra $T$.
	Then there exists a~transducer $\theta_\beta\from \Wfin{T}\tto \albet$ such that for every $\omega$\=/word $\word{t}\in (\Wfin{T})^\omega$ we have
	\[\odot (\word{t}) = \beta\big(\theta_\beta(\word{t})\big).\]
\end{lemma}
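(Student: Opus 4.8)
The plan is to realise $\theta_\beta$ as a transducer carrying a \emph{bounded output buffer} that it drains at exactly one letter per input letter, and to keep the buffer bounded by a pigeonhole-style compression. The only algebraic input needed is the following: since $\beta\from\albet^{+}\to\Wfin{T}$ is onto and $\beta(c_1\cdots c_k)=\beta(c_1)\cdots\beta(c_k)$ for letters $c_i$, the set $G\eqdef\beta(\albet)\subseteq\Wfin{T}$ generates $\Wfin{T}$ as a semigroup. So I would fix, once and for all, for every $t\in\Wfin{T}$ a (shortest) word $F(t)\in\albet^{+}$ with $\beta(F(t))=t$; note that all partial products of $F(t)$ are distinct, hence $|F(t)|\le|\Wfin{T}|$.

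Next I would describe the transducer. Its states are words $q\in\albet^{\le|\Wfin{T}|}$ (the \emph{queue}), with initial state the empty word. On reading a letter $t\in\Wfin{T}$ from state $q$, the transducer: (i) forms $q\cdot F(t)$; (ii) while the current word $b_1b_2\cdots b_m$ has $m>|\Wfin{T}|$ letters, it \emph{compresses}---among the $|\Wfin{T}|{+}1$ partial products $\beta(b_1),\beta(b_1b_2),\dots,\beta(b_1\cdots b_{|\Wfin{T}|+1})$ two coincide, say $\beta(b_1\cdots b_p)=\beta(b_1\cdots b_{p'})$ with $1\le p<p'$, and then the factor $b_{p+1}\cdots b_{p'}$ is deleted; this leaves the $\beta$-value of the word unchanged and, since the prefix $b_1\cdots b_p$ survives, the word stays non-empty; (iii) it outputs the first letter of the resulting word and moves to the state consisting of the remaining letters. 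Step (i) always adds at least one letter before step (iii) removes one, and compression never empties a non-empty word, so there is always a letter to emit; the stored state never exceeds $|\Wfin{T}|$ letters, so this is a genuine finite-memory transducer $\theta_\beta\from\Wfin{T}\tto\albet$, deterministic and complete.

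For correctness, write $\word{a}=a_0a_1a_2\cdots\eqdef\theta_\beta(\word{t})$ and let $q_n$ be the state after reading $t_0\cdots t_n$. A routine induction yields the invariant $\beta(a_0\cdots a_n)\cdot\beta(q_n)=t_0\cdot t_1\cdots t_n$: appending $F(t_n)$ right-multiplies the queue's $\beta$-value by $t_n$, compression preserves it, and emitting the front letter merely transfers one factor from the queue side to the output side. Equivalently, the finite word $(a_0\cdots a_n)\,q_n$ is obtained from $F(t_0)F(t_1)\cdots F(t_n)$ by a sequence of excisions $u\,d\,v\mapsto u\,v$ with $u,d\in\albet^{+}$ and $\beta(u)\beta(d)=\beta(u)$, and such an excision preserves both the $\beta$-value of a finite word and the $\odot$-value of an $\omega$-word. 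Since the buffer is bounded, every letter of $F(t_0)F(t_1)\cdots$ is either emitted or deleted within a bounded number of steps, so $\word{a}$ arises from $F(t_0)F(t_1)\cdots$ by a family of such excisions, each position being affected by only finitely many of them; hence
\[
\beta(\word{a})=\beta\big(F(t_0)F(t_1)F(t_2)\cdots\big)=\odot\big(\beta(F(t_0))\,\beta(F(t_1))\,\beta(F(t_2))\cdots\big)=\odot(t_0t_1t_2\cdots)=\odot(\word{t}),
\]
where the second equality is \cref{eq:assoc} and the third is regrouping $F(t_0)F(t_1)\cdots$ into the blocks $F(t_i)$ using associativity of the infinite product.

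The only real difficulty---indeed the single idea of the proof---is the clash between a transducer's synchronous, letter-by-letter operation and the fact that $\beta$-preimages of elements of $\Wfin{T}$ are words of length possibly $>1$, so that naively substituting $F(t_i)$ for $t_i$ overruns the time budget after finitely many steps. The compression step is what resolves this: a repeated partial product lets us excise a nonempty factor of the pending output without changing its $\beta$-value, keeping the buffer at most $|\Wfin{T}|$ letters; the remaining ingredients (the invariant, and the passage to the limit via $\odot$-preserving excisions) are then routine.
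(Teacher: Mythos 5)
Your transducer (a bounded output buffer, refilled with fixed preimages $F(t)$ and kept short by excising factors $d$ whose value is absorbed by the value of the surviving prefix) is a genuinely different construction from the paper's, which never deletes anything: the paper drains one word $u_t$ letter by letter while \emph{accumulating the product} $c$ of the input letters read in the meantime, and only when the buffer empties does it load $u_{c\cdot t}$. That design makes the output literally a concatenation $u_{s_0}u_{s_1}u_{s_2}\cdots$ where $s_0,s_1,\dots$ are the products of consecutive blocks of the input, so $\beta(\theta_\beta(\word{t}))=\odot(s_0s_1\cdots)=\odot(\word{t})$ follows in one line from associativity of $\odot$ and \cref{eq:assoc}. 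Your construction destroys this block alignment, and that is where the proof breaks.

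The gap is the final inference ``$\word{a}$ arises from $F(t_0)F(t_1)\cdots$ by excisions, each position affected finitely often, hence $\beta(\word{a})=\beta(F(t_0)F(t_1)\cdots)$.'' Preservation of the infinite product under each single excision, plus stabilisation of every position, does \emph{not} imply preservation in the limit: the infinite product of a Wilke algebra is not determined by finite prefixes and is not continuous. Concretely, take $\albet=\{a,b\}$ and the algebra recognising ``infinitely many $b$'s'', so that $\Wfin{T}=\{A,B\}$ with $B$ absorbing. Starting from $W_0=bababa\cdots$, the $k$-th excision may delete the $(k{+}1)$-st occurrence of $b$, with witness $u$ equal to the prefix up to that point (which contains the first $b$, so $\beta(u)\beta(d)=B=\beta(u)$); every excision preserves the value, every position stabilises, yet the limit is $baaa\cdots$, whose value is ``finitely many $b$'s''. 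Your excisions are more constrained than these (the absorbing witness lives inside the bounded queue rather than arbitrarily far in the past), and I could not break the construction itself on examples, but nothing in your write-up uses that constraint: the invariant $\beta(a_0\cdots a_n)\cdot\beta(q_n)=t_0\cdots t_n$ only controls finite prefixes and is compatible with a wrong value of $\beta(\word{a})$. To close the gap you would either need a genuine argument that the emitted word admits a factorisation value-aligned with a grouping of the input (e.g.\ via a Ramsey factorisation of the run together with identities such as $s\cdot(t\cdot s)^\omega=(s\cdot t)^\omega$), or simply replace compression by the paper's mechanism of accumulating the pending product while the buffer drains.
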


In this \lcnamecref{lem:monoid2letter} it is important that $\beta$ is indeed \emph{onto} $T$, that is, all elements of $T$ are in the image of $\beta$.
\begin{proof}
	First, for every $t\in\Wfin{T}$ we fix any word $u_t\in \albet^{+}$ such that $\beta(u_t)=t$.
	Then, as the set of states of $\theta_\beta$ we take
	\[\{(c, w)\in \Peps{\Wfin{T}}\times \albet^\ast\mid \exists t\in\Wfin{T}, u\in\albet^{+}.\,uw= u_t\},\]
	with the initial state $(\epsilon, \epsilon)$.
	The transition function is defined depending on the following cases (in both of them $a\in\albet$ is a single letter):
	\begin{align*}
	\delta((c,aw), t)&\eqdef \big(a, (c\cdot t, w)\big),&&\text{and}\\
	\delta((c,\epsilon), t)&\eqdef \big(a, (\epsilon, w)\big),&&\text{where $u_{c\cdot t}=aw$.}
	\end{align*}
	In this construction $c$ stores the product of those input letters which did not yet contributed to the output; initially $c=\epsilon$.
	After reading a~letter $t$, the transducer starts outputting the word $u_{c\cdot t}$, whose image under $\beta$ is $c\cdot t$.
	This word may consist of multiple letters, so the suffix that still needs to be outputted is stored on the second coordinate of the state,
	while the first coordinate is multiplied by the input letters read in the meantime.
	When the whole word is outputted, we read the next letter, and we start outputting the next word, corresponding to the product of letters read in the meantime; and so on.
\end{proof}

\begin{proof}[Proof of \cref{thm:transducer}]
	Recall that we already fixed an input $\omega$\=/word $\word{w}\in(\albet_W)^\omega$, as well as a homomorphism $\beta\from\Wilk{(\albet_Y)}\to T$ onto a~finite Wilke algebra $T$.
	Take the transducers $\theta_{\word{w}}\from \albet_W\times\albet_X\tto \Peps{\Wfin{T}}$ from \cref{lem:safety-delayed},
	$\theta_\mathsf{d}\from\Peps{\Wfin{T}}\tto\Wfin{T}$ from \cref{lem:elim-neutral},
	and $\theta_\beta\from \Wfin{T}\tto \albet_Y$ from \cref{lem:monoid2letter},
	where as the parameter $r\in\N$ in \cref{lem:safety-delayed} we take the constant $r$ provided by \cref{lem:elim-neutral}.
	As $\tau_{\word{w}}\from\albet_W\times\albet_X\to\albet_Y$ we take the transducer being the composition
	of the transducers $\theta_{\word{w}}$, $\theta_\mathsf{d}$, and $\theta_\beta$.
	Given now also a~word $\word{x}\in(\albet_X)^\omega$, we know that $\varphi(\word{w},\word{x},\theta_{\word{w}}(\word{w},\word{x}))$ holds;
	the equalities
	\[\odot\big(\theta_{\word{w}}(\word{w},\word{x})\big) = \odot\big(\theta_\mathsf{d}\big(\theta_{\word{w}}(\word{w},\word{x})\big)\big)
		=\beta\big(\theta_\beta\big(\theta_\mathsf{d}\big(\theta_{\word{w}}(\word{w},\word{x})\big)\big)
		=\beta\big(\tau_{\word{w}}(\word{w},\word{x})\big)\]
	imply that $\varphi(\word{w},\word{x},\tau_{\word{w}}(\word{w},\word{x}))$ holds as well.
\end{proof}

\section{Proof of \texorpdfstring{\cref{thm:trees-index-undec}}{Theorem \ref{thm:trees-index-undec}}}
\label{app:trees-index-undex}

\thmTressIndexUndec*

We first rely on results of Bojańczyk et al.~\cite{bojanczyk_msou_final} to argue that an~intermediate formalism MSO+\textsf{bounded} is undecidable. Then we prove \cref{lem:bound2safety}, showing that this formalism encodes within MSO+$\qI$.

\subsection{Undecidability of \texorpdfstring{MSO+\textsf{bounded}}{MSO+bounded}}

We rely on the proof that MSO+$\qU$ is undecidable, by Bojańczyk et al.~\cite{bojanczyk_msou_final}.
Structures considered in their paper are infinite sequences (indexed by natural numbers) of finite trees of depth\=/$4$.
Nodes in these trees may have arbitrarily many children, which are ordered, and each path from a~node to its descendant consists of at most $4$ nodes.
A~logic over such a~structure has access to binary relations ``being a child'', ``being the next sibling'', and ``being the root of the next tree''.

Let us now give a few definitions (with the same names as in the full binary tree, but with a~different meaning).
A set $I$ of nodes in a~sequence of depth\=/$4$ trees is called an~\emph{interval} if all its elements are consecutive siblings in some tree.
Two intervals $I_1$, $I_2$ are \emph{independent} if no node of one interval is a~descendant, ancestor, or sibling of a node of the other interval (in other words, the intervals need to either be contained in distinct trees, or their parents need to be incomparable with respect to the descendant order).
Finally, a \emph{union of independent intervals} is a~set $\tree{z}$ that can be written as $\bigcup_{i\in J}I_i$, where $(I_i)_{i\in J}$ are pairwise independent intervals.
Consider now an~extension of MSO, interpreted over sequences of depth\=/$4$ trees, in which we have access to an~atomic formula $\mathsf{bounded}(Z)$,
saying that $Z$ is a~union of independent intervals whose sizes are bounded by some $n\in\N$;
denote this extension by MSO+\textsf{bounded}.
We use the following theorem.

\begin{theorem}[\cite{bojanczyk_msou_final}]\label{thm:mso+u}
	The following problem is undecidable: given a~sentence $\psi$ of the logic MSO+\textsf{bounded}, say whether there exists a~sequence of depth\=/$4$ trees in which $\psi$ holds.
\end{theorem}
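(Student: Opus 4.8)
The plan is \emph{not} to reprove the undecidability of MSO+$\qU$, but to reuse the reduction of Bojańczyk et al.~\cite{bojanczyk_msou_final}. That reduction produces, from an undecidable problem---say, whether a given Minsky machine $M$ halts---a computable sentence $\psi_M$ of MSO+$\qU$, interpreted over sequences of depth\=/$4$ trees, that is satisfiable if and only if $M$ halts. I would rewrite $\psi_M$ as a sentence $\psi'_M$ of MSO+\textsf{bounded} equivalent to it on all such structures; since $M\mapsto\psi_M\mapsto\psi'_M$ is computable, \cref{thm:mso+u} then follows at once.

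The decisive observation---and the reason that presentation uses depth\=/$4$ trees in the first place---is that $\qU$ occurs in $\psi_M$ in only one disciplined way: to test whether certain unary\=/encoded quantities of the candidate run (counter values, distances, and the like) stay bounded. Concretely, every occurrence of $\qU$ in $\psi_M$ has the form $\qU X.\,\theta(X,\vec V)$, where $\theta$ is an MSO formula such that, in every model and under every valuation of $\vec V$, the finite sets $X$ satisfying $\theta(X,\vec V)$ are exactly certain intervals---blocks of consecutive siblings---and any two of these intervals are independent, i.e.\ they lie in distinct trees of the sequence or under incomparable parents. The first step is to isolate and verify this property by going through the construction of~\cite{bojanczyk_msou_final}; this is bookkeeping, but it is where the content of the theorem sits.

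Granting the property, fix such a subformula $\qU X.\,\theta(X,\vec V)$ and let $Z_{\vec V}$ be the union of all intervals $X$ with $\theta(X,\vec V)$. In every model $Z_{\vec V}$ is a union of independent intervals, and since independent intervals are pairwise disjoint, ``$Z=Z_{\vec V}$'' together with the recovery of the interval decomposition of $Z$ is expressible in plain MSO from $\vec V$ (one says ``$Z$ is a union of independent intervals whose intervals are precisely the sets $X$ satisfying $\theta(X,\vec V)$''). By the semantics of $\qU$, the statement $\qU X.\,\theta(X,\vec V)$ holds exactly when the lengths of the intervals of $Z_{\vec V}$ are unbounded, i.e.\ exactly when $\lnot\,\mathsf{bounded}(Z_{\vec V})$ holds. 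Replacing, innermost\=/first, each such subformula by $\exists Z.\big((Z=Z_{\vec V})\land\lnot\,\mathsf{bounded}(Z)\big)$ turns $\psi_M$ into a sentence $\psi'_M$ of MSO+\textsf{bounded} logically equivalent to $\psi_M$ on all sequences of depth\=/$4$ trees; hence $\psi'_M$ is satisfiable iff $M$ halts, which gives \cref{thm:mso+u}.

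The main obstacle is precisely the structural inspection of~\cite{bojanczyk_msou_final} that licenses the replacement above: one must be sure that no occurrence of $\qU$ there quantifies over genuinely non\=/interval finite sets, nor over intervals that may overlap or nest. I expect this to hold---the four\=/level layout (one node per computation step, then per tape, then per cell, then the unary block whose length is the quantity to be bounded) is tailored precisely so that $\qU$ counts such sibling\=/blocks---so the honest formulation is ``by inspection of the construction in~\cite{bojanczyk_msou_final}''. Should the published encoding use $\qU$ less tamely, the fallback is to redo the reduction from scratch on depth\=/$4$ tree sequences, following the same template but laying out the encoding of the run of $M$ so that every unboundedness constraint is, by construction, a statement about the lengths of a family of pairwise independent sibling\=/blocks.
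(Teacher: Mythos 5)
Your proposal is correct and follows essentially the same route as the paper: reuse the reduction of Bojańczyk et al.\ from Minsky-machine emptiness to MSO+$\qU$ satisfiability over sequences of depth\=/$4$ trees, observe that $\qU$ is used there only to express (un)boundedness of the lengths of a family of pairwise independent sibling intervals, and replace each such subformula by the $\mathsf{bounded}$ atom (negated as appropriate). The paper likewise leaves the key step as an inspection of the published construction, so your honest caveat matches the actual level of detail given there.
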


Strictly speaking, the undecidability result of Bojańczyk et al.~\cite{bojanczyk_msou_final} is given for the MSO+$\qU$ logic, but such a~modification can be easily obtained.
Namely, their undecidability proof provides a~reduction from an~undecidable problem (language emptiness for Minsky machines) to satisfiability of MSO+$\qU$.
Formulae of MSO+$\qU$ constructed in this reduction use the $\qU$ quantifier only in subformulae $\varphi_\mathsf{b}(Z)$ saying, for a~union of independent intervals $Z$,
that intervals in $Z$ have lengths bounded by some number.
Replacing each such subformula by the atom $\mathsf{bounded}(Z)$, we obtain an~equivalent formula of MSO+\textsf{bounded}.
After this modification, we obtain a~reduction from an~undecidable problem to satisfiability of MSO+\textsf{bounded},
proving undecidability of the latter.

\subsection{From \texorpdfstring{MSO+\textsf{bounded}}{MSO+bounded} to trees}

Our goal now is to observe that the problem from \cref{thm:mso+u} can be expressed in MSO+$\qI^\Indet_{\CW_{0,1}}$ and in MSO+$\qI^\Idet_{\CW_{0,1}}$
over the full binary tree.
To this end, we encode a~sequence of depth\=/$4$ trees as a set $S$ of nodes of the infinite binary tree, where nodes of trees in the sequence are in one-to-one correspondence with nodes of $S$.
The root of the $i$\=/th tree of the sequence is encoded as $\dR^i$.
Then, each tree is represented using the first\=/child next\=/sibling encoding: if a node is encoded as $u$, then its $i$-th child is encoded as $u\dL\dR^i$.
Thus, the question ``does there exist a sequence of depth\=/$4$ trees'' changes to ``does there exists a~set $S$ encoding a~sequence of depth\=/$4$ trees''.
Clearly, MSO over the binary tree can express the fact that $S$ is indeed a~valid encoding of some sequence of depth\=/$4$ trees.
Moreover, the sentence $\psi$ talking about a~sequence of depth\=/$4$ trees can be rewritten into a~sentence over the binary tree talking about the encoding $S$ of this sequence,
as long as MSO constructions are used.
It remains to translate the new atomic formula $\mathsf{bounded}(Z)$.
Here, the fact that $Z$ is a~union of independent intervals can be easily expressed, the only difficulty is in boundedness.
But we observe that siblings in depth\=/$4$ trees are encoded along a branch going right;
thus an~interval in a~depth\=/$4$ tree becomes encoded as an~interval in the binary tree,
and a~union of independent intervals in a~sequence of depth\=/$4$ trees becomes encoded as a~union of independent intervals in the binary tree.
In consequence, we can express the fact that the sizes of intervals in $Z$ are bounded by the formula $\varphi_\mathsf{b}(Z)$ from \cref{lem:bound2safety}.
Recall that the formula $\varphi_\mathsf{b}(Z)$ can be written both in MSO+$\qI^\Indet_{\CW_{0,1}}$ and in MSO+$\qI^\Idet_{\CW_{0,1}}$,
so (taking \cref{lem:bound2safety} as a~proviso) this proves undecidability of both these logics.

\subsection{Proof of \texorpdfstring{\cref{lem:bound2safety}}{Lemma \ref{lem:bound2safety}}}
\label{appp:bound2safety}

In this subsection we formally prove the remaining implication from the proof of \cref{lem:bound2safety}, namely that if intervals in $\tree{z}$ have unbounded lengths then $\varphi_\mathsf{b}(\tree{z})$ does not hold.

Recall that
\[\varphi_\mathsf{b}(Z)\, \equiv\, \forall W\subseteq Z.\ \forall Y\subseteq Z.\ \qI^D_{\CW_{0,1}}X.\ \underbrace{\big(X\subseteq W\land\forall x\in X.\,\exists y\in Y.\,x\preceq y\big)}_{\psi(W,Y,X)},\]
	and that as $\tree{w}$ we take the set of all topmost points of all intervals in $\tree{z}$.
	Below we construct the set $\tree{y}$.

	\begin{claim}
		We can find nodes $\epsilon=u_0\prec u_1\prec u_2\prec\dots$ and intervals $I_0,I_1,I_2,\dots$ from $\tree{z}$ such that $\len(I_i)\geq i$ and $u_i\preceq\topn(I_i)$ but $u_{i+1}\not\preceq\topn(I_i)$ for all $i\in\N$.
	\end{claim}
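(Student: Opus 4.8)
The plan is to build the sequence of nodes $u_i$ and intervals $I_i$ by induction on $i$, using the assumption that the lengths of the intervals in $\tree{z}$ are unbounded. Since $u_0=\epsilon$ is the root of the binary tree, every interval $I$ in $\tree{z}$ trivially satisfies $u_0\preceq\topn(I)$; so the base case only requires picking $I_0$ to be any interval of length at least $0$, which always exists.

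For the inductive step, suppose we already have $u_i$ and $I_i$ with $\len(I_i)\geq i$ and $u_i\preceq\topn(I_i)$. We want to choose $u_{i+1}\succ u_i$ with $u_{i+1}\not\preceq\topn(I_i)$ and then an interval $I_{i+1}$ of length at least $i+1$ with $u_{i+1}\preceq\topn(I_{i+1})$. The key observation is that for any fixed node $v$ (here $v=u_i$), there can be at most finitely many intervals $I$ in $\tree{z}$ whose topmost node $\topn(I)$ satisfies $v\preceq\topn(I)$ \emph{and} has bounded distance from $v$: more precisely, the topmost nodes of pairwise independent intervals are pairwise $\preceq$-incomparable, so among the intervals in $\tree{z}$ whose topmost node is a descendant of $u_i$, their topmost nodes form an antichain; if we moreover bound how deep $\topn(I)$ lies below $u_i$, only finitely many nodes are available. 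Since $\tree{z}$ has intervals of arbitrarily large length, in particular there are intervals of length $\geq i+1$; and since only finitely many of them can have $\topn(I)$ lying ``close to'' $u_i$ in a way that would force $\topn(I)\preceq$-comparable with every candidate $u_{i+1}$, we can pick some interval $I_{i+1}$ of length $\geq i+1$ whose topmost node $\topn(I_{i+1})$ is a strict descendant of $u_i$ but is \emph{not} an ascendant of $\topn(I_i)$. Concretely: consider the infinite set $J$ of intervals in $\tree{z}$ of length $\geq i+1$. If all but finitely many of them had their topmost node on the path from $u_i$ to $\topn(I_i)$, then since that path is finite, infinitely many intervals of $\tree{z}$ would share the same topmost node, contradicting that distinct intervals of a union of independent intervals have distinct (indeed incomparable) topmost nodes. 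Hence some $I_{i+1}\in J$ has $\topn(I_{i+1})$ off that path; we then let $u_{i+1}$ be the highest node with $u_i\prec u_{i+1}\preceq\topn(I_{i+1})$ and $u_{i+1}\not\preceq\topn(I_i)$, which exists because $\topn(I_{i+1})$ itself is a witness and we may take the topmost such node along the path to $\topn(I_{i+1})$. This gives $u_i\prec u_{i+1}$, $u_{i+1}\preceq\topn(I_{i+1})$, $u_{i+1}\not\preceq\topn(I_i)$, and $\len(I_{i+1})\geq i+1$, completing the induction.

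The main obstacle to watch for is making sure the chosen $u_{i+1}$ really is a \emph{strict} descendant of $u_i$ and that the nesting $u_0\prec u_1\prec u_2\prec\cdots$ is genuinely strict at every step — this is why we must insist $\topn(I_{i+1})$ be a \emph{proper} descendant of $u_i$ (not equal to it) and not lie on the $u_i$–$\topn(I_i)$ path; the finiteness-of-that-path argument, together with the incomparability of topmost nodes of independent intervals, is exactly what delivers this. One should also double-check that when $u_i=\topn(I_i)$ is possible (which happens at $i=0$), the argument still picks a strictly deeper $u_1$; this is fine since $I_1$ can be chosen with $\topn(I_1)\neq\epsilon$ as there are infinitely many candidate intervals but only one whose topmost node is the root. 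Everything else is routine tree combinatorics.
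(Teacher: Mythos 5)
There is a genuine gap in the inductive step: your induction hypothesis is too weak, and the justification for the existence of $I_{i+1}$ does not deliver what you claim. You argue that, among the infinitely many intervals of length $\geq i+1$, only finitely many can have their topmost node \emph{on the path} from $u_i$ to $\topn(I_i)$, and you conclude that some such interval has its topmost node a \emph{strict descendant of $u_i$}. But ``off the path'' is not the same as ``below $u_i$'': the complement of that finite path contains the whole rest of the tree, and nothing in your construction guarantees that any long interval survives in the subtree rooted at $u_i$. Concretely, take $\tree{z}$ consisting of an interval $K$ with $\topn(K)=\dL$ of huge length together with intervals $J_n$ with $\topn(J_n)=\dR^n\dL$ and $\len(J_n)=n$ for $n\geq 1$ (all topmost nodes pairwise incomparable, lengths unbounded). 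Starting from $u_0=\epsilon$ with $I_0=J_1$, your rule permits choosing $I_1=K$, which forces $u_1=\dL$; but then no interval of length $\geq 2$ has its topmost node strictly below $u_1$, and the construction is stuck. (A smaller inaccuracy: since topmost nodes of independent intervals are pairwise incomparable while nodes on a path are pairwise comparable, at most \emph{one} interval can have its topmost node on the path, so the ``infinitely many share a topmost node'' contradiction is not the right bookkeeping either.)

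The fix — and this is what the paper does — is to carry the stronger invariant that the intervals of $\tree{z}$ contained in the subtree rooted at $u_i$ still have \emph{unbounded} lengths, and to choose $u_{i+1}$ so as to preserve it. After picking $I_i$ of length $\geq i$ inside that subtree, decompose the subtree at $u_i$ into the finite path from $u_i$ to $\topn(I_i)$ together with the finitely many subtrees hanging off the side of that path (rooted at nodes whose parent lies on the path but which are not on the path themselves). At most one interval has its topmost node on the path, so by pigeonhole at least one side subtree, rooted at some $v$, again contains intervals of unbounded lengths; take $u_{i+1}=v$. This simultaneously gives $u_i\prec u_{i+1}$, $u_{i+1}\not\preceq\topn(I_i)$, and the invariant needed to find $I_{i+1}$ of length $\geq i+1$ below $u_{i+1}$ at the next step. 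Your proof is missing exactly this preservation argument.
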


	\begin{claimproof}
		Suppose that we already have points $u_0,u_1,\dots,u_n$ and intervals $I_0,I_1,\dots,I_{n-1}$ as above, where intervals in $\tree{z}\cap\{u\mid u_n\preceq u\}$ have unbounded lengths
		(which we initialise by taking $u_0=\epsilon$ for $n=0$).
		We choose the next interval $I_n$ and node $u_{n+1}$ as follows.
		As $I_n$ we just take any interval in $\tree{z}\cap\{u\mid u_n\preceq u\}$ of length at least $n$.
		Then we consider nodes on the side of the path from $u_n$ to $\topn(I_n)$ (i.e., nodes that are not on that path, but whose parents are on that path).
		There are finitely many of them, so at least one such node $v$ should have the property that intervals in $\tree{z}\cap\{u\mid v\preceq u\}$ have unbounded lengths.
		We take any such $v$ as $u_{n+1}$.
	\end{claimproof}

	Below we consider non\=/deterministic safety automata over the alphabet $\{0,1\}^3$; they read binary trees with each node labelled by a triple $(w,y,z)\in\{0,1\}^3$,
	where $w$ says whether the node belongs to a set $\tree{w}$; likewise $y$ for $\tree{y}$ and $x$ for $\tree{x}$.
	We say that such an~automaton $\CA$ is \emph{monotone} if for every transition $(q,(w,y,1),0,q_\dL,q_\dR)$ it has also a~transition $(q,(w,y,0),0,q_\dL,q_\dR)$.
	In other words, if $\rho$ is a~run of $\CA$ over $\langle \tree{w},\tree{y},\tree{x}\rangle$, and $\tree{x}'\subseteq \tree{x}$, then $\rho$ remains a~valid run of $\CA$ over $\langle\tree{w},\tree{y},\tree{x}'\rangle$.

		\begin{claim}
		\label{cl:make-it-monotone}
		If for some $\tree{w}$, $\tree{y}$ the subformula $\psi(\tree{w},\tree{y},X)$ is recognised by some non\=/deterministic safety automaton $\CA$, then it is also recognised by a~monotone non\=/deterministic safety automaton $\CA'$.
		\end{claim}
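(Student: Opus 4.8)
The plan is to observe that, with the parameters $\tree{w}$ and $\tree{y}$ held fixed, the set of trees $\tree{x}$ satisfying $\psi(\tree{w},\tree{y},\tree{x})$ is downward closed under inclusion, and then to turn $\CA$ into a monotone automaton recognising that same downward-closed set. For the downward-closure observation: if $\psi(\tree{w},\tree{y},\tree{x})$ holds and $\tree{x}'\subseteq\tree{x}$, then $\tree{x}'\subseteq\tree{x}\subseteq\tree{w}$, and every node of $\tree{x}'$ is a node of $\tree{x}$ and hence has a descendant in $\tree{y}$; so $\psi(\tree{w},\tree{y},\tree{x}')$ holds as well. Thus it suffices to exhibit a monotone non\=/deterministic safety automaton with the same language as $\CA$ on inputs whose first two components are $\tree{w}$ and $\tree{y}$.

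Concretely, I would build $\CA'$ from $\CA$ (say with states $Q$, initial states $\iota$, and transition relation $\Delta$) by keeping all of $\CA$ and adding ``erasing twins'': for every transition $(q,(w,y,1),k,q_\dL,q_\dR)\in\Delta$ we also put $(q,(w,y,0),k,q_\dL,q_\dR)$ into $\Delta'$, changing only the third coordinate of the letter from $1$ to $0$ and leaving the priority $k$ and the target states untouched. Since $\Delta\subseteq\Delta'$, the automaton $\CA'$ remains complete, and it is manifestly monotone: every priority\=/$0$ transition of $\CA'$ reading a node marked $1$ by $X$ is a transition of $\Delta$, so by construction $\Delta'$ also contains the corresponding priority\=/$0$ transition reading that node marked $0$.

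For correctness, one inclusion is immediate: every run of $\CA$ is a run of $\CA'$, so whenever $\psi(\tree{w},\tree{y},\tree{x})$ holds, $\CA$ accepts $\langle\tree{w},\tree{y},\tree{x}\rangle$ and hence so does $\CA'$. For the converse, take an accepting run $\tree\rho$ of $\CA'$ over $\langle\tree{w},\tree{y},\tree{x}\rangle$; since the index is $\CW_{0,1}$, ``accepting'' means that the run uses only priority\=/$0$ transitions. At each node $v$ the transition used in $\tree\rho$ lies in $\Delta'$; if it is not already in $\Delta$, it is one of the added twins, so $\tree{x}(v)=0$ and $\Delta$ contains the same (priority\=/$0$) transition with $1$ in the third coordinate. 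Let $\tree{x}''$ be obtained from $\tree{x}$ by setting the value at $v$ to $1$ at exactly these nodes, so $\tree{x}''\supseteq\tree{x}$. Then, node by node, $\tree\rho$ is a valid run of $\CA$ over $\langle\tree{w},\tree{y},\tree{x}''\rangle$, and it is still accepting because we never altered any priority; as $\CA$ recognises $\psi(\tree{w},\tree{y},X)$, this gives $\psi(\tree{w},\tree{y},\tree{x}'')$, and the downward\=/closure observation then yields $\psi(\tree{w},\tree{y},\tree{x})$. Hence $\CA'$ recognises $\psi(\tree{w},\tree{y},X)$, as required.

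There is no real obstacle here; the construction is elementary and clearly effective. The only point that needs a moment's care is verifying that the ``lifted'' run over $\tree{x}''$ is genuinely a run of $\CA$ at every node and that it stays accepting — which is exactly why the twins are defined to preserve both the priority and the successor states — together with spelling out that it is the downward closure of the language of $\psi$ in its $X$\=/argument that lets us pass from acceptance of $\langle\tree{w},\tree{y},\tree{x}''\rangle$ back to acceptance of $\langle\tree{w},\tree{y},\tree{x}\rangle$.
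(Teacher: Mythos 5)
Your proof is correct and follows essentially the same route as the paper: add to $\CA$ the transitions reading $(w,y,0)$ obtained from transitions reading $(w,y,1)$, note that runs of $\CA'$ lift to runs of $\CA$ over a superset $\tree{x}''\supseteq\tree{x}$, and conclude via the downward closure of $\set{\tree{x}\mid\psi(\tree{w},\tree{y},\tree{x})}$ under inclusion (which the paper states inline as ``the property is even more satisfied for the subset''). The only cosmetic difference is that you add twins for all priorities rather than just priority~$0$, which is harmless for a safety automaton.
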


		\begin{claimproof}
		To obtain $\CA'$, it is enough to add to $\CA$ all the transitions $\big(q,(w,y,0),0,q_\dL,q_\dR\big)$ such that $\big(q,(w,y,1),0,q_\dL,q_\dR\big)$ is a~transition of $\CA$.
		Then every run of $\CA$ is a~run of $\CA'$, so $\CA'$ accepts all sets $\tree{x}$ satisfying our property.
		Conversely, suppose that we have a~run of $\CA'$ for some set $\tree{x}'$.
		Then, by the construction of $\CA'$ there exists a~run of $\CA$ over some set $\tree{x}\supseteq \tree{x}'$ (whenever $\CA'$ used a~transition reading $(w,y,0)$ existing only in $\CA'$, we add the node to $\tree{x}$, and we use the corresponding transition of $\CA$ reading $(w,y,1)$).
		By the assumption on $\CA$, the set $\tree{x}$ satisfies our property, which is even more satisfied for its subset $\tree{x}'\subseteq \tree{x}$.
		\end{claimproof}

		Consider now a~list of all monotone non\=/deterministic safety automata $\CA_0,\CA_1,\CA_2,\ldots$ over the alphabet $\{0,1\}^3$.
		Fix some $n\in\N$ and suppose that we already have a~number $m_n$ and we already fixed the set $\tree{y}\cap\{u\mid u_{m_n}\not\preceq u\}\subseteq \tree{z}$
		(i.e.,~for nodes not being descendants of $u_{m_n}$ we have already decided which of them belong to $\tree{y}$ and which do not),
		with the property that no matter how we choose $\tree{y}\cap\{u\mid u_{m_n}\preceq u\}\subseteq \tree{z}$, the automata $\CA_0,\ldots,\CA_{n-1}$ do not recognise the subformula $\psi(\tree{w},\tree{y},X)$ (taking $m_0=0$ we trivially have the above property for $n=0$).
		We now define the next number $m_{n+1}>m_n$, and the next fragment of the set $\tree{y}$, namely $\tree{y}\cap\{u\mid u_{m_n}\preceq u\land u_{m_{n+1}}\not\preceq u\}$, in a~way that the above holds for $n{+}1$.
		After finishing this construction for all $n\in\N$,
		we will obtain a~whole set $\tree{y}\subseteq \tree{z}$ such that no monotone non\=/deterministic safety automaton $\CA$
		correctly recognises the subformula~$\psi(\tree{w},\tree{y},X)$. Thus, \cref{cl:make-it-monotone} implies that no non\=/deterministic (thus neither deterministic) safety automaton recognises the subformula~$\psi(\tree{w},\tree{y},X)$, meaning that $\varphi_\mathsf{b}(\tree{z})$ does not hold.

		For an~automaton $\CA_n$, with set of states $Q$, we proceed as follows.
		To each pair $(i,j)$ with $i<j$ we assign the set $R_{i,j}$ of triples $(q_\top,q_\bot,q_I)$ such that
		there is an~accepting (i.e.,~without transitions of priority $1$) run of $\CA_n$ over the context $C_{i,j}=\{u\mid u_i\preceq u\land u_j\not\preceq u\land\topn(I_i)\not\preceq u\}$,
		where both $\tree{y}$ and $\tree{x}$ are empty (and where $\tree{w}$ contains topmost nodes of all intervals in $\tree{z}$), with $q_\top$ in the root $u_i$, with $q_\bot$ in the hole $u_j$, and with $q_I$ in the hole $\topn(I_i)$.
		By the Ramsey theorem (see \cref{thm:ramsey}), we can find indices $i_0,\ldots,i_{2|Q|}$ with $\max(m_n,|Q|+1)\leq i_0<i_1<\ldots<i_{2|Q|}$ such that all $R_{i_j,i_{j'}}$ for $0\leq j<j'\leq 2|Q|$ are the same.
		We fix $m_{n+1}=i_{2|Q|}$, and we add the bottommost points of the intervals $I_{i_0},I_{i_2}, I_{i_4},\ldots,I_{2|Q|}$ (even numbers only!) to $\tree{y}$.

		Suppose now that the rest of $\tree{y}$ (i.e.,~its part below $u_{m_{n+1}}$) is also fixed.
		Take $\tree{x}$ containing topmost points of the intervals $I_{i_0},I_{i_2},I_{i_4},\ldots,I_{2|Q|}$ (again even numbers only).
		For such a set $\tree{x}$ the subformula $\psi(\tree{w},\tree{y},\tree{x})$ holds.
		If $\CA_n$ does not accept the triple $\langle \tree{w},\tree{y},\tree{x}\rangle$, we are done.
		Suppose it accepts, and fix some accepting run $\rho$.
		By the pigeonhole principle, there are indices $b$, $e$ with $0\leq b<e\leq|Q|$ such that $\rho(u_{i_{2b}})=\rho(u_{i_{2e}})$; denote this state $q$.
		Denote also $q_1=\rho(u_{i_{2b+2}})$.

		\begin{claim}\label{cl:1}
			We have $\big(q,q_1,\rho(\topn(I_{i_{2b}}))\big)\in R_{i_{2b},i_{2b+2}}=R_{i_{2b+1},i_{2b+2}}$.
		\end{claim}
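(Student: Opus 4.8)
The equality $R_{i_{2b},i_{2b+2}}=R_{i_{2b+1},i_{2b+2}}$ is immediate: the indices $i_0<\cdots<i_{2|Q|}$ were chosen so that all the sets $R_{i_j,i_{j'}}$ with $0\le j<j'\le 2|Q|$ coincide, and $2b<2b+1<2b+2\le 2|Q|$. So the real content is the membership $\bigl(q,q_1,\rho(\topn(I_{i_{2b}}))\bigr)\in R_{i_{2b},i_{2b+2}}$, and the plan is to exhibit as a witness the restriction of the accepting run $\rho$ to the context $C_{i_{2b},i_{2b+2}}=\{u\mid u_{i_{2b}}\preceq u\land u_{i_{2b+2}}\not\preceq u\land\topn(I_{i_{2b}})\not\preceq u\}$.

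First I would note that this restriction is a legitimate run of $\CA_n$ over $C_{i_{2b},i_{2b+2}}$: the context is a connected region of the binary tree whose only holes are $u_{i_{2b+2}}$ and $\topn(I_{i_{2b}})$, and since $\rho$ respects the transitions of $\CA_n$ at every node, so does its restriction; the root state is $\rho(u_{i_{2b}})=\rho(u_{i_{2e}})=q$ by the choice of $b<e$, the hole states are $\rho(u_{i_{2b+2}})=q_1$ and $\rho(\topn(I_{i_{2b}}))$, and the restriction is accepting because $\rho$ uses no priority\=/$1$ transition anywhere, hence neither does its restriction. The only thing still to be checked is that the restriction reads the labelling prescribed in the definition of $R_{i_{2b},i_{2b+2}}$, i.e.,~$\tree{w}$ equal to the topmost points of all intervals of $\tree{z}$ (literally the same set in both places) and $\tree{y}=\tree{x}=\emptyset$; thus it remains to prove $\tree{y}\cap C_{i_{2b},i_{2b+2}}=\tree{x}\cap C_{i_{2b},i_{2b+2}}=\emptyset$.

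For this I would first isolate a clean geometric fact from the construction: $\topn(I_i)$ is \emph{incomparable} with $u_{i+1}$ (hence with every $u_m$, $m>i$), since $u_{i+1}\not\preceq\topn(I_i)$ by the preceding Claim, while $\topn(I_i)\prec u_{i+1}$ would give $\topn(I_i)\prec u_{i+1}\preceq\topn(I_{i+1})$, contradicting the independence of the intervals $I_i,I_{i+1}$ of $\tree{z}$. Now every point of $\tree{x}$ is one of the chosen topmost points $\topn(I_m)$, and every point of $\tree{y}$ is a descendant of $\topn(I_m)\succeq u_m$ for some chosen even index $m$ (from this or an earlier stage, or from the part later placed below $u_{m_{n+1}}$). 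If $m=i_{2b}$ the point is a descendant of $\topn(I_{i_{2b}})$, excluded by the third conjunct of $C_{i_{2b},i_{2b+2}}$; if $m\ge i_{2b+2}$ — which covers every chosen even index exceeding $i_{2b}$, including all those from later stages since $m_{n+1}=i_{2|Q|}\ge i_{2b+2}$ — the point is a descendant of $u_m\succeq u_{i_{2b+2}}$, excluded by the second conjunct; and if $m<i_{2b}$ then $\topn(I_m)\perp u_{m+1}$, and since $u_{m+1}\preceq u_{i_{2b}}$ also $\topn(I_m)\perp u_{i_{2b}}$, so $\topn(I_m)$ and all its descendants fail the first conjunct. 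Hence no point of $\tree{x}$ or $\tree{y}$ lies in $C_{i_{2b},i_{2b+2}}$, the restricted run is a valid witness, and the claim follows. I expect the bookkeeping in this last step — extracting the incomparability fact and verifying that the three defining conjuncts of the context account for every relevant endpoint — to be the only real obstacle; everything else is routine.
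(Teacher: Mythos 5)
Your proof is correct and takes essentially the same route as the paper: the paper's own justification of this claim is the one\-/line observation that the run $\rho$ itself witnesses membership because $\tree{y}$ and $\tree{x}$ have no elements in the context $C_{i_{2b},i_{2b+2}}$, with the equality coming from the Ramsey choice of the indices $i_j$. You have merely supplied the geometric bookkeeping (the incomparability of $\topn(I_m)$ with $u_{m+1}$ and the check of the three conjuncts defining the context) that the paper leaves implicit, and that bookkeeping is sound.
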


		\begin{claimproof}
			This is directly witnessed by the run $\rho$, because our sets $\tree{y}$ and $\tree{x}$ do not have any elements in the context $C_{i_{2b},i_{2b+2}}$.
			The equality is by the definition of the indices $i_j$.
		\end{claimproof}

		\begin{claim}\label{cl:2}
			For every $i\geq|Q|+1$ (in particular for all $i=i_{2j}$ with $b<j<e$) there is a~run of $\CA$ over the subtree starting in $\topn(I_i)$,
			with the original $\tree{w}$ (containing only $\topn(I_i)$), with $\emptyset$ taken as $\tree{y}$ and $\tree{x}$, and with the state $\rho(\topn(I_i))$ in $\topn(I_i)$.
		\end{claim}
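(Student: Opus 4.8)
\textbf{Proof plan for Claim~\ref{cl:2}.}
The plan is to obtain the required run by restricting the accepting run $\rho$ to the subtree rooted at $\topn(I_i)$ and then repairing the few discrepancies between the labelling this subtree inherits from $\langle\tree{w},\tree{y},\tree{x}\rangle$ and the target labelling, in which only the root carries the $\tree{w}$-bit and everything else is blank.

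First I would pin down the inherited labelling. Writing $v_k=\topn(I_i)\dR^k$ for the nodes of the right spine, so that $I_i=\{v_0,\dots,v_{\len(I_i)}\}$, I claim that inside the subtree rooted at $\topn(I_i)$ one has $\tree{w}=\{\topn(I_i)\}$, $\tree{x}\subseteq\{\topn(I_i)\}$, $\tree{y}\subseteq\{v_{\len(I_i)}\}$, and every other node is blank. This is a direct consequence of the independence of the intervals constituting $\tree{z}$: no top nor bottom of an interval of $\tree{z}$ other than $I_i$ can be a descendant of $\topn(I_i)$, and the still-undecided part of $\tree{y}$ lies below $u_{m_{n+1}}=u_{i_{2|Q|}}$, which is not a descendant of $\topn(I_i)$ because $\topn(I_{i_{2j}})$ and $\topn(I_{i_{2|Q|}})$ are tops of distinct intervals of $\tree{z}$, hence incomparable. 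So the inherited labelling agrees with the target one except possibly for an $X$-bit at $\topn(I_i)$ and a $Y$-bit at $v_{\len(I_i)}$, both of which occur only when $i$ is one of the $i_{2l}$.

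Next, the restriction of $\rho$ to this subtree is a run of $\CA$ from the state $\rho(\topn(I_i))$, and it never uses a priority-$1$ transition (since $\rho$ is accepting and $\CA$ is a safety automaton), so it is accepting. I would then erase the extra $X$-bit at $\topn(I_i)$ using monotonicity of $\CA$: the transition $\rho$ takes at $\topn(I_i)$ reads a letter $(w,y,1)$, so $\CA$ also has the corresponding transition reading $(w,y,0)$ with the same two successor states, and we substitute it without changing any state.

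The only substantial step is to remove the stray $Y$-bit at $v_{\len(I_i)}$ by pumping the spine. Since $\len(I_i)\geq i\geq|Q|+1$, among the states $\rho(v_1),\dots,\rho(v_{\len(I_i)})$ two coincide, say $\rho(v_a)=\rho(v_{a'})$ with $1\leq a<a'\leq\len(I_i)$. The portion of the run strictly between $v_a$ and $v_{a'}$ is a loop at state $\rho(v_a)$ that reads only blank letters — the spine nodes $v_a,\dots,v_{a'-1}$ lie strictly between $v_0$ and $v_{\len(I_i)}$ and their left subtrees are blank — so this loop, together with its blank left-subtree runs, can be iterated indefinitely starting from $v_{a'}$. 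The run obtained this way reads a blank letter at every spine node from $v_{a'}$ downwards, in particular at $v_{\len(I_i)}$, and at every node below the spine; it uses only priority-$0$ transitions; and it is a run precisely over the target labelling, as required. I expect the main obstacle to be the bookkeeping in the first step — verifying that, for the indices actually used, the only labelled nodes inside the subtree are the root and a single deep spine node — which rests entirely on the independence of the intervals of $\tree{z}$ and on the inequality $i\geq|Q|+1$, the latter also being what makes the pumping loop available.
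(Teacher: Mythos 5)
Your proposal is correct and follows essentially the same route as the paper's proof: use monotonicity of $\CA$ to erase the $X$-bit at the root, observe that within the subtree the only remaining non-blank node besides the root is the bottommost node of $I_i$ on the right spine at depth $\len(I_i)\geq|Q|+1$, and then pump a repeated-state segment of the spine (chosen with both endpoints at depth $\geq 1$ so the root's $\tree{w}$-bit is not duplicated) to replace everything below by blank labels while keeping the run accepting. The extra bookkeeping you do to verify the inherited labelling via independence of the intervals is a correct elaboration of what the paper leaves implicit.
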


		\begin{claimproof}
			First, by monotonicity of $\CA_n$ we can treat $\rho$ reading $\langle\tree{w},\tree{y},\tree{x}\rangle$ as a~run reading $\langle\tree{w},\tree{y},\emptyset\rangle$.
			Note that $\tree{y}$ restricted to our subtree can only contain the bottommost node of $I_i$ (some node on the rightmost branch, in distance at least $|Q|+1$ from the root).
			By the pigeonhole principle there are two indices $r$, $s$ with $1\leq r<s\leq|Q|+1$ with $\rho(\topn(I_i)\dR^r)=\rho(\topn(I_i)\dR^s)$.
			Then, below $\topn(I_i)\dR^r$ we repeat forever the part of the run from the context between $\topn(I_i)\dR^r$ and $\topn(I_i)\dR^s$,
			obtaining a~run over a~tree with empty $\tree{y}$ and $\tree{x}$ (and because $r\geq 1$ we do not copy the root node contained in $\tree{w}$).
		\end{claimproof}

		\begin{claim}\label{cl:3}
			For every $i\geq|Q|+1$ (in particular for $i=i_{2b}$) there is an~accepting run of $\CA_n$ over the subtree starting in $\topn(I_i)$,
			with the original $\tree{w}$ and $\tree{x}$, and with $\emptyset$ taken as $\tree{y}$, which has the state $\rho(\topn(I_i))$ in $\topn(I_i)$.
		\end{claim}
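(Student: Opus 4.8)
The plan is to obtain $\rho'$ by transplanting pieces of the fixed global accepting run $\rho$ onto a slightly repaired copy of the subtree $T_i$ rooted at $\topn(I_i)$, the repair being a single pumping along the branch that carries $I_i$. The structural fact that makes this work, and that I would establish first, is that inside $T_i$ the set $\tree{z}$ is met only along the branch $\topn(I_i),\topn(I_i)\dR,\dots,\topn(I_i)\dR^{\len(I_i)}$ of $I_i$: any node of $T_i$ lying in some interval $I_j$ of $\tree{z}$ has both $\topn(I_i)$ and $\topn(I_j)$ as ancestors, hence $\topn(I_i)$ and $\topn(I_j)$ are comparable, which by independence of the intervals of $\tree{z}$ forces $j=i$. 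Consequently, in $T_i$ the at most two nodes carrying a nonzero label are $\topn(I_i)$ (the unique node of $T_i$ in $\tree{w}$, and in $\tree{x}$ exactly when $i$ is one of the chosen indices $i_{2c}$) and the bottommost node $\topn(I_i)\dR^{\len(I_i)}$ of $I_i$ (which lies in $\tree{y}$ exactly when $i$ is one of the $i_{2c}$); every other node of $T_i$ carries $(0,0,0)$. Therefore the tree over which $\rho'$ must run — original $\tree{w}$ and $\tree{x}$, empty $\tree{y}$ — differs from $T_i$ only by removing the $\tree{y}$\=/bit at $\topn(I_i)\dR^{\len(I_i)}$, and it reads at $\topn(I_i)$ exactly the letter that $\rho$ reads there.

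Then I would define $\rho'$ as follows. At $\topn(I_i)$ it reuses the transition that $\rho$ takes there; this transition is legal (the letter read has not changed) and, since $\CA_n$ is a safety automaton and $\rho$ is accepting, it has priority~$0$. On the left subtree, which is labelled $(0,0,0)$ everywhere, $\rho'$ copies the fragment of $\rho$ rooted at $\topn(I_i)\dL$ verbatim. On the right subtree, $\rho'$ first follows $\rho$ down the branch $\topn(I_i)\dR,\topn(I_i)\dR^2,\dots$; as this branch has $\len(I_i)\geq i\geq|Q|+1$ nodes, the pigeonhole principle yields $1\leq r<s\leq\len(I_i)$ with $\rho(\topn(I_i)\dR^r)=\rho(\topn(I_i)\dR^s)$. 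Down to $\topn(I_i)\dR^r$ the branch and all left subtrees hanging off it are labelled $(0,0,0)$, so $\rho'$ keeps $\rho$'s fragment there, and from that point it splices in infinitely many copies of the loop of $\rho$ between $\topn(I_i)\dR^r$ and $\topn(I_i)\dR^s$. This is a valid infinite run because both ends of the loop carry the same state, and every node it ever reads carries $(0,0,0)$, since the at most two nonzero\=/labelled nodes of $T_i$, namely $\topn(I_i)$ and $\topn(I_i)\dR^{\len(I_i)}$, lie outside the loop ($r\geq1$ keeps the former out, $s\leq\len(I_i)$ keeps the latter out). In particular the node $\topn(I_i)\dR^{\len(I_i)}$ is pumped away and never occurs in $\rho'$, so $\rho'$ indeed reads the subtree with $\tree{y}$ emptied.

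To conclude: every transition used by $\rho'$ is one used by $\rho$, hence of priority~$0$, so $\rho'$ is accepting; and $\rho'(\topn(I_i))=\rho(\topn(I_i))$ by construction, giving the desired run. The only delicate step, which I would take care over, is the last one of the construction — verifying that the pumped loop contains no $\tree{y}$\=/node, so that the repair genuinely deletes the sole $\tree{y}$\=/occurrence $\topn(I_i)\dR^{\len(I_i)}$ without recreating another — and this is precisely where the ``$\tree{z}$ meets $T_i$ only along one branch'' observation, itself a consequence of the independence of the intervals of $\tree{z}$, does the work.
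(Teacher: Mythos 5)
Your proof is correct and follows essentially the same route as the paper's: the same structural observation that $\tree{z}\cap T_i=I_i$ (so $\tree{w}$, $\tree{x}$ can only mark the root of the subtree and $\tree{y}$ only the bottommost node of $I_i$), the same pigeonhole argument along the rightmost branch, and the same pumping of a loop with $r\geq 1$ and $s\leq\len(I_i)$ to push the unique $\tree{y}$\=/node out of the run while keeping the root (and hence the $\tree{w}$\=/ and $\tree{x}$\=/labels) uncopied. The paper merely states this more tersely by referring back to the proof of the preceding claim and noting that $\tree{x}$ need not be emptied since it can only contain the root.
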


		\begin{claimproof}
			As above, but we do not make $\tree{x}$ empty.
			Note that $\tree{x}$ may contain only the root of the subtree, so elements of $\tree{x}$ are not copied.
		\end{claimproof}

		\begin{claim}\label{cl:4}
			We have $\big(q,q,\rho(\topn(I_{i_{2b}}))\big)\in R_{i_{2b},i_{2e}}=R_{i_{2b},i_{2b+1}}$.
		\end{claim}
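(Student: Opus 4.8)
The plan is to produce an accepting run of $\CA_n$ over the context $C_{i_{2b},i_{2e}}$ (read with the original $\tree{w}$ and with $\tree{y}$, $\tree{x}$ empty) that carries the state $q$ at the root $u_{i_{2b}}$, the state $q$ at the hole $u_{i_{2e}}$, and the state $\rho(\topn(I_{i_{2b}}))$ at the hole $\topn(I_{i_{2b}})$; this is precisely the membership $\big(q,q,\rho(\topn(I_{i_{2b}}))\big)\in R_{i_{2b},i_{2e}}$. The equality $R_{i_{2b},i_{2e}}=R_{i_{2b},i_{2b+1}}$ then needs no further work: both $(2b,2e)$ and $(2b,2b+1)$ are pairs of indices in $\{0,\dots,2|Q|\}$ (from $b<e\le|Q|$ we get $2b<2e\le 2|Q|$ and $2b<2b+1\le 2|Q|$), and the indices $i_0<\dots<i_{2|Q|}$ were chosen via Ramsey's theorem (\cref{thm:ramsey}) so that all the sets $R_{i_j,i_{j'}}$ with $j<j'$ coincide.

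To build the run, I would start from the fixed accepting run $\rho$ over $\langle\tree{w},\tree{y},\tree{x}\rangle$ and restrict it to $C_{i_{2b},i_{2e}}$ together with the two holes $u_{i_{2e}}$ and $\topn(I_{i_{2b}})$. By the choice of $b,e$ we have $\rho(u_{i_{2b}})=\rho(u_{i_{2e}})=q$, and the state $\rho$ assigns to $\topn(I_{i_{2b}})$ is $\rho(\topn(I_{i_{2b}}))$, so the boundary data is already as required; the only thing to repair is the labelling, since this restriction still reads the $\tree{x}$-marks sitting at the nodes $\topn(I_{i_{2m}})$ and the $\tree{y}$-marks sitting at the bottommost nodes of $I_{i_{2m}}$ for $b<m<e$. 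Here I would spell out the tree geometry: because each $u_{k+1}$ is incomparable with $\topn(I_k)$, and because the intervals constituting $\tree{z}$ are pairwise independent so that their topmost points are pairwise incomparable, one checks that (i) for $b<m<e$ the whole subtree rooted at $\topn(I_{i_{2m}})$ is contained in $C_{i_{2b},i_{2e}}$ and $\tree{w}$ marks only its root inside it, and (ii) every $\tree{x}$- or $\tree{y}$-mark occurring anywhere in $C_{i_{2b},i_{2e}}$ lies in one of these subtrees. Consequently $\rho$ restricted to $C_{i_{2b},i_{2e}}$ minus these subtrees already reads $\langle\tree{w},\emptyset,\emptyset\rangle$ and uses no priority-$1$ transition.

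It then remains to fix the offending subtrees. For each $m$ with $b<m<e$ we have $i_{2m}\ge i_0\ge|Q|+1$, so \cref{cl:2} supplies a run over the subtree rooted at $\topn(I_{i_{2m}})$ read with original $\tree{w}$ (only the root), empty $\tree{y}$, and empty $\tree{x}$, carrying the state $\rho(\topn(I_{i_{2m}}))$ at that root; this run is accepting, as inspecting its construction (a priority-$1$-free fragment of the accepting run $\rho$, made $\tree{x}$-free by monotonicity and repeated periodically) makes clear. Splicing this run in place of $\rho$'s subrun on that subtree is consistent, because the parent of $\topn(I_{i_{2m}})$ lies in the untouched part where $\rho$ already sends the child-state $\rho(\topn(I_{i_{2m}}))$, and it introduces no priority-$1$ transition. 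The resulting run $\rho'$ is an accepting run over $C_{i_{2b},i_{2e}}$ read with $\langle\tree{w},\emptyset,\emptyset\rangle$ whose states at $u_{i_{2b}}$, $u_{i_{2e}}$, $\topn(I_{i_{2b}})$ are $q$, $q$, $\rho(\topn(I_{i_{2b}}))$, which is exactly the claim.

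I expect the only genuinely delicate part to be the geometric bookkeeping of the second paragraph --- checking that $C_{i_{2b},i_{2e}}$ swallows each subtree at $\topn(I_{i_{2m}})$ in full and meets the parameter sets $\tree{x}$, $\tree{y}$ nowhere else --- which is nevertheless routine once one draws the spine $u_{i_{2b}}\prec\dots\prec u_{i_{2e}}$ with the intervals $I_{i_{2m}}$ branching off it; the automata-theoretic input is limited to the pumping already packaged inside \cref{cl:2}.
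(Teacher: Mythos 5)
Your proof is correct and follows exactly the route of the paper's own (two\-/sentence) argument: restrict the accepting run $\rho$ to the context $C_{i_{2b},i_{2e}}$, observe that the boundary states are $q$, $q$, $\rho(\topn(I_{i_{2b}}))$ and that the only occurrences of $\tree{x}$ and $\tree{y}$ inside the context sit in the subtrees rooted at $\topn(I_{i_{2m}})$ for $b<m<e$, splice in the $\tree{x}$\-/ and $\tree{y}$\-/free runs from \cref{cl:2} there, and invoke the Ramsey choice of the indices for the equality of the two sets $R$. You merely spell out the geometric bookkeeping that the paper leaves implicit; no gap.
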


		\begin{claimproof}
			We modify the run $\rho$ over the context $C_{i_{2b},i_{2e}}$.
			Originally, this context contains some elements of $\tree{y}$ and $\tree{x}$ in subtrees starting in $\topn(I_{i_{2j}})$ for $b<j<e$,
			but we make them empty using \cref{cl:2}.
		\end{claimproof}

		We modify the original run $\rho$ as follows.
		\begin{itemize}
		\item	On the context $C_{i_{2b},i_{2b+1}}$ we replace the original run by any run with $q$ in $u_{i_{2b}}$ (as previously), with $q$ in $u_{i_{2b+1}}$, and with $\rho(\topn(I_{i_{2b}}))$ in $\topn(I_{i_{2b}})$ (as previously),
			existing by \cref{cl:4} and by the definition of $R_{i_{2b},i_{2b+1}}$.
		\item	On the context $C_{i_{2b+1},i_{2b+2}}$ we replace the original run by any run with $q$ in $u_{i_{2b+1}}$ (agrees with the part defined above), with $q_1$ in $u_{i_{2b+2}}$ (as previously),
			and with $\rho(\topn(I_{i_{2b}}))$ in $\topn(I_{i_{2b+1}})$, existing by \cref{cl:1} and by the definition of $R_{i_{2b+1},i_{2b+2}}$.
		\item	We add $\topn(I_{i_{2b+1}})$ to $\tree{x}$, thus we define $\tree{x}'\eqdef \tree{x}\cup\{\topn(I_{i_{2b+1}})\}$.
		\item	The subtree starting in $\topn(I_{i_{2b+1}})$ differs from the subtree starting in $\topn(I_{i_{2b}})$ only in the fact that there are no elements of $\tree{y}$ in the former subtree.
			Thus, we can cover it by a~run starting with $\rho(\topn(I_{i_{2b}}))$ in $\topn(I_{i_{2b+1}})$, by \cref{cl:3}.
		\end{itemize}
		This way we obtain an~accepting run of $\CA_n$ on $\langle\tree{w},\tree{y},\tree{x}'\rangle$.
		But no node below the new node of $\tree{x}'$, namely $\topn(I_{i_{2b+1}})$, belongs to $\tree{y}$, so the subformula $\psi(\tree{w},\tree{y},\tree{x}')$ does not hold.
		This means that $\CA_n$ does not recognise correctly the subformula $\psi(\tree{w},\tree{y},X)$.

	This concludes the inductive construction of $\tree{y}$ and thus shows that $\varphi_\mathsf{b}(\tree{z})$ fails when the intervals in $\tree{z}$ have unbounded lengths. Thus, the proof of \cref{lem:bound2safety} is finished.
\end{document}